\def\>{\rangle}
\def\<{\langle}
\def\({\left(}
\def\){\right)}
\def\[{\left[}
\def\]{\right]}
\newtheorem{observation}{Observation}
\newtheorem{theorem}{Theorem}
\newtheorem{fact}{Fact}
\newtheorem{corollary}{Corollary}
\newtheorem{definition}{Definition}
\newtheorem{lemma}{Lemma}
\newtheorem{proposition}{Proposition}
\newcommand{\nc}{\newcommand}
\nc{\rnc}{\renewcommand}
\nc{\be}{\begin{equation}}
\nc{\ee}{\end{equation}}
\nc{\ben}{\begin{eqnarray}}
\nc{\een}{\end{eqnarray}}
\nc{\lbar}[1]{\overline{#1}}
\def\karol{\begin{color}{brown}}
\def\endkarol{\end{color}}
\def\rysiek{\begin{color}{blue}}
 \def\endrysiek{\end{color}\xspace}
\begin{document}

\widetext
\title{Upper bounds on the leakage of private data and operational approach to markovianity}
\author{Karol Horodecki} \affiliation{Institute of Informatics, National Quantum Information Centre in Gda{\'n}sk, 
Faculty of Mathematics, Physics and Informatics, University of Gda{\'n}sk, 80-952 Gda{\'n}sk, Poland}
\affiliation{International Centre for Theory of Quantum Technologies,
University of Gda{\'n}sk, 80-952 Gda{\'n}sk, Poland}
\author{Micha\l{} Studzi\'nski}\affiliation{Institute of Theoretical Physics and Astrophysics, National Quantum Information Centre in Gda{\'n}sk, Faculty of Mathematics, Physics and Informatics, University of Gda{\'n}sk, 80-952 Gda{\'n}sk, Poland}
\author{Ryszard P. Kostecki}\affiliation{Institute of Informatics, National Quantum Information Centre in Gda{\'n}sk, 
Faculty of Mathematics, Physics and Informatics, University of Gda{\'n}sk, 80-952 Gda{\'n}sk, Poland}
\affiliation{International Centre for Theory of Quantum Technologies,
University of Gda{\'n}sk, 80-952 Gda{\'n}sk, Poland}
\author{Omer Sakarya}\affiliation{Institute of Informatics, National Quantum Information Centre in Gda{\'n}sk, 
Faculty of Mathematics, Physics and Informatics, University of Gda{\'n}sk, 80-952 Gda{\'n}sk, Poland}
\author{Dong Yang}\affiliation{Department of Informatics, University of Bergen, 5020 Bergen, Norway}
\affiliation{Laboratory for Quantum Information, China Jiliang University, 310018 Hangzhou, China }
\date{\today}

\begin{abstract}
We quantify the consequences of a private key leakage and private randomness generated during quantum key distribution. We provide simple lower bounds on the one-way distillable key after the leakage has been detected. We also show that the distributed private randomness does not drop by more than twice the number of qubits of the traced-out system. We further focus on  irreducible private states, showing that their two-way distillable key is  non-lockable.  We then strengthen this result  by referring to the idea of recovery maps. 
We  further consider the  action of special case of side-channels on some of the private states. 
Finally,  we connect the topic of (non)markovian dynamics with that of hacking. In particular, we show that an invertible map is non-CP-divisible if and only if there exists a state whose the key witnessed  by a particular privacy witness increases in time.
This complements the recent result of J. Ko\l{}ody\'nski et al. [Phys. Rev. A 101, 020303(R) (2020)] where the log-negativity was connected with the (non)markovianity of the dynamics.
\end{abstract}

\maketitle

\section{Introduction}
While the security of quantum key distribution is proven in theory, it usually lacks in practice. This is mainly because of (i) the imperfections in the production of the quantum key distribution (QKD) equipment and/or (ii) the active attacks of the eavesdropper known as Trojan Horse attacks (THAs)~\cite{Jain_2014,Sajeed2017}. The latter attacks, such as active inspection of the inner workings of the honest parties' device, can lead to a leakage of the secret key.
Recently there has been taken effort to study the performance of QKD, which takes into account particular examples of the leakages \cite{Wang2018Leaky_source,Wang2021MDI} in the case of quantum key distribution as well as the measurement-device independent quantum key distribution.

In this paper, we consider a more drastic version of THA, according to which eavesdropper gets access to the very {\it raw} key of the honest parties' device. We then note that most of the up-to-date QKD protocols are using in practice one-way communication. (We consider here both device dependent \cite{BB84} and independent \cite{Ekert,BHK} cases, see \cite{RotemPhd} and references therein). Their performance is further based on protocols originating from the Devetak--Winter protocol \cite{DevetakWinter-hash}. We, therefore, focus on the lower bounds on the drop of the {\it raw} key that can be obtained via the latter protocol. It is a practically relevant problem since the raw key should be destroyed properly after key generation. Indeed, the part of the raw key which does not form the final key can be a source of potential leakage and thus should be irreversibly destroyed. Hence, we study how the incorrectly destroyed raw key can influence the security of the key.

Our findings are related to the {\it lockability} of a resource: the problem of how much a given resource drops down under action on (e.g., erasure of) a subsystem of a bipartite quantum state.
There are two variants of the non-lockability of a resource.
According to one of it, the resource should go down by less than the $S(a)$ upon the erasure of system $a$, where $S$ is the von Neumann entropy. We will call it a { \it strong non-lockability}. A weaker version
states that there exists constant $c>0$, independent of the dimension of the state under consideration, such that the resource
does not go down by more than $cS(a)$ (or 
$c\log_2 |a|$). We will call it a
{\it non-lockability}. 

Violation of the strong non-lockability
was proven in \cite{Knig2007} for
the so called {\it accessible information}.
The lockability of entanglement measures
has been first considered in \cite{Horodecki2005}, where entanglement cost $E_C$ \cite{RMPK-quant-ent} was shown to be lockable, while the relative entropy $E_R$ of entanglement was shown to be non-lockable with $c=2$.  In \cite{ChristandlWinter_locking}
lockability of the squashed entanglement $E_{sq}$ was shown.

\paragraph{Motivation}
Before showing the main results, we discuss three possible ways  in which  the eavesdropper can arrange local leakage, which come as a motivation for further studies.

It is known that the eavesdropper can monitor power consumption or the electromagnetic radiation of a working device \cite{Wang2018Leaky_source}. One can also consider a drastic {\it hardware}-THA. Every device which performs quantum key distribution, no matter how shielded, has an incoming fiber. This implies a hole in the shielding. It is then enough to set up a sufficiently strong radioactive source with an open-close mechanism. The bits of generating key can be stored in local memory and further leaked by an open-close mechanism outside via the presence of radiation  ($1$) or lack of it ($0$) in a given slot of time. Monitoring the radioactivity implies directly the leakage of the key. Constant monitoring of radioactivity outside of the device could be a countermeasure to it.

Another attack can be considered in the case of device independent quantum key distribution. It was noticed in 
\cite{Barrett2013} that such a device can be used only once. If used twice, it can leak the key from its previous use by means, e.g., of the accept-abort mechanism. Hence, a device should be destroyed after a single use. This applies not only to the electronics or memory but also to the shielding. This is because shielding can contain a small memory that stores the data. Such an attack can be easily refuted by 
 destroying of the device in the enough irreversible way. 

The easiest way of attack is to set up software that copies the output of the device (a raw key) and distributes it to the eavesdropper. This can be noticed if the system hosting software is constantly monitored. However, noticing the attack does not always mean that it can be stopped, as exemplified by an important variant of this attack: a theft of data. The erasure of classical data happens when the {\it ransomware} (malicious software aimed at ransom) is used by the hackers. Ransomware encrypts the data, which are therefore practically lost unless the (former) owner pays a tribute.

The question is: how much of the security is still at hand after the leakage of the raw key has happened? The bounds obtained in the form of the order of leakage (denoted by constant $c$) considered in the introduction can
help in {\it estimation} of the loss
of data and lead to further shortening of the raw key to obtaining smaller yet still secure key. E.g., in the case of a cloud-storage device exposed for a certain period of time, $\tau$ seconds, to an uncontrolled connection with a certain speed, $v$ megabits per second, one can conclude that no more than $c v \tau$ of megabits were exposed to the attack. (To detect which of the data happened to be copied or erased, one can use the trapdoor mechanism \cite{CantTouchThis}.)

\paragraph{Main results}
\begin{figure}[h]
\centering
 \includegraphics[width=1.0\columnwidth,keepaspectratio,angle=0]{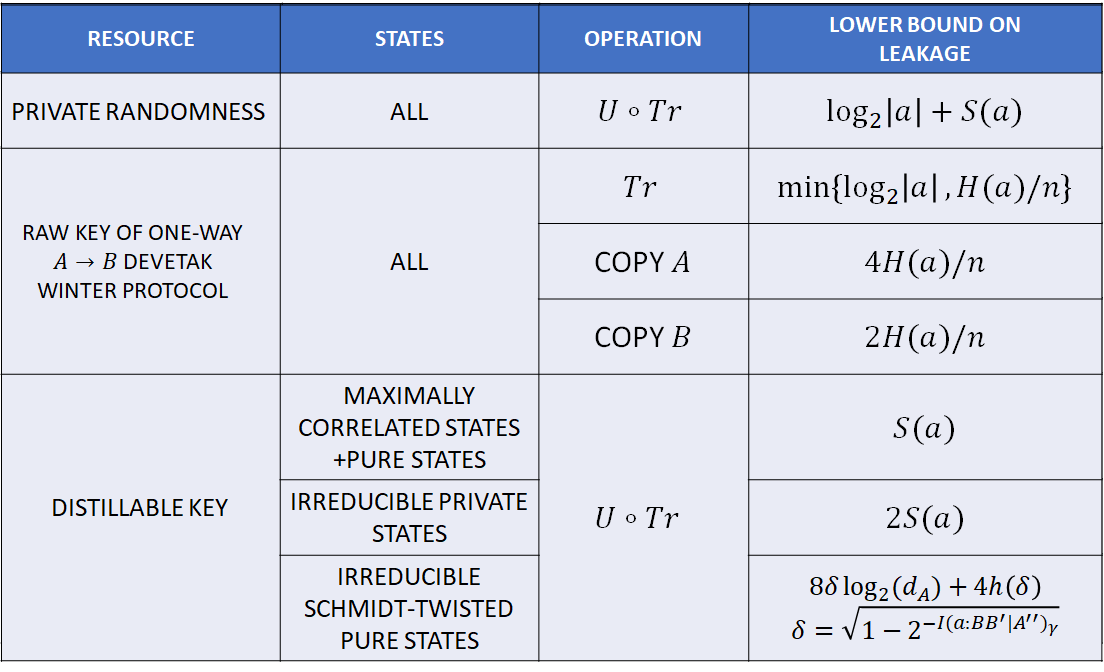}
\caption{Summary of the main results. For either private randomness or private key, and a given class of states, we provide lower bounds on the operation (unitary $U$ composed with partial trace, partial trace, and copying of a system, respectively) on system $a$ with the von-Neumann entropy $S(a)$ and dimension $|a|$. $I(a:BB'|A'')$ is the conditional mutual information.} 
\label{fig:tabelka}
 \end{figure}
 We first consider one of quantum cryptography's fundamental resources, which is the randomness private against a quantum adversary. It is used, e.g., by protocols of generation of the secure key when the honest parties choose settings of measurements (see, e.g., \cite{Bera2017} for review). We focus on a bipartite case introduced in \cite{YHW}. There, two mutually trusting honest parties are distilling private randomness for each of them separately from many copies of a bipartite state $\rho_{AB}$ in the form of an ideal state ${\mathbf{1}_A\over |A|}\otimes{\mathbf{1}_B\over |B|}\otimes \rho_E$, where $\rho_E$ is the purifying system of $\rho_{AB}$. The operations which they use in this resource theory are (i) local unitary operations and (ii) sending quantum states via dephasing channel to the other party (this choice assures that the operations are free, i.e., do not create private randomness). 
 
As the first main result, we show a lower bound
on the drop of private randomness distillable in this scenario. 
Namely, for a bipartite state, under action of (local) unitary transformation followed by partial trace of a subsystem $a$, private randomness does not drop down by more than $S(a)+\log_2|a|$, where $|a|$ is the dimension of $a$.
 In this scenario, one can also consider the rate of randomness obtained without operation (ii) and with or without borrowing local noise \cite{YHW}. Our bound holds in all these cases.

Before turning to the problem of (non)lockability of the key secure against a quantum adversary, let us recall basic facts
about the states containing ideal key, called private states \cite{pptkey,keyhuge}.
A private state has two subsystems: system $AB$ is called the {\it key part} while system $A'B'$ is called a shield \cite{keyhuge}. By definition, one can draw $\log_2 |AB|$ of the key via direct von-Neumann measurement on its key part.
To test how much key drops down for a given private state, we need to control
how much key it contains from the beginning. However, a private state can have
the potentially large key contained in its shielding system $A’B’$. To avoid this problem, we focus on the so-called irreducible private states that have $\log_2 |AB|$ of key - exactly as much as it is directly accessible via the von-Neumann measurement on their key part. 

As the first result related to the secure key, we show
that the key of private states is non-lockable. Precisely, it cannot drop down by erasing system $a$ on one side of it, by more than $2S(a)$.
In that, we partially address the open problem of whether the distillable key can be locked, as presented on the IQOQI list of open problems \cite{IQOQI}.

We then provide first simple bounds on the loss of the {\it raw} one-way distillable key secure against quantum adversary under erasure of data. By one-way distillable key, we mean the one obtained by utilizing one-way classical communication from Alice to Bob. By the raw key, we refer to the key generated via measurement on Alice's side on a quantum state shared by the honest parties.  The raw key then is the bit string that the honest parties share before applying error correction and privacy amplification \cite{RennerPhD}\footnote{We note here that in this article by (ideal) key, we mean the key for the one-time pad, i.e., uniformly random, perfectly correlated pair of bit-strings shared by two honest parties, known only to them. It can be represented by a state $\sum_{i=0}^{d-1} {1\over d}|ii\>\<ii|_{AB}\otimes \rho_E$, where $\rho_E$ represents the total knowledge of the quantum adversary}. 

As one of the main results, we show that the considered type of key is strongly non-lockable (see Theorem \ref{thm:raw_key}). More precisely, it does not drop down by more than $\alpha$
upon the erasure of a system $a$.  Similar results are obtained for the drop of the system at Bob's site: it does not drop down by more than $4\alpha $ upon the erasure of a system $b$ with its entropy scaling with the number of the raw key bits as $n\alpha$.

It is also natural to consider {\it copying} of the data by an adversary, which is a much easier attack than the one described above. In that case, we also observe the non-lockability of the one-way distillable key. It does not drop down by more than $2\alpha$.

Employing simple properties of the {\it smooth min and max entropies}~\cite{RenesRenner2012}, we also provide an alternative lower bound on the drop of the one-way key which reads, in the case considered above, $\log_2 |a|$. 

{\it Bounds on the leakage of two-way distillable key for generalized private states via the fidelity of recovery}. The bounds presented above do not consider
the fact that the system $a$ (or $b$ for Bob) can be almost uncorrelated with the rest of the state of the honest parties. In that case, the drop of the key should be less than the entropy of the copied or erased system. In particular,  when the system $a$ is a product with the rest of the system, the drop of the key should be equal to zero.
To address this case, we use the concept known as {\it fidelity of recovery} \cite{Fawzi2015}, $F_R$. For arbitrary tripartite state $\rho_{aAB}$ fidelity of recovery is the maximum quantum fidelity of $\rho_{aAB}$ with the state $\tilde{\rho}_{AB\tilde{a}}=\Gamma_{A\rightarrow A\tilde{a}}(\rho_{AB})$ recovered by a local quantum map $\Gamma$ acting on system $A$, after erasure of the system $a$. It has been shown \cite{Fawzi2015} that $F_R$ is lower bounded from below by a function $2^{-I(a:B|A)}$, where the conditional mutual information reads  $I(a:B|A):=S(aA)+S(BA)-S(A)-S(ABa)$. While the latter relation is often treated as (in fact, sub-optimal) lower bound on the quantum conditional mutual information, we focus here on the operational meaning of the fidelity of recovery. The conditional mutual information $I(a:B|A)$ quantifies,  to some extent, how much the system $a$ is correlated with the remaining systems. The lower it is, the tighter bound we obtain.

The above relationship allows us to show that the one-way distillable key achieved by i.i.d. operations can not drop down too much if $I(a:B|A)$ is low. By i.i.d., we mean that it is achieved by identical measurement operation and classical pre-processing on each copy of the input state, followed by error correction and privacy amplification \cite{DevetakWinter-hash}.  Although such a quantity may be much lower than the distillable key for a general state, it is equal to the distillable key for   certain generalization of private states called {\it irreducible Shmidt-twisted pure states}. Before stating the results, let us discuss this generalization. A private state can be seen as "twisted" singlet state $|\Psi_+\>$: $\gamma_{ABA'B'}=U|\Psi_+\>\<\Psi_+|\otimes\sigma_{A'B'}U^{\dagger}$, where $\sigma_{A'B'}$ is arbitrary state, and $U=\sum_{i}|ii\>\<ii|\otimes U_i$ is a control unitary transformation called twisting. We generalize this, by inserting a pure state $|\Phi\>$ in place of the singlet,
and allow the unitary $U$ to control the Schmidt basis of $|\Phi\>$ that is a basis in which it can be written as $|\Phi\> =\sum_{i} \sqrt{\lambda_i} |ii\>$. Such obtained state $\gamma'_{ABA'B'}$ we call the {\it irreducible Shmid-twisted pure state}, when $K_D(\gamma')=S(A)_\Phi$ that is the amount of key equals the entropy of the subsystem of the state $|\Phi\>\<\Phi|$.

The following result encapsulates
our findings: 
for any irreducible Schmidt-twisted pure states $\widetilde{\gamma}_{ABA'B'}$, after action $AA'\rightarrow A''a$ and partial trace of system $a$, there is
\begin{equation}
\label{eq:recov}
    K_D(\widetilde{\gamma}_{A''BB'}) 
    \geq K_D(\widetilde{\gamma}_{aA''BB'})- (8 \delta \log_2 d_{A} + 4h(\delta))
\end{equation}
with $\delta =\sqrt{1 - 2^{-I(a:B|A)}}$ (see Proposition \ref{cor:main}), where $h(x)=-x\log_2 x - (1-x)\log_2(1-x)$ is the binary Shannon entropy. Note, that the bound \eqref{eq:recov}  generalizes result for pure states that the key is not lockable (see Theorem \ref{thm:pure}). These and other results are presented in a unified way in Fig. \ref{fig:tabelka}.

{\it Attacks on private states.}
It has been recently proposed \cite{Sakarya2020} that certain private states can serve as a resource for the so called {\it hybrid quantum networks}(a variant of quantum network secure against unauthorized key generation).
Therefore, we also study special attacks on a particular class of private states. We consider several side channels, such as depolarising and amplitude-damping, acting on a shield of a private state. We focus on the private state that can be constructed from an operator $X$ being a (normalised) swap gate (see Eq. (\ref{eq:X_form}) in the Section \ref{sec:notation}). The main insight is that the key drops down by the same amount, no matter how large the system shielding the key is. Therefore, the larger the shield is, the more vulnerable to noise this particular private state becomes.

{\it (Non)markovianity meets hacking.}
We connect two topics, which are usually considered as quite far from each other: the leakage of the private key and the (non)markovianity of quantum dynamics. We consider states of the form $\rho_{ABA'B'}=p_+|\psi_+\rangle\langle\psi_+|_{AB}\otimes\rho_+^{A'B'}+p_-|\psi_-\rangle\langle\psi_-|_{AB}\otimes\rho_-^{A'B'}$, and let $X=\frac{1}{2}(p_+\rho_+^{A'B'}-p_-\rho_-^{A'B'})$. We argue that the distillable key of the so called {\it privacy squeezed state} of $\rho_{ABA'B'}$ exposed to hacking reads
\begin{align}
K_D([\Lambda(\rho_{ABA'B'})]_{psq}) = 1 - h\left({1\over 2}+||(\Lambda_{A'}\otimes {\mathbf{1}}_{B'})X||_1\right),
\label{eq:main}
\end{align}
where $\Lambda(\rho_{ABA'B'})=\Lambda_{A'}\otimes {\mathbf{1}}_{ABB'}(\rho_{ABA'B'})$, and $\Lambda_{A'}$ is a CPTP map acting on the system $A'$ of $\rho_{ABA'B'}$, which corresponds to action of hacking.
Moreover $[\cdot]_{psq}$ is the so called {\it privacy squeezing} \cite{keyhuge} (defined  in Eq. \eqref{rho.psq.definition}).
The privacy squeezing operation is considered here only as a mathematical tool 
rather than a physical map (although it can be physically realised). It allows
to place a lower bound on the distillable key of a given quantum state. Indeed, we have $K_D(\rho)\geq K_D([\rho]_{psq})$ \cite{keyhuge}.
The result presented in Eq. \eqref{eq:main} 
 allows us not only to study the power of leakage  of certain quantum  channels, but also to connect the behaviour of $||X||_1$ due to leakage under hacking with nonmarkovianity of quantum dynamics. (We identify \textit{markovianity} with \textit{CP-divisibility} \cite{Wolf:Cirac:2008,Rivas:Huelga:Plenio:2010}.) Using the results of \cite{Chruciski2011,Chruscinski:Rivas:Stoermer:2018}, and in analogy to \cite{Kolod2020}, we show that the nonmarkovianity of (invertible or image nonincreasing) dynamics, given by a family $\{\Lambda_t\mid t\geq0\}$ of CPTP maps acting on $A'$, is equivalent with
\begin{equation}
    \frac{d}{dt}K_D\left([\Lambda_t(\rho)]_{psq}\right)>0.
\end{equation}

\section{Facts and notations}
\label{sec:notation}
In this section, we invoke important facts and notation used throughout the paper. By $S(\rho_X)$ and $S(\rho_{XY})$, we will mean the von Neumann entropy of systems $X$ and $XY$, respectively. We will also write $S(X)$ and $S(XY)$ if the state is understood from the context. A bipartite state is called a maximally correlated state (MCS) if it is of the form
\be
\label{MCS}
\rho_{AB}=\sum_{i,j}c_{ij}\ketbra{ii}{jj}_{AB},
\ee
where $c_{ij}$ are arbitrary complex numbers.
 The classical-quantum (cq) state is any state of the form
\begin{equation}
    \rho_{cq}=\sum_i p_i |i\>\<i| \otimes \rho_i.
\end{equation}
It is strightforward to check that 
\begin{equation}
    S(\rho_{cq}) =  H(\{p_i\}) + \sum_i p_i S(\rho_i),
\end{equation}
where $H$ denotes Shannon entropy of a distribution $\{p_i\}$.

The private states \cite{pptkey,keyhuge} have the form
\begin{equation}
\gamma_{ABA'B'} = \sum_{i,j}{1\over d} |ii\>\<jj|_{AB}\otimes U_i\sigma U_j^{\dagger},
\label{eq:pstate}
\end{equation}
where $\sigma$ is an arbitrary state on $A'B'$ system.  The private state $\gamma$ is 
called {\it irreducible} if $K_D(\gamma)=\log_2 d$ where $d$ is the dimension of the system $AB$ called the {\it key part}.

The class of irreducible private states is not characterized due to the fact that there can possibly exist states that have zero distillable key but are entangled \cite{Horodecki2018}. Hence we also consider a well characterized, possibly strict subset of irreducible private states, called in \cite{FerraraChristandl} {\it strictly irreducible private states}. The operational meaning of this class is the following. Conditionally on measuring the key part of a strictly irreducible state in a standard basis, there always appears a separable state on their shielding system. Formally, the state \eqref{eq:pstate} is called {\it strictly irreducible} iff the conditional states $U_i\sigma U_i^{\dagger}$ in Eq. (\ref{eq:pstate}) are separable (i.e., they are mixtures of product states) for all $i$. This feature assures that $K_D(\gamma) = \log_2 d$ where
$d$ is the dimension of the key part \cite{karol-PhD}. 
In the case of a private bit, i.e., $d =2$, the private state can be represented by a single operator $X$ with trace norm $||X||_1 = \mathrm{Tr}\sqrt{XX^{\dagger}}$ equal to ${1\over 2}$:
\begin{equation}
\left[\begin{array}{cccc}
\sqrt{XX^{\dagger}} & 0 & 0 & X \\
0 & 0 & 0 & 0 \\
0 & 0 & 0 & 0 \\
X^{\dagger} & 0 & 0 & \sqrt{X^{\dagger}X} 
\end{array}\right].
\label{eq:X_form}
\end{equation}
In \cite{FerraraChristandl} it is shown how to use a one-way Local Operation and Classical Communication to transform any private bit represented by $X$ into
a one represented by {\it hermitian} $\tilde{X}$.
Hence, in our considerations, we can focus on hermitian $X$.

The action of leakage via the map acting on the shielding system
returns the following matrix: 
\begin{equation}
\left[\begin{array}{cccc}
\Lambda_{A'}\otimes {\mathrm I}_{ABB'}\sqrt{XX^{\dagger}} & 0 & 0 & \Lambda_{A'}\otimes {\mathrm I}_{ABB'}X \\
0 & 0 & 0 & 0 \\
0 & 0 & 0 & 0 \\
\Lambda_{A'}\otimes {\mathrm I}_{ABB'}X^{\dagger} & 0 & 0 & \Lambda_{A'}\otimes {\mathrm I}_{ABB'}\sqrt{X^{\dagger}X} 
\end{array}\right].
\end{equation}

To express the connection of the leakage of the key and (non)markovianity
we will need to broaden the class of the interest to states of the form:
\begin{equation}
\rho_{block}:= p_+ |\psi_+\>\<\psi_+|\otimes \rho_+ +p_- |\psi_-\>\<\psi_-|\otimes \rho_-
\label{rhoblock}
\end{equation}
(where $|\psi_\pm\>={1\over \sqrt{2}}(|00\>\pm |11\>$)),
which are private states when $\rho_+ \perp \rho_-$. Following \cite{FerraraChristandl}, we will call them
{\it block states}. An important operation on them is the one that outputs the privacy squeezed state $\rho_{psq}$, i.e.,  the two-qubit bipartite state of the
form
\begin{equation}
\rho_{psq}:=\left[\begin{array}{cccc}
\frac{p_++p_-}{2} & 0 & 0 & \frac{||p_+\rho_+ -p_-\rho_-||_1}{2} \\
0 & 0 & 0 & 0 \\
0 & 0 & 0 & 0 \\
\frac{||p_+\rho_+ -p_-\rho_-||_1}{2} & 0 & 0 & \frac{p_++p_-}{2}
\end{array}\right].
\label{rho.psq.definition}
\end{equation}
There is \cite{keyhuge}:
\begin{equation}
    K_D(\rho_{block}) \geq K_D(\rho_{psq}),
\label{keycomparison}
\end{equation}
where $K_D$ is a key distillable by LOCC operations, defined rigorously in subsection~\ref{def_K_D}.
Due \eqref{keycomparison}, the secure key content of the state $\rho_{psq}$ can be treated as a (non-linear) witness of privacy for the state $\rho$ \cite{Banaszek2012}.

For a given pure state $|\Phi\>_{AB}$ let us consider its Schmidt decomposition $|\Phi\>_{AB}=\sum_{i}\lambda_i|e_i\>\otimes |f_i\>$, where $\lambda_i \geq 0$, and  $\sum_i \lambda_i=1$. A twisting operation in the Schmidt basis of a state $|\Phi\>_{AB}$ is given by
\begin{equation}
\label{Schmidt_twist}
U=\sum_{i,j}|e_if_j\>\<e_if_j|_{AB}\otimes U^{(ij)}_{A'B'},
\end{equation}
where for each $(ij)$, $U^{(ij)}_{A'B'}$ is some unitary operation. This leads to a concept of the \textit{Schmidt-twisted pure state} $\widetilde{\gamma}_{ABA'B'}$, which is defined as
\begin{equation}
\label{Schmidt_pure}
\begin{split}
\widetilde{\gamma}_{ABA'B'}&:= U\left(|\Phi\>\<\Phi|_{AB}\otimes \sigma_{A'B'}\right)U^{\dagger}\\ &=\sum_{i,j}\lambda_i\lambda_j|e_if_i\>\<e_jf_j|\otimes U_i\sigma U_j^{\dagger},
\end{split}
\end{equation}
where $\sigma$ is defined on systems $A'$ and $B'$ (for clarity, we suppressed subsystem indices).  The Schmidt-twisted pure state $\widetilde{\gamma}_{ABA'B'}$ is called irreducible if
it satisfies $K_D(\widetilde{\gamma}_{ABA'B'}) = S(A)_{\Phi}$. This means that its whole security content is accessible by a direct von Neumann measurement on its key part system $AB$.

Finally, for self-consistence of this manuscript we define the Uhlmann fidelity~\cite{Uhlmann:1976,RJozsa} for two quantum states $\rho$ and $\sigma$:
\begin{equation}
F(\rho,\sigma):=\left(\operatorname{tr}\sqrt{\sqrt{\rho}\sigma\sqrt{\rho}}\right)^2 .
\end{equation}
This expression can be written in equivalent form $||\sqrt{\rho}\sqrt{\sigma}||_1^2$, where $||\cdot||_1$ denotes trace norm.

\subsection{Entanglement measures}
Here, we introduce entanglement measures that are employed in this manuscript - the relative entropy of entanglement, distillable entanglement, and squashed entanglement. 
\begin{definition}
	\label{Er}
	The relative entropy of entanglement for an arbitrary density operator $\rho $ is defined as
	\be
	E_R(\rho):=\mathop{\inf}\limits_{\omega \in \mathcal{SEP}} D(\rho|\omega),
	\ee
	where the infimum runs over the set of separable states $\mathcal{SEP}$, and $D(\cdot|\cdot)$ denotes relative entropy, i.e. $ D(\rho|\sigma):= \tr\rho \log_2 \rho - \tr \rho \log_2 \sigma$, for an arbitrary density operators $\rho,\sigma$.
\end{definition}

\begin{definition}
	For all bipartite states $\rho_{AB}$ we define one-way distillable entanglement
	\be
	E_D^{\rightarrow}:=\lim_{\epsilon \rightarrow 0}\lim_{n\rightarrow \infty}\sup_{\Lambda_{A\rightarrow B}}\left\lbrace E: \Lambda \left(\rho^{\otimes n} \right)\approx_{\epsilon}\Phi_{AB}(2^{nE})  \right\rbrace, 
	\ee
	where maps $\Lambda_{A\rightarrow B}$ are restricted to one-way $LOCC$, and $\Phi_{AB}(2^{nE})$ is maximally entangled state between $A$ and $B$ of Schmidt rank $2^{nE}$.
\end{definition}
In the above expressions we use the notation $\rho \approx_{\epsilon} \sigma$ for $\left| \left|\rho-\sigma \right| \right|_1\leq \epsilon$ to compress the definitions.

\begin{definition}
	\label{Isq}
	The squashed entanglement \cite{Squashed} for an arbitrary bipartite sate $\rho_{AB}$ is defined as
	\be
	E_{sq}\left(\rho_{AB} \right):= \mathop{\inf}\limits_{\rho_{ABE}}\left\lbrace \frac{1}{2}I(A;B|E) \ | \ \rho_{ABE} \ \text{extension of} \ \rho_{AB} \right\rbrace.
	\ee
	The infimum is taken over all extensions of $\rho_{AB}$, i.e. over all density operators $\rho_{ABE}$ with $\rho_{AB}=\tr_E\rho_{ABE}$. By $I(A;B|E):= S(AE)+S(BE)-S(ABE)-S(E)$ we denote the quantum conditional mutual information of $\rho_{ABE}$ \cite{CerfAdami}. $S(A):=S(\rho_A)$ is the von Neumann entropy of the underlying state.
\end{definition}

\subsection{Min- and max- entropies, and their smoothed versions}
\label{subMinMax}
We begin from defining the min- and max- entropies (see  see~\cite{RennerPhD,RenesRenner2012,Tomamichel_2016} for the details).
For a given bipartite state $\rho_{AB}$, they are given by
\begin{equation}
\label{minmaxE}
\begin{split}
&H_{\min}(A|B)_{\rho}:=\sup_{\sigma_B}\sup\left\{\lambda \in\mathbb{R}:\rho_{AB}\leq 2^{-\lambda}\mathbf{1}_A\otimes \sigma_B \right\},\\
&H_{\max}(A|B)_{\rho}:=\max_{\sigma_B}\operatorname{log}_2F(\rho_{AB},\mathbf{1}_A\otimes \sigma_B),
\end{split}
\end{equation}
where $F(\rho,\sigma)=||\sqrt{\rho}\sqrt{\sigma}||_1$ denotes fidelity between quantum states $\rho$ and $\sigma$.  
The $\epsilon$-smooth min- and max- entropies of $A$ conditioned on $B$ of the state $\rho_{AB}$ read, respectively,
\begin{equation}
\label{Hminmaxsmooth}
\begin{split}
&H^{\epsilon}_{\operatorname{min}}(A|B)_{\rho}:=\max_{\widetilde{\rho}_{AB}\in \mathcal{B}^{\epsilon}(\rho_{AB})}H_{\min}(A|B)_{\widetilde{\rho}},\\
&H^{\epsilon}_{\operatorname{max}}(A|B)_{\rho}:=\min_{\widetilde{\rho}_{AB}\in \mathcal{B}^{\epsilon}(\rho_{AB})}H_{\max}(A|B)_{\widetilde{\rho}},
\end{split}
\end{equation}
where $\mathcal{B}^{\epsilon}(\rho_{AB})$ is $\epsilon$-ball of states $\widetilde{\rho}_{AB}$ which are $\epsilon$-close to $\rho_{AB}$.
For the further considerations let us also remind here that the smooth entropies of the i.i.d. product state $\rho_{A^nB^n}=\rho_{AB}^{\otimes n}$ converge to conditional Shannon entropy $H_{\rho}(A|B)$ for $n\rightarrow \infty$. More precisely, 
\begin{equation}
\label{minmaxconv}
\begin{split}
&\lim_{n\rightarrow \infty}\left\{\frac{1}{n}H^{\epsilon}_{\operatorname{min}}(A^n|B^n)_{\rho}\right\}\geq H(A|B)_{\rho},\\
&\lim_{n\rightarrow \infty}\left\{\frac{1}{n}H^{\epsilon}_{\operatorname{max}}(A^n|B^n)_{\rho}\right\}\leq H(A|B)_{\rho}.
\end{split}
\end{equation}

\subsection{Key distillable by LOPC operations}
\label{sec:LOPC}
For further purposes, we remind here the idea of the LOPC, Local (quantum) Operations, and Public Classical Communication, with corresponding distillable key $C_D$ for tripartite quantum state $\rho=\rho_{ABE}$. In this scenario, three parties, Alice, Bob, and Eve, hold many systems in the same tripartite state $\rho$. Alice and Bob can process input states by quantum operations, each in their respective laboratory, and they communicate publicly classical messages, with copies also sent to eavesdropper Eve. For a more formal definition of LOPC operations, see Definition 4.2 in~\cite{karol-PhD,Ekert0}.  Historically,
its one-way version was defined first in \cite{DevetakWinter-hash}, in a way equivalent to the following one, where $\rho\equiv \rho_{ABE}$
\begin{equation}
 K_\rightarrow(\rho):=\inf_{\epsilon >0}\limsup_{n\rightarrow \infty}\sup_{\Delta \in LOPC_\rightarrow}\left\{\frac{\log_2 d}{n} \ | \  \Delta(\rho^{\otimes n })\approx_{\epsilon} \tau_d\right\},
\end{equation}
where $LOPC_\rightarrow$ denotes the LOPC operations, in which the classical communication goes from $A$ to $B$ only, while $\tau_d=(1/d)\sum_{i=1}^{d-1}|ii\>\<ii|\otimes \rho_E$ is a ccq-state with $\log_2 d$ secure bits, and $\approx_{\epsilon}$ denotes $\epsilon$-closeness in the trace norm $||\cdot||_1$.

The (two-way) distillable classical key between Alice and Bob from a quantum tripartite state $\rho\equiv \rho_{ABE}$ utilizing LOPC operations is given as~\cite{Ekert0}:
\begin{equation}
 C_D(\rho):=\inf_{\epsilon >0}\limsup_{n\rightarrow \infty}\sup_{\Delta \in LOPC}\left\{\frac{\log_2 d}{n} \ | \  \Delta(\rho^{\otimes n })\approx_{\epsilon} \tau_d\right\}.
\end{equation}

There is no closed formula known for $C_D$ for a general state.
However, when one restricts the one-way LOCC communication in the distillation process, then there is a formula for the distillable key, called a one-way distillable key, given by Devetak and Winter \cite{DevetakWinter-hash}. We invoke here the theorem
which encapsulates this rather complicated formula.

\begin{theorem} [\cite{DevetakWinter-hash}, in formulation of \cite{Nowakowski2016}]
\label{DW_Nowakowski}
For every state $\rho_{ABE}$, $K_{\rightarrow} =
\lim_{n\rightarrow \infty}
{K^{(1)}(\rho^{\otimes n})\over n}$, with $K^{(1)}= \max_{Q;T|X}(I(X :
B|T) -I(X : E|T))$, where the maximization is over
all POVMs $Q = (Q_x)_{x\in{\cal X}}$ and channels R such that
$T = R(X)$, while the information quantities refer to the
state $\omega_{TABE} =
\sum_{t,x} R(t|x)P(x) |t\>\<t| \otimes |x\>\<x|\otimes Tr_A(\rho_{ABE}(Q_x)\otimes {\mathbf{1}}_{BE})$: The range of the measurement Q
and the random variable $T$ may be assumed to be bounded
as follows: $|T| \leq d^2_A
$ and $|X| \leq d^2_A$
 where $T$ can be taken
as a (deterministic) function of $X$.
\end{theorem}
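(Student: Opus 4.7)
The plan is to establish the Devetak--Winter formula in two parts: an achievability bound showing that the regularized rate $\lim_n K^{(1)}(\rho^{\otimes n})/n$ is attainable by one-way LOPC, and a matching converse. The cardinality bounds $|X|,|T|\le d_A^2$ are treated separately at the end by a Carath\'eodory-type argument.

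For the achievability, I would fix an admissible pair $(Q,R)$ and build a one-way LOPC protocol attaining rate $I(X:B|T)-I(X:E|T)$. Alice measures each of her $n$ subsystems with the POVM $Q=(Q_x)$, obtaining a classical string $X^n$, then applies the preprocessing channel $R$ pointwise to produce $T^n$, which she broadcasts on the public channel so that Bob, Eve and the eavesdropper's record all contain $T^n$. Conditioned on $T$, the residual state $\omega_{XBE|T}$ is a classical-quantum wiretap state, and I would invoke the quantum Slepian--Wolf / HSW coding construction: Alice sends roughly $n(H(X|B,T)+\delta)$ bits of error-correction side information, after which Bob decodes $X^n$ with vanishing error. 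Privacy amplification by a two-universal hash, analyzed via the quantum leftover-hash lemma for smooth min-entropy against the joint side information held by Eve (including $T$ and the transcript), extracts roughly $n(H(X|E,T)-H(X|B,T))=n(I(X:B|T)-I(X:E|T))$ secret bits. Taking the supremum over $(Q,R)$ and regularizing over $n$ yields the lower bound.

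For the converse, I would take any one-way LOPC protocol whose output is $\epsilon$-close in trace distance to $\tau_d$ and show that its rate is upper bounded by $\lim_n K^{(1)}(\rho^{\otimes n})/n$. The idea is to package Alice's initial measurement into a POVM (yielding a variable $X$), the public transcript into a classical register, and then identify the preprocessing map $R$ with the post-processing of $X$ into $T$ that the honest protocol implements before sending error-correction and privacy-amplification messages. The key security and correctness conditions on $\Delta(\rho^{\otimes n})$ imply, by the continuity bounds on conditional entropy (AFW/Fannes) and the data processing inequality for the quantum conditional mutual information, that $\log_2 d / n \le K^{(1)}(\rho^{\otimes n})/n + f(\epsilon)$ with $f(\epsilon)\to 0$; taking $\limsup_n$ and then $\inf_{\epsilon>0}$ concludes. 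The cardinality bounds then follow by writing $K^{(1)}$ as a maximum over decompositions of $\rho_A$ into sub-normalized ensembles of rank-one elements (so $|X|\le d_A^2$ by Carath\'eodory on the convex set of Hermitian operators on $A$) and realizing $T$ as a deterministic function of $X$ via a standard concavity/convexity argument on $I(X:B|T)-I(X:E|T)$.

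The main obstacle will be the privacy amplification step in the direct part, where one must control Eve's information jointly with the public broadcast of $T^n$ and the error-correction transcript. This requires the quantum leftover-hash lemma in its conditional form and careful use of the smooth min/max-entropy formalism recalled in Section~\ref{subMinMax}, together with the asymptotic equipartition property \eqref{minmaxconv} to pass from one-shot bounds to the i.i.d. Shannon-type expression $I(X:B|T)-I(X:E|T)$. Keeping the error parameters of decoding, hashing, and smoothing simultaneously vanishing while the rate stays close to the optimum is the delicate book-keeping that the proof hinges on.
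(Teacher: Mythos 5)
This theorem is not proved in the paper at all: it is imported verbatim from Devetak--Winter in the formulation of Nowakowski et al., and the authors explicitly say they ``invoke'' it, so there is no in-paper argument to compare yours against. Judged on its own terms, your outline is the canonical proof strategy for the one-way secret-key capacity formula and I see no wrong turn in it: the direct part (measure with $Q$, broadcast $T=R(X)$, information reconciliation at rate $\approx H(X|BT)$, privacy amplification against Eve's side information including $T$ and the reconciliation transcript, yielding $H(X|ET)-H(X|BT)=I(X:B|T)-I(X:E|T)$), the converse by packaging an arbitrary one-way LOPC protocol into a $(Q,R)$ pair and applying continuity and data processing, and the Carath\'eodory argument for $|X|,|T|\le d_A^2$ are exactly the three pillars of the known proof. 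The one substantive difference from the original is that Devetak and Winter establish the direct part with typical-sequence covering and packing lemmas rather than the smooth-entropy leftover-hash route you propose; the latter is a legitimate modern substitute (and is consistent with the smooth min/max-entropy machinery the paper recalls in Section~\ref{subMinMax} and Eq.~\eqref{minmaxconv}), but it shifts the burden onto carefully tracking the smoothing, decoding, and hashing parameters, as you acknowledge. Be aware that what you have is a correct plan, not a proof: the privacy-amplification step against quantum side information and the converse's bookkeeping each occupy several pages in the literature, which is presumably why the paper cites the result rather than reproving it.
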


We have then, by definition that $K_{\rightarrow}(\rho_{ABE}) \leq C_D(\rho_{ABE})$, for any tripartite state $\rho_{ABE}$. This is by the fact that the class of protocols in definition of $K_{\rightarrow}$ is strictly less than in the case of $C_D$. In what follows, we will need a lower bound
on $K_{\rightarrow}$, which bases on restricting operations in its definition
to be identical on each copy of $\rho_{ABE}$. Namely we define the one-way i.i.d. version of a one-way secure key, $K^{iid}$, with $\{Q_x\}_{x\in \mathcal{X}}$ in the form ${\hat{Q}_x}^{\otimes n}$ and $T$ in the form $\hat{T}^{\times n}$. That is, the measurement on Alice's side is performed 
identically and independently on each copy
of the state, rather than globally, and further classical information comes from a variable $\hat{T}$ that is identical on each copy:
\begin{definition}
\label{key_iid}
For every state $\rho_{ABE}$, a one-way i.i.d. secure key reads 
\begin{equation}
\label{eq_key_iid}
K^{iid}(\rho_{ABE})= \lim_{n\rightarrow \infty} \frac{1}{n} \max_{\hat{Q};\hat{T}|X}K_{DW}([\hat{Q}_x(\rho_{ABE})]^{\otimes n}),
\end{equation}
where $K_{DW}(\rho_{XBE}):=I(X:B|\hat{T})_\rho - I(X:E|\hat{T})_\rho$, and the maximum in~\eqref{eq_key_iid} is taken over POVMs of the form $\{\hat{Q}_x^{\otimes n}\}_{x\in \mathcal{X}}$, and  channels $R$, such that $\hat{T}^{\times n}=R(X)$.
\end{definition}

We have introduced the $K^{iid}$, as it is easier to study its behavior than that of $K_\rightarrow$. While, as we show further, $K^{iid}$ is to some extent non-lockable, $K_\rightarrow$ still can be lockable. We have finally $K^{iid}(\rho_{ABE})\leq K_{\rightarrow}(\rho_{ABE}) \leq C_D(\rho_{ABE})$, for any tripartite quantum state $\rho_{ABE}$. 

\subsection{Key distillable by LOCC operations}
\label{def_K_D}
Distillable key $K_D$ between Alice and Bob from a quantum bipartite state $\rho$ by means of  two-way  LOCC operations is given as \cite{pptkey,keyhuge}:
\begin{equation}
 K_D(\rho):=\inf_{\epsilon >0}\limsup_{n\rightarrow \infty}\sup_{\Delta \in LOCC}\left\{\frac{\log_2 d}{n} \ | \  \Delta(\rho^{\otimes n })\approx_{\epsilon} \gamma_d\right\},
\end{equation}
where $\gamma_d$ is  a $d$-dimensional private state with $\log_2 d$ secure bits, and $\approx_{\epsilon}$ denotes $\epsilon$-closeness in the trace norm $||\cdot||_1$.

$K_D$ quantifies the amount of key secure against a quantum adversary who holds a purification of the state $\rho_{AB}$ can be obtained from asymptotically many copies of this state, in the form of a private state. 
Importantly, it can be shown~\cite{keyhuge,karol-PhD} that for a pure tripartite state $\psi_{ABE}$ with corresponding state $\rho_{AB}=\tr_E \psi_{ABE}$ one has
\begin{equation}
\label{eq:CK}
C_D(\psi_{ABE})=K_D(\rho_{AB}).
\end{equation}
 Therefore, in the worst case, that is when the adversary Eve holds a purifying system of $\rho_{AB}$, considering distillation of private states by LOCC operations or the ideal key states $\tau$ by LOPC operations yields the same rate. This allows us to interchange the use of $C_D$ and $K_D$ if needed.

\section{Bound on the leakage of private randomness}
In this section, we focus on distributed scenario of private randomness distillation~\cite{YHW}. In this scenario, two honest parties share $n$ copies of a bipartite state $\rho_{AB}$. They use local unitary operations and dephasing channel to produce independent randomness private against Eve, who holds the purifying system and the environment of the dephasing channel.  Depending on whether free or no local noise (in the form of a maximally mixed state) and free or no communication are allowed, we have four different settings for the distributed private randomness distillation. Theorem 2 in~\cite{YHW} shows the achievable rate regions (of private randomness distillable locally for each of the parties). For convenience and self-consistency of the paper, we restate it in the following. Here $R_G(\rho_{AB}) := \log_2|A|+\log_2|B|-S(\rho_{AB})$ stands for {\it global purity},
while $R_A$ is private randomness localizable by party $A$ in respective scenario (similarly for $B$). 
{\theorem \label{two-side} 
The achievable rate regions of $\rho_{AB}$ are:
\begin{enumerate}
\item for no communication and no noise, 
$  R_A       \leq \log_2|A| - S(A|B)_+ $, 
$  R_B       \leq \log_2|B| - S(B|A)_+ $, and
 $ R_A + R_B \leq R_G$,
where $[t]_+=\max\{0,t\}$;

\item for free noise but no communication, 
$R_A       \leq \log_2|A| - S(A|B)$, 
$R_B       \leq \log_2|B| - S(B|A)$, and 
$R_A + R_B \leq R_G$;

\item for free noise and free communication, 
$R_A\leq R_G$, $R_B\leq R_G$, and $R_A+R_B\leq R_G$;

\item for free communication but no noise, 
$R_A       \leq \log_2|AB| - \max\{S(B),S(AB)\}$, 
$R_B       \leq \log_2|AB| - \max\{S(A),S(AB)\}$, and
$R_A + R_B \leq R_G$. 
\end{enumerate}
Further the rate regions in settings 1), 2), 3) are tight.}

\medskip
We consider then a local leakage at Alice's side, by a side channel consisting of local unitary $U_{A\to A'a} $ transformation of a system $A$ into a system $A'a$, followed by partial trace operation on system $a$, which implies leakage of this system to Eve. Before we get the proposition, we need an auxiliary technical fact. 

\begin{fact}
\label{fact1}
For any two numbers $x$ and $y$, 
\ben
\max\{0,x\}-\max\{0,y\}\le |x-y|.
\een
\end{fact}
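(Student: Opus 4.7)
The plan is to prove this elementary inequality by a short case analysis on the signs of $x$ and $y$, which fully determines the two $\max$ terms. There are four cases: (i) both $x,y \geq 0$, where the left-hand side equals $x-y \leq |x-y|$; (ii) both $x,y < 0$, where the left-hand side equals $0$ which is trivially at most $|x-y|$; (iii) $x \geq 0$ and $y < 0$, where the left-hand side equals $x$ and since $-y > 0$ we have $x \leq x - y = |x-y|$; and (iv) $x < 0$ and $y \geq 0$, where the left-hand side equals $-y \leq 0 \leq |x-y|$. In each case the inequality follows at once.

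Alternatively, and more compactly, I would observe that the function $f(t) := \max\{0,t\} = \tfrac{1}{2}(t + |t|)$ is $1$-Lipschitz. Indeed, using the reverse triangle inequality,
\begin{equation}
f(x) - f(y) = \tfrac{1}{2}(x-y) + \tfrac{1}{2}(|x|-|y|) \leq \tfrac{1}{2}|x-y| + \tfrac{1}{2}|x-y| = |x-y|.
\end{equation}
This gives the claim directly, without splitting into cases. I would likely present the Lipschitz derivation as the cleaner argument and mention the case analysis only as a sanity check.

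There is no real obstacle here: the statement is a one-line Lipschitz property of the positive-part function, and the only subtlety is making sure every sign configuration is covered. Since the statement is used as an auxiliary fact in a subsequent proposition about leakage of private randomness, the key point is merely that it is available in the form $\max\{0,x\} - \max\{0,y\} \leq |x-y|$ (rather than the symmetric $|\,\max\{0,x\} - \max\{0,y\}\,| \leq |x-y|$, which would follow by swapping the roles of $x$ and $y$).
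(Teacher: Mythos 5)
Your proof is correct and matches the paper's approach, which likewise verifies the inequality by a direct case check (the paper splits only on the sign of $y$, which suffices, whereas you split on both signs). The additional Lipschitz derivation via $\max\{0,t\}=\tfrac12(t+|t|)$ is a clean one-line alternative, but for a statement this elementary it does not constitute a genuinely different route.
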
 
This can be checked directly by considering the two cases of $y\le 0$ and $y>0$. Now we are in position to formulate and prove the main result for this section.

{\proposition For a bipartite state $\rho_{AB}$ subjected to a side channel $\tr_a \circ U_{A\to A'a}$, there is
\ben
R_A(\rho_{AB})-R_A(\rho_{A'B}) \le \log_2 |a|+S(a),
\een
in the four settings presented in Theorem~\ref{two-side}.
}

\medskip
\begin{proof}
Setting 1) is reduced to setting 2) by noticing auxiliary Fact~\ref{fact1}. Then we have 
\ben
&&R_A(\rho_{AB})-R_A(\rho_{A'B})\\
&&=\log_2|A|-\max\{0,S(A|B)\}-\nonumber\\
&&\quad [\log_2|A'|-\max\{0,S(A'|B)\}]\\
&&=\log_2|a|+\max\{0,S(A'|B)\}-\max\{0,S(A|B)\}\\
&&\le\log_2|a|+|S(A'|B)-S(A|B)|\\
&&=\log_2|a|+|S(A'B)-S(A'aB)|\\
&&\le \log_2|a|+S(a),
\een
where the first inequality comes from the auxiliary Fact~\ref{fact1} and the last inequality from the subadditivity of entropy~\cite{RevModPhys.50.221}.

The proof for setting 3) is straightforward.
\ben
&&R_A(\rho_{AB})-R_A(\rho_{A'B})\\
&&=\log_2|AB|-S(AB)-[\log|A'B|-S(A'B)]\\
&&=\log_2|a|+[S(A'B)-S(A'aB)]\\
&&\le \log_2|a|+S(a).
\een

The proof for setting 4) can be reduced to setting 1) by noticing 
\ben
&&R_A(\rho_{AB})-R_A(\rho_{A'B})\\
&&=\log_2|AB|-\max\{S(B),S(AB)\}\nonumber\\
&&-[\log_2|A'B|-\max\{S(B),S(A'B)\}]\\
&&=\log_2|a|+\max\{0,S(A'|B)\}-\max\{0,S(A|B)\}\\
&&\le \log_2|a|+S(a).
\een
\end{proof}

\subsection{Distillable key of maximally correlated states is strongly non-lockable}
In the following theorem we show that distillable key of a MCS is strongly non-lockable. A pure bipartite state is a special MCS in its Schmidt basis.
\begin{theorem}
\label{thm:pure}
 For a maximally correlated state $\rho_{AB}$ defined through expression~\eqref{MCS}, after leakage of system $a$ from Alice to Eve the distillable key $K_D$ decreases by no more than S(a).
\end{theorem}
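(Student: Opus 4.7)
The plan is to sandwich $K_D$ between two matching coherent-information expressions: an exact equality for the MCS initial state $\rho_{AB}$, and a hashing lower bound for the post-leakage state $\rho_{A'B}$. The drop then reduces to a controlled change in $S(AB)$, which is in turn bounded by $S(a)$ via the Araki--Lieb inequality.

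First I would establish that for any MCS $\rho_{AB}=\sum_{i,j}c_{ij}\ketbra{ii}{jj}$ one has $K_D(\rho_{AB}) = S(B)_{\rho} - S(AB)_{\rho}$. The $\ge$ direction is the Devetak--Winter hashing bound (applied to the bipartite purification): Alice's von Neumann measurement in the $\{\ket{i}\}$ basis produces a string perfectly correlated with Bob's after a matching dephasing, and a direct computation of $I(X{:}B)-I(X{:}E)$ on the resulting ccq state yields $S(B)-S(AB)$. For the $\le$ direction I would use $K_D\le E_R$, with the ansatz of the completely dephased state $\sigma_{AB}=\sum_i c_{ii}\ketbra{ii}{ii}\in\mathcal{SEP}$ giving $D(\rho\|\sigma)=S(A)-S(AB)=S(B)-S(AB)$; optimality of this ansatz pins $E_R$, and hence bounds $K_D$ from above.

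Second, for $\rho_{A'B}$, which need not be an MCS after an arbitrary unitary on Alice's side, I would simply invoke the generic hashing lower bound $K_D(\rho_{A'B})\ge S(B)_{\rho_{A'B}}-S(A'B)_{\rho_{A'B}}$, which holds for any bipartite state.

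Third, combining these, the marginal $\rho_B$ is unchanged (the channel acts only on Alice's side), so $S(B)_{\rho_{A'B}}=S(B)_{\rho_{AB}}$, and $S(AB)=S(A'aB)$ by unitary invariance. Hence
\[
K_D(\rho_{AB})-K_D(\rho_{A'B})\le S(A'B)-S(A'aB)\le S(a),
\]
the last inequality being Araki--Lieb (adjoining the subsystem $a$ can change the entropy of the complement by at most $S(a)$).

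The main obstacle is rigorously justifying the MCS equality in the first step, specifically the upper bound via $E_R$: one needs to argue that the completely dephased state attains the infimum over separable states. This is a standard but not entirely trivial computation (it can be read off from the fact that $E_D=S(A)-S(AB)$ for MCS together with $E_R\ge E_D$, squeezing $E_R$ to the ansatz value). Once this is in place, the rest of the argument is the three-line entropy calculation above.
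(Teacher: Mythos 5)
Your proposal is correct and follows essentially the same route as the paper: both establish $K_D(\rho_{AB})=E_D=E_R=S(B)-S(AB)$ for the MCS, lower-bound $K_D(\rho_{A'B})\ge S(B)-S(A'B)$ via the Devetak--Winter/hashing direction, and bound the resulting difference $S(A'B)-S(A'aB)$ by $S(a)$ (the paper calls this step subadditivity, but the inequality used is the Araki--Lieb triangle inequality, exactly as you identify). The only difference is that you spell out the squeeze $E_D\le K_D\le E_R\le D(\rho\|\sigma_{\mathrm{deph}})$ that the paper takes as a known fact for maximally correlated states.
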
 

\begin{proof}
For a MCS $\rho_{AB}$, we have that $K_D(\rho_{AB})=E_D(\rho_{AB})=E_r(\rho_{AB})=S(B)-S(AB)$. Suppose an isometry $U: A\to A'a$, and after the leakage of subsystem $a$ to Eve, then the shared state between Alice and Bob is $\rho_{A'B}$. By Devetak-Winter protocol, we have $K_D(\rho_{A'B})\ge S(B)-S(A'B)$ (this is the other direction of DW protocol). Therefore the loss of the distillable key can be upper bounded as follows,
\ben
&K_D&(\rho_{AB})-K_D(\rho_{A'B})\\
&\le& S(B)-S(AB)-[S(B)-S(A'B)],\\
&=&S(A'B)-S(A'aB),\\
&\le&S(a),
\een
where we use $S(A'aB)=S(AB)$ since $U$ is an isometry, and sub-additivity of the von Neumann entropy. 
\end{proof}

\begin{corollary}
    The BB84 protocol \cite{BB84}, realized by means of the CSS codes, has a non-lockable rate.
\end{corollary}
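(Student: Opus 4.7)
The plan is to reduce the corollary to Theorem~\ref{thm:pure} via the Shor--Preskill equivalence between prepare-and-measure BB84 realized with CSS codes and an entanglement-based protocol built on maximally correlated states. First I would recall that in the Shor--Preskill picture of CSS-based BB84, the honest parties are effectively sharing $n$ copies of a quantum state on $AB$ such that, after the CSS phase-error correction step, the residual bipartite state belongs (up to negligible trace distance) to the span of $\ket{ii}_{AB}$ in the computational basis. Thus the asymptotic resource distilled in the CSS-BB84 protocol is an MCS of the form~\eqref{MCS} in the computational basis of $AB$.

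Next, I would observe that Theorem~\ref{thm:pure} applies verbatim to this resource: any leakage of a subsystem $a$ from Alice's register, modelled by an isometry $U:A\to A'a$ followed by $\tr_a$, decreases the distillable key by at most $S(a)$. The same conclusion holds for leakage on Bob's side by symmetry, since the MCS property is symmetric between $A$ and $B$ and the Devetak--Winter lower bound $S(A)-S(AB)$ is available in that direction as well.

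Finally, I would conclude the corollary by noting that this matches the definition of \emph{strong} non-lockability given in the introduction with $c=1$: the rate achievable by CSS-BB84 cannot drop by more than $S(a)$ under the leakage of subsystem $a$, and in particular by no more than $\log_2|a|$ when system $a$ is maximally mixed. The main subtlety to address cleanly will be justifying the claim that the state effectively shared in CSS-BB84 is maximally correlated in the relevant basis at the level of the asymptotic rate; once that reduction is invoked, the bound follows as a direct specialisation of Theorem~\ref{thm:pure}, with no further calculation required.
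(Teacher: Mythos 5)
Your proposal is correct and follows essentially the same route as the paper: both invoke the Shor--Preskill equivalence to reduce CSS-based BB84 to the coherent, entanglement-based picture (a maximally correlated / maximally entangled resource) and then apply Theorem~\ref{thm:pure} to bound the key drop by $S(a)$. The paper phrases this as a contrapositive (lockability of the prepare-and-measure version would imply lockability of the coherent version, which Theorem~\ref{thm:pure} forbids), but this is only a cosmetic difference from your direct application of the theorem to the effective MCS resource.
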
  

\begin{proof}
In \cite{shor-preskill} it is shown, that 
such a protocol, if applied coherently, is equivalent to distillation of maximally entangled states. Hence, if the prepare-measure version of BB84 was lockable, i.e. the key upon tracing out
some system $a$ would drop down by more than $S(a)$, so would be the drop of it
for the coherent version. The latter is 
however forbidden by the Theorem \ref{thm:pure}. 
\end{proof}
In the next Section we generalise Theorem~\ref{thm:pure} to Schmidt-twisted pure states $\widetilde{\gamma}_{ABA'B'}$ introduced in equations~\eqref{Schmidt_twist} and~\eqref{Schmidt_pure}.

\section{Lower bound for the drop of generated key under leakage of a system}
\label{sec:lb_for_generated_key}
In this section, we investigate how much the generated key drops after leakage of a system. We start from subsection~\ref{subA} where we prove  how much is the key rate drops for an irreducible private state when the system leaks from the shield part of Alice to Eve. Next, in subsection~\ref{subB} we generalize  the proof technique to all states and different types of leakage, such as erasure of a system or copying of a system. In turn, we prove the main result contained in Theorem~\ref{thm:raw_key}, saying that the raw key of a one-way Devetak-Winter protocol is non-lockable. In subsection~\ref{subC} by exploiting the concept of smooth min- and max- entropy, we show that the single-shot key rate is non-lockable. Finally, in subsection~\ref{subD} we derive a lower bound on the loss of the two-way distillable key  for the irreducible Schmidt-twisted pure states.

\subsection{Bound on the key drop by leakage from a irreducible private state}
\label{subA}
In this subsection, we provide a simple lower bound on the distillable key in the presence of leakage of subsystem $a$ from irreducible private states defined in Section~\ref{sec:notation} from the shield part, as well as from Alice's side in general.  In all cases, we show that the key drops by no more than $2S(a)$. We start our considerations from  the case of the  leakage from the shield part:
\begin{observation}
    For an irreducible private state $\gamma_{AA'BB'}$, with $A'=aA''$, there is
    \begin{equation}
    K_D(\gamma_{AA''BB'}) \geq K_D(\gamma_{AA'BB'}) -2 S(a).
    \end{equation}
    \label{obs:shield}
\end{observation}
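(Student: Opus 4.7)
The plan is to apply the Devetak--Winter lower bound to the post-leakage state $\gamma_{AA''BB'}$, regarding $aE$ as Eve's total register (with $E$ a purifying system of the original irreducible private state $\gamma_{ABA'B'}$). For any measurement Alice performs on her side with classical outcome $X$, one has $K_D(\gamma_{AA''BB'})\geq I(X:BB')-I(X:aE)$. I would pick the natural measurement: project the key part $A$ in the computational basis appearing in~\eqref{eq:pstate}. This choice renders both information quantities transparent.

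For the first term, the block structure of $\gamma$ makes the standard-basis outcomes on $A$ and on $B$ perfectly correlated, so $I(X:BB')\geq I(X:B)=H(X)=\log_2 d$, with equality forced by data processing. For the second term I would work with the explicit purification $|\gamma\rangle_{ABA'B'E}=\tfrac{1}{\sqrt{d}}\sum_i|ii\rangle_{AB}\otimes(U_i\otimes I_E)|\sigma\rangle_{A'B'E}$, where $|\sigma\rangle$ purifies the shield state $\sigma_{A'B'}$. Conditional on Alice's outcome $i$, Eve's state is $\rho^i_{aE}=\Tr_{A''B'}[(U_i\otimes I_E)|\sigma\rangle\langle\sigma|(U_i^{\dagger}\otimes I_E)]$. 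Crucially, tracing out $a$ as well makes the twisting unitary $U_i$ (which is supported on $A'=aA''$) disappear into the partial trace, leaving $\rho^i_E=\sigma_E$ independent of $i$; this is exactly the privacy property responsible for $K_D(\gamma_{ABA'B'})=\log_2 d$ in the first place.

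Because $\rho^i_E$ is independent of $i$, we have $I(X:E)=0$, and the chain rule collapses the remaining term to $I(X:aE)=I(X:a|E)$. I would then invoke the textbook bound $I(A:B|C)\leq 2S(B)$ (two applications of Araki--Lieb on a purification of $\rho_{XaE}$, giving $I(X:a|E)=(S(aE)-S(E))+(S(aR)-S(R))\leq 2S(a)$) to obtain $I(X:aE)\leq 2S(a)$. Assembling the pieces,
\begin{equation}
K_D(\gamma_{AA''BB'})\geq \log_2 d - 2S(a) = K_D(\gamma_{AA'BB'}) - 2S(a),
\end{equation}
where the last equality uses irreducibility.

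The only delicate bookkeeping is tracking how the twisting unitary acts relative to the splitting $A'=aA''$, and seeing that it is precisely the overlap of $U_i$ with the surviving subsystem $a$ that spoils the privacy once $a$ is kept by Eve. Beyond that, the estimate reduces to one chain-rule identity together with the standard entropy inequality $I(A:B|C)\leq 2S(B)$, so I do not expect a substantive obstacle.
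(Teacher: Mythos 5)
Your proof is correct and follows essentially the same route as the paper's: a Devetak--Winter lower bound on the post-leakage state, the chain rule $I(\cdot:aE)=I(\cdot:E)+I(\cdot:a|E)$, the privacy of the key-part measurement giving $I(\cdot:E)=0$, and the standard bound $I(\cdot:a|E)\leq 2S(a)$. The only (immaterial) difference is that the paper runs the protocol from Bob to Alice, measuring $B$, while you measure $A$.
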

\begin{proof}
The distillable key of an irreducible private state $\gamma_{AA'BB'}$ reads $\log_2 d_k$.
Let us then divide system $A'$ into ${\hat A}a$.  The Devetak--Winter protocol applied to the key part (from Bob to Alice) reads: 
\begin{equation}
I(A:B)_\rho - I(B:Ea)_\rho = \log_2 d_k - I(B:E) - I(B:a|E),
\end{equation}
where $I(B:a|E)= S(BE)+S(aE)-S(E)-S(BaE)$ is the conditional mutual information, which
follows from the chain rule. From $I(X:Y)\leq 2\min\{ S(X),S(Y)\}$ and
the chain rule, we conclude that $I(B:a|E)\leq 2 \min\{S(a),S(B),S(aE),S(BE)\}\leq 2S(a)\leq 2 \log_2 |a|$ \cite{Shirokov_2017}.
This, due to $I(B:E)=0$, as the state is the private state, proves our observation.
\end{proof}
Now, we will extend the statement of Observation~\ref{obs:shield} to the leakage  from the irreducible private state in a general way, not necessarily from its shield part. To do so, let us first prove the following technical lemma:

\begin{lemma}
For a cqq state $\rho_{XAaE}=\sum p_i\ketbra{i}{i}_X\otimes \rho^i_{AaE}$, after the leakage of system $a$ from Alice to Eve, the following holds 
\begin{equation}
[I(X:Aa)-I(X:E)]-[I(X:A)-I(X:aE)]\leq 2S(a).
\end{equation} 
\label{lem:Dong}
\end{lemma}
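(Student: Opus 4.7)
The plan is to collapse the left-hand side via the chain rule for mutual information, and then bound what remains using subadditivity together with weak monotonicity of the von Neumann entropy.

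First I would apply the chain rule $I(X:Aa) = I(X:A) + I(X:a|A)$ and $I(X:aE) = I(X:E) + I(X:a|E)$. Substituting these into the left-hand side cancels the $I(X:A)$ and $I(X:E)$ contributions and reduces the claim to $I(X:a|A) + I(X:a|E) \le 2 S(a)$.

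Next I would expand each conditional mutual information as $I(X:a|A) = S(a|A) - S(a|XA)$ and $I(X:a|E) = S(a|E) - S(a|XE)$, so that the sum reads $[S(a|A) + S(a|E)] - [S(a|XA) + S(a|XE)]$. By subadditivity of the von Neumann entropy applied to $\rho_{aA}$ and to $\rho_{aE}$, the first bracket is at most $2S(a)$.

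The substantive step is to show that the second bracket is non-negative. Here I would use that $X$ is classical: writing the cqq state as $\sum_i p_i\, \ketbra{i}{i}_X \otimes \rho^i_{AaE}$, one has $S(a|XA) = \sum_i p_i\, S(a|A)_{\rho^i}$ and $S(a|XE) = \sum_i p_i\, S(a|E)_{\rho^i}$. It then suffices to apply weak monotonicity of the von Neumann entropy (equivalent to strong subadditivity via purification) to each tripartite state $\rho^i_{AaE}$ separately, which gives $S(a|A)_{\rho^i} + S(a|E)_{\rho^i} \ge 0$ for every $i$. Averaging with weights $p_i$ delivers $S(a|XA) + S(a|XE) \ge 0$ and completes the argument.

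The main obstacle I anticipate is recognising weak monotonicity as the right tool rather than a naive dimension bound such as $I(X:a|\cdot) \le 2\log_2 |a|$; the latter, applied to each summand, would only yield $4 S(a)$. The classicality of $X$ is what allows weak monotonicity to be invoked pointwise on each conditional state $\rho^i_{AaE}$, and it is precisely this step that delivers the constant $2$ rather than $4$.
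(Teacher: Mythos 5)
Your proposal is correct and follows essentially the same route as the paper's own proof: the chain rule reduces the left-hand side to $I(X:a|A)+I(X:a|E)$, subadditivity bounds $S(a|A)+S(a|E)$ by $2S(a)$, and the classicality of $X$ lets one apply weak monotonicity (strong subadditivity via purification) to each conditional state $\rho^i_{AaE}$ to conclude $S(a|XA)+S(a|XE)\geq 0$. No gaps.
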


\begin{proof}
The proof goes by straightforward calculations and strong subadditivity. 
\begin{eqnarray}
\label{ddd}
&&[I(X:Aa)-I(X:E)]-[I(X:A)-I(X:aE)]\\
&&=I(X:a|A)+I(X:a|E)\\
&&=S(a|A)-S(a|AX)+S(a|E)-S(a|EX)\\
&&=S(a|A)+S(a|E)-\sum_ip_i[S(a|E)_i+S(a|A)_i]\\
&&\leq 2S(a),
\end{eqnarray}
where the inequality comes from the facts that $S(a|A)\le S(a)$, $S(a|E)\le S(a)$, and $S(a|E)_i+S(a|A)_i\geq 0$ for each index $i$ which follows from the strong subadditivity. Namely, considering purification of $\rho_{aAE}$ to $|\psi_{\rho}\>_{aAEE'}$ we can write $S(a|EE')+S(a|A)=0$, since $S(aEE')=S(A)$, and $S(EE')=S(aA)$. But using strong subadditivity we write $S(a|EE')\leq S(a|E)$, so $S(a|E)+S(a|A)\geq S(a|EE')+S(a|A)=0$. This argumentation holds for every index $i$ in expression~\eqref{ddd}.
\end{proof}

\begin{proposition}
For an irreducible private state $\gamma_{AA'BB'}$, with $AA'=a\tilde{A}$, after the leakage of system $a$ from Alice to Eve, there is
\begin{equation}
K_D(\gamma_{\tilde{A}BB'})\ge K_D(\gamma_{AA'BB'})-2S(a).
\end{equation}
 \label{obs:shield_gen}
\end{proposition}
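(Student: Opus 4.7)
The approach is to run the Devetak--Winter (DW) protocol from Bob to Alice on the leaked state, with Bob measuring his key part $B$ in the standard basis, and then use Lemma~\ref{lem:Dong} to absorb the extra information that flowed into Eve's system $a$, no matter how $a$ sits inside Alice's original system $AA'$.

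First, I would fix a purification $|\psi\>_{AA'BB'E}$ of $\gamma_{AA'BB'}$, so Eve holds $E$ before the leakage and $aE$ afterwards. Bob's standard-basis measurement on $B$ produces a classical outcome $X_B$ and leaves a cqq state $\omega_{X_B\tilde A a E}$, where Bob's unused shield $B'$ has been traced out (harmless, as it plays no role in the protocol). The private-state structure implies that $A$ is perfectly correlated with $X_B$ and that $E$ is decoupled from $X_B$, so $I(X_B:AA')_\omega=\log_2 d_k$ and $I(X_B:E)_\omega=0$; moreover $S(a)_\omega=S(a)_\gamma$ because Bob's measurement acts only on $B$.

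Next, I would apply Lemma~\ref{lem:Dong} to $\omega_{X_B\tilde A a E}$, identifying the lemma's $A$ with $\tilde A$, to obtain
\[
I(X_B:\tilde A)_\omega - I(X_B:aE)_\omega \;\geq\; I(X_B:\tilde A a)_\omega - I(X_B:E)_\omega - 2S(a) \;=\; \log_2 d_k - 2S(a).
\]
The left-hand side is a DW-achievable rate for distilling key from $\gamma_{\tilde A BB'}$ against an adversary holding $aE$, hence a lower bound on $K_D(\gamma_{\tilde A BB'})$; combined with $K_D(\gamma_{AA'BB'})=\log_2 d_k$ (by irreducibility) this gives the claim. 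The only subtlety I would flag is that the decomposition $AA'=a\tilde A$ need not respect the key/shield split $A|A'$, so a measurement on Alice's residual $\tilde A$ would be awkward; performing the measurement on Bob's side sidesteps this, and Lemma~\ref{lem:Dong} is formulated precisely to absorb an arbitrarily positioned $a$ on Alice's side.
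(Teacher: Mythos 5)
Your proposal is correct and follows essentially the same route as the paper: measure Bob's key part $B$ in the computational basis, use irreducibility and the privacy of the private state to get $I(X:\tilde{A}a)-I(X:E)=\log_2 d_k$ (the paper reaches this via $I(X:A)\le I(X:AA')=I(X:\tilde{A}a)$ by data processing), and then apply Lemma~\ref{lem:Dong} with the lemma's $A$ identified with $\tilde{A}$ to absorb the leaked system $a$ into Eve's side at a cost of $2S(a)$, interpreting the result as an achievable Devetak--Winter rate for $\gamma_{\tilde{A}BB'}$. The subtlety you flag (that $a\tilde{A}$ need not respect the key/shield split) is handled identically in the paper by performing the measurement on Bob's side.
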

\begin{proof}
Denote $\gamma_{AA'BB'E}$ as the purification of $\gamma_{AA'BB'}$ when Eve's system $E$ is included. 
Consider then this state measured on $B$ in computational
basis, producing a random variable $X$. 
Further notice that we have the following chain of (in)equalities:
\begin{eqnarray}
K_D(\gamma_{AA'BB'})&=&I(X:A)-I(X:E)\\
&\leq& I(X:AA')-I(X:E)\\
&=&I(X:\tilde{A}a)-I(X:E)\\
&\leq& I(X:\widetilde{A})-I(X:aE)+2S(a)\\
&\leq& K_D(\gamma_{\widetilde{A}BB'})+2S(a)
\end{eqnarray}
The first equality is due to the fact that $\gamma$ is irreducible, hence $K_D(\gamma_{ABA'B'})=\log_2 d_k = I(X:A) = I(X:A)- I(X:E)$, as $I(X:E)=0$ due to privacy from Eve of the system $B$ under measurement. The first inequality
is due to data processing inequality \cite{Nielsen-Chuang} implying $I(X:A)\leq I(X:AA')$. We next observe that
the unitary transformation does not change the mutual information, hence $I(X:AA')= I(X:{\tilde A}a)$.
Finally we note that 
\begin{align}
    I(X:{\tilde A}a) - I(X:E) - [I(X:{\tilde A}) - I(X:aE)] \leq 2S(a),
\end{align}
where the inequality is due to Lemma \ref{lem:Dong} by identifying $A$ with ${\tilde A}$. This finishes the proof.
\end{proof}
The upper bounds on the key in Observation~\ref{obs:shield} and Proposition~\ref{obs:shield_gen} are tight, which implies that $K_D$ is not strongly non-lockable in general. The example comes from a variant of the superdense coding protocol.

{\example Consider the private state $\gamma_{AA'B}$ where $B'$ is a trivially 1-dimensional subsystem and the purification of the state with Eve's system $E$ is of the form 
\begin{equation}
\frac{1}{\sqrt{4}}\sum_{i=0}^{3}\ket{ii}_{AB}\otimes (\sigma_{A'}^{i}\otimes I_E)\ket{\Phi}_{A'E},
\end{equation}
where $\sigma_{A'}^{i}$ are the Pauli unitary operators acting on the subsystem $A'$ and $\ket{\Phi}_{A'E}=\frac{1}{\sqrt{2}}(\ket{00}_{A'E}+\ket{11}_{A'E})$.
A simple observation is that $K_D(\gamma_{AA'B})=2$ and after the leakage of the shielding qubit $A'$ to Eve, $K_D=0$. 
}

The same holds if the leakage takes place on system $B'$,
unless it is given to Eve. Hence, given that the leakage happens only on the shielding system of an irreducible private state, the key drops down by at most twice the entropy of the system, and in some cases, it can equal to 2.

\subsection{The raw key of a one-way Devetak-Winter protocol is non-lockable}
\label{subB}
We now generalize the result from subsection~\ref{subA} to
all states that are the output of key-generation protocol. In practice, they differ from private states considered above.
This is because the process of key generation is usually
not coherent.
In that
we also narrow to one-way key distillation.
We will first need the following observation:
\begin{observation}
For a cq state $\rho_{x(XY)}$,
\begin{equation}
    I(x:Y|X) \leq H(x).
\end{equation}
    \label{obs:classical}
\end{observation}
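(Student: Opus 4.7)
The plan is to use the chain rule for mutual information together with non-negativity of mutual information, reducing the statement to the fact that a classical register can carry at most its own Shannon entropy of correlations with any quantum system.

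First I would write the cq state explicitly as $\rho_{xXY}=\sum_i p_i \ketbra{i}{i}_x\otimes \rho^i_{XY}$, so that $H(x)=H(\{p_i\})=S(\rho_x)$. Then I would apply the chain rule in the form
\begin{equation}
I(x:XY)=I(x:X)+I(x:Y|X),
\end{equation}
which rearranges to $I(x:Y|X)=I(x:XY)-I(x:X)$. Since $I(x:X)\ge 0$ by non-negativity of mutual information (equivalently, strong subadditivity), this already yields $I(x:Y|X)\le I(x:XY)$.

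The remaining step is to show $I(x:XY)\le H(x)$. Expanding $I(x:XY)=S(x)+S(XY)-S(xXY)$ and using the cq structure, I would invoke the identity $S(xXY)=H(\{p_i\})+\sum_i p_i S(\rho^i_{XY})$ together with concavity of the von Neumann entropy, $S(XY)=S(\sum_i p_i \rho^i_{XY})\le H(\{p_i\})+\sum_i p_i S(\rho^i_{XY})=S(xXY)$. Hence $S(XY)-S(xXY)\le 0$, which gives $I(x:XY)\le S(x)=H(x)$, completing the bound.

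There is no real obstacle here: both ingredients are standard. The only thing to be careful about is to keep track of which register is classical so that the concavity estimate $S(XY)\le S(xXY)$ applies in the correct direction; one could equivalently phrase that step as $S(x|XY)\ge 0$ for cq states, and then write $I(x:XY)=H(x)-S(x|XY)\le H(x)$ in one line.
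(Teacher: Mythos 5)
Your proof is correct and uses essentially the same two ingredients as the paper's argument, namely $S(x|X)\le S(x)$ and $S(x|XY)\ge 0$ for the cq state, merely repackaged through the chain rule $I(x:Y|X)=I(x:XY)-I(x:X)$ rather than through the conditional-entropy decomposition $I(x:Y|X)=S(x|X)-S(x|XY)$ that the paper uses. One small terminological point: the inequality $S\!\left(\sum_i p_i\rho^i_{XY}\right)\le H(\{p_i\})+\sum_i p_i S(\rho^i_{XY})$ that you invoke is not concavity (concavity gives the reverse-direction lower bound $\sum_i p_i S(\rho^i_{XY})\le S\!\left(\sum_i p_i\rho^i_{XY}\right)$) but the standard upper bound on the entropy of a mixture; your equivalent one-line phrasing via $S(x|XY)\ge 0$ for cq states is the cleaner way to justify that step.
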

\begin{proof}
It is convenient to rewrite $I(x:Y|X)$
as
\begin{equation}
    I(x:Y|X) = S(x|X) - S(x|YX).
\end{equation}
The state $\rho_{x(XY)}$ is separable in cut
$x:(XY)$, hence $S(x|YX) \geq 0$ \cite{RMPK-quant-ent}. We can thus neglect this term, obtaining an upper bound
\begin{equation}
    I(x:Y|X) \leq S(x|X).
\end{equation}
Since $\rho_{x(X)}$ is also a cq state, we can further expand $S(x|X)$ as
\begin{eqnarray}
    S(x|X)= H(x) + \sum_x p(x)  S(\rho_{X|x}) \nonumber\\- S\left(\sum_x p(x)\rho_{X|x}\right) \leq H(x),
\end{eqnarray}
where the last inequlity is due to
concavity of the von Neumann entropy.
\end{proof}
\begin{lemma}
For a state $\rho_{aABET}$, there is
\begin{eqnarray}
\label{relation}
    I(A:B|T) - I(A:Ea|T) \geq \nonumber\\ I(Aa:B|T)-I(aA:E|T) -c S(a),
\end{eqnarray}
with $c=2$.
Moreover, when state $\rho_{a(ABET)}$ is a
cq state, then the bound holds for $c=1$.
\label{lem:bound}
\end{lemma}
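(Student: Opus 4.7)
The plan is to collapse the difference of Devetak--Winter-type expressions into a single conditional mutual information $I(a:AB|T)$, and then bound that quantity either by the generic dimension bound (for $c=2$) or, in the cq case, by Observation~\ref{obs:classical} (for $c=1$).

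First I would rearrange the target inequality so that the quantity to be bounded above by $cS(a)$ is $[I(Aa:B|T) - I(Aa:E|T)] - [I(A:B|T) - I(A:Ea|T)]$. Two applications of the chain rule handle the two bracketed pieces separately. For the $B$-side, $I(Aa:B|T) - I(A:B|T) = I(a:B|AT)$. For the $E$-side, expanding $I(A:Ea|T) = I(A:a|T) + I(A:E|aT)$ and $I(Aa:E|T) = I(a:E|T) + I(A:E|aT)$ and subtracting cancels the common term $I(A:E|aT)$, leaving $I(A:Ea|T) - I(Aa:E|T) = I(A:a|T) - I(a:E|T)$. Adding the two identities and then using the chain rule a third time in the form $I(a:AB|T) = I(a:A|T) + I(a:B|AT)$, the whole expression collapses to $I(a:AB|T) - I(a:E|T)$. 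Since $I(a:E|T) \geq 0$, it suffices to bound $I(a:AB|T)$.

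For the general case, the standard inequality $I(X:Y|Z) \leq 2\min\{S(X|Z),S(Y|Z)\} \leq 2 S(X)$ immediately gives $I(a:AB|T) \leq 2 S(a)$, which yields $c=2$. For the refined case, the assumption is that $\rho_{a(ABET)}$ is classical-quantum across the cut $a:(ABET)$, so in particular the reduced state $\rho_{a(ABT)}$ is cq; Observation~\ref{obs:classical} applied with $a$ playing the role of the classical register, conditioning system $T$, and quantum register $AB$, then gives the sharper bound $I(a:AB|T) \leq H(a) = S(a)$, which yields $c=1$. I do not foresee a real obstacle: every step is one line of chain-rule algebra, and the only detail worth double-checking is that the cq hypothesis of the lemma matches the hypothesis of Observation~\ref{obs:classical} exactly, which it does since being cq on the full state $\rho_{a(ABET)}$ is preserved upon tracing out $E$.
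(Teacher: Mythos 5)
Your proposal is correct and follows essentially the same route as the paper: the chain-rule algebra collapses the difference to $I(a:AB|T)-I(a:E|T)=S(a|ET)-S(a|ABT)$, which the paper bounds term by term exactly as you do, using $S(a|ET)\le S(a)$, $-S(a|ABT)\le S(a)$, and $S(a|ABT)\ge 0$ in the cq case (the latter being precisely the content of Observation~\ref{obs:classical}). One small caution: the intermediate inequality $I(X:Y|Z)\le 2\min\{S(X|Z),S(Y|Z)\}$ as you quote it can fail when the conditional entropies are negative, but the bound you actually need, $I(a:AB|T)=S(a|T)-S(a|ABT)\le 2S(a)$, is correct.
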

\begin{proof}
The first part of the lemma is obtained by direct calculations. Namely, we have the following:
\begin{eqnarray}
&I&(Aa:B|T)-I(Aa:E|T)-\left[I(A:B|T)-I(A:Ea|T)\right]=\nonumber\\
&-&S(a|ABT)+S(a|ET)\leq 2S(a).
\end{eqnarray}
To show the second part of the statement, when we deal with a cq state, it is enough to notice that $S(a|ABT)\geq 0$.
\end{proof}
We have considered above a drop of a system on the side of a sender of one-way communication during key distillation via Devetak--Winter protocol \cite{DevetakWinter-hash}. We now show that
similar result holds for the party who, in their protocol, receives only the data.
\begin{corollary}
For a state $\rho_{ABbET}$, there is
\begin{equation}
\label{onBob}
\begin{split}
&I(A:B|T)-I(A:Eb|T)\geq \\
&I(A:Bb|T)-I(A:E|T)-cS(b),
\end{split}
\end{equation}
with $c=4$. Moreover, if the state $\rho_{b(ABET)}$ is a cq state, then the bound holds for $c=2$. 
\end{corollary}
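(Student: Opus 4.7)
The plan is to mirror the strategy of Lemma \ref{lem:bound}, using the chain rule to reduce the claim to a bound on two conditional mutual informations involving the leaked system $b$. The key structural difference from Lemma \ref{lem:bound} is that $b$ now sits on the receiver's side of the one-way protocol, so it appears in \emph{both} of the mutual informations that change upon leakage (as part of $Bb$ on the honest side and as part of $Eb$ on the eavesdropper's side), rather than appearing with opposing signs as $a$ did in Lemma \ref{lem:bound}.

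Concretely, I would first apply the chain rule, writing $I(A:Bb|T) = I(A:B|T) + I(A:b|BT)$ and $I(A:Eb|T) = I(A:E|T) + I(A:b|ET)$. Substituting these into \eqref{onBob} and cancelling the common terms, the inequality collapses to
\begin{equation}
    I(A:b|BT) + I(A:b|ET) \leq c\,S(b).
\end{equation}

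For the general (fully quantum) case I would bound each summand by $2S(b)$ using the elementary estimate $I(X:Y|Z) \leq 2S(Y)$, which follows from subadditivity ($S(Y|Z) \leq S(Y)$) together with the Araki--Lieb bound ($S(Y|XZ) \geq -S(Y)$). This yields $c=4$. For the cq case, in which $\rho_{b(ABET)}$ is classical on $b$, I would instead apply Observation \ref{obs:classical} twice: tracing out $E$ (respectively $B$) from $\rho_{b(ABET)}$ preserves classicality of $b$, so the observation bounds each of $I(A:b|BT)$ and $I(A:b|ET)$ by $H(b)=S(b)$, and the sum yields $c=2$.

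There is no real obstacle: the argument is purely algebraic and runs in parallel with the proof of Lemma \ref{lem:bound}. The only step worth double-checking is that the cq structure is preserved under the partial tracings used in the second part, which is automatic. The doubling of the constant $c$ relative to the sender-side lemma is structural, reflecting that both terms in the key-rate expression acquire a new copy of $b$ when $b$ leaks from the receiver.
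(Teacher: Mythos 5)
Your proposal is correct and follows essentially the same route as the paper: the paper likewise applies the chain rule to both $I(A:Bb|T)$ and $I(A:Eb|T)$ and bounds the resulting terms $I(b:A|BT)$ and $I(b:A|ET)$ by $2S(b)$ each (or by $S(b)$ via Observation~\ref{obs:classical} in the cq case). Your explicit justification of $I(X:Y|Z)\leq 2S(Y)$ via subadditivity and Araki--Lieb, and your remark that classicality of $b$ survives the partial traces, are correct and merely make explicit what the paper leaves implicit.
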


\begin{proof}
The proof  follows from the following chain of inequalities:
 \begin{eqnarray}
&I&(A:B|T)-I(A:Eb|T)\pm I(b:A|BT)=\\
&I&(A:Bb|T)-I(A:Eb|T)-I(b:A|BT)\geq\\
&I&(A:Bb|T)-I(A:Eb|T)-2S(b)=\\
&I&(A:Bb|T)-I(A:E|T)-I(b:A|ET)-2S(b)\\
&\geq&I(A:Bb|T)-I(A:E|T)-4S(b).
 \end{eqnarray}
 We first focus on the case $c=4$.
The first equality comes from the chain rule, while the first inequality from bound on $I(b:A|BT)$. Similarly, the second equality follows from the chain rule, and following inequality from bounding the term  $I(b:A|ET)$. 
Regarding the case $c=2$ we note
that when system $b$ is classical,
then both terms $I(b:A|ET)$ and $I(b:A|BT)$ are bounded by $S(b)$ by Observation \ref{obs:classical}, which proves the thesis.
\end{proof}

Owing to the fact that the raw key is 
classical, it is also realistic to 
assume that the leakage will be through copying rather than the theft of data. We therefore
consider this case below.
\begin{corollary}
For a state $\rho_{AaBET}$, there is
\begin{equation}
\label{copying}
\begin{split}
&I(Aa:B|T)-I(Aa:Ea|T)\geq\\
&I(Aa:B|T)-I(Aa:E|T)+cS(a),
\end{split}
\end{equation}
with $c=2$.
\end{corollary}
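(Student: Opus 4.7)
My plan would be to follow the same template as the two preceding results in this subsection (Lemma on Alice's side and the Corollary on Bob's side): rewrite the difference between the pre- and post-leakage one-way rates as a single conditional mutual information and then upper bound it by a multiple of $S(a)$.

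First, I would observe that the Bob-side term $I(Aa:B|T)$ is unaffected by copying a subsystem to Eve, so it cancels on both sides of the claimed inequality. The task therefore reduces to controlling $I(Aa:Ea|T)-I(Aa:E|T)$, where the two $a$'s appearing on Eve's side should be read as the extra correlated copy obtained by the hack (for a classical raw key this is literal copying; more generally one can purify the copy to an auxiliary register $a'$ identified with $a$). The chain rule for conditional mutual information gives the clean identity
\begin{equation}
    I(Aa:Ea|T)=I(Aa:E|T)+I(Aa:a|ET),
\end{equation}
so the leakage penalty is exactly $I(Aa:a|ET)$.

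Second, I would bound this residual term by the usual dimension/entropy estimate on conditional mutual information. Writing $I(Aa:a|ET)=S(a|ET)-S(a|AaET)$ and using $S(a|ET)\leq S(a)$ together with $-S(a|AaET)\leq S(a)$ (Araki--Lieb), one gets $I(Aa:a|ET)\leq 2S(a)$, exactly as in the proof of Observation~\ref{obs:shield} and Lemma~\ref{lem:bound}. Combining the two steps yields the stated inequality with $c=2$.

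The only subtle point, and the one I would flag as the main obstacle, is the interpretation of the "copy" on Eve's side: because perfect quantum copying is not a valid operation, one has to commit to a specific model (classical variable, or a CPTP extension that copies in a preferred basis) in order for the symbol $Ea$ to denote a legitimate state. Once that is fixed, the calculation is two lines and parallels the previous corollary. I would also note that in the cq case (which is the motivating setting for raw-key leakage) the sharper bound $I(Aa:a|ET)\leq S(a)$ from Observation~\ref{obs:classical} would improve the constant to $c=1$, exactly as happened in Lemma~\ref{lem:bound}.
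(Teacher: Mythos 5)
Your proposal is correct and follows essentially the same route as the paper: the chain rule $I(Aa:Ea|T)=I(Aa:E|T)+I(Aa:a|ET)$ followed by the standard bound $I(Aa:a|ET)\leq 2S(a)$ on the conditional mutual information. Note that, exactly like the paper's own proof, what you actually derive is the right-hand side $I(Aa:B|T)-I(Aa:E|T)-2S(a)$; the ``$+cS(a)$'' in the displayed statement is a sign typo (the correct sign is confirmed by the $-2S(x)/n$ term in Observation~\ref{obs:2}), and your side remarks on modelling the copy and on the cq-case improvement to $c=1$ are sensible additions rather than deviations.
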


\begin{proof}
To prove expression~\eqref{copying}, we write the following chain of inequalities:
\begin{eqnarray}
&I&(Aa:B|T)-I(Aa:Ea|T)\pm I(Aa:E|T)=\\
&I&(Aa:B|T)-I(Aa:E|T)-I(a:Aa|ET)\geq\\
&I&(Aa:B|T)-I(aA:E|T)-2S(a).
\end{eqnarray}
The first equality follows from the chain rule, and further we bound the term $I(a:Aa|ET)$.
\end{proof}

To conclude about the non-lockability of the 
raw key obtained in the one-way protocol
we  base on  the main result of Devetak and Winter in \cite{DevetakWinter-hash}, invoked in Sec. \ref{sec:LOPC}.

Let ${\cal P}$ be a part of the protocol of one-way key distillation after Alice have performed measurement $Q_x$, i.e.  after producing  a state of the form $Q_x(\rho_{ABE}^{\otimes n})=\omega_{TABE}^{(n)} =
\sum_{t,x} R(t|x)P(x) |t\>\<t| \otimes |x\>\<x|\otimes Tr_A(\rho_{ABE}(Q_x)\otimes {\mathbf{1}}_{BE})$.  It consists of an error correction and a privacy amplification operations applied to the state $\omega_{TXBE}$ \cite{RennerPhD} and ${\cal P}$ is the part of total protocol, which generates the key from the {\it raw key} at Alice's side. Let also the rate of ${\cal P}$ be denoted as $\kappa$. 
In the above theorem, the state of the raw key is represented by $\omega_{TXBE}^{(n)}$. We assume also
that system of Alice is represented by $A\equiv Xx$,
where $x$ will be given to Eve in the process of leakage.
We have then an immediate result:
\begin{theorem}
    The raw key of a one-way Devetak-Winter protocol is non-lockable: for any state $\omega_{T(Xx)BE}^{(n)}$ generated by measurement $Q_x$ on $n$ copies of $\rho_{AaBE}$, and for any random variable $T=R(X)$, there is 
    $\kappa({\cal P}(\omega_{T(Xx)BE}^{(n)}))\geq \kappa({\cal P}(\omega_{TXB(Ex)}^{(n)})) - H(x)/n$.
    \label{thm:raw_key}
\end{theorem}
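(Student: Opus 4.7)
The plan is to reduce the theorem to a direct application of the classical-quantum case of Lemma~\ref{lem:bound}, combined with the Devetak--Winter characterization from Theorem~\ref{DW_Nowakowski}. Concretely, for any state $\omega_{TXBE}$ produced at the input of $\mathcal{P}$, the rate of the privacy-amplification-and-error-correction portion of the protocol is
\begin{equation}
\kappa(\mathcal{P}(\omega_{TXBE}^{(n)})) = \frac{1}{n}\left[I(X:B|T) - I(X:E|T)\right],
\end{equation}
so I would begin by writing down the two quantities we need to compare: the pre-leakage rate
\begin{equation}
\kappa_{\text{pre}} = \frac{1}{n}\left[I(Xx:B|T) - I(Xx:E|T)\right],
\end{equation}
in which Alice's classical register is $Xx$, and the post-leakage rate
\begin{equation}
\kappa_{\text{post}} = \frac{1}{n}\left[I(X:B|T) - I(X:Ex|T)\right],
\end{equation}
in which $x$ has been copied to Eve. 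Note that the same deterministic channel $T=R(X)$ is used in both, since $R$ does not see $x$; this gives a valid (not necessarily optimal) distillation strategy after leakage.

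Next I would invoke that the raw key is produced by Alice's measurement $Q_x$, so $Xx$ is entirely classical; the joint state $\omega_{TXxBE}^{(n)}$ is in particular a cq-state in the leaking register $x$. Lemma~\ref{lem:bound} then applies in its sharper form with constant $c=1$: identifying $A \leftrightarrow X$ and $a \leftrightarrow x$, we get
\begin{equation}
I(X:B|T) - I(X:Ex|T) \;\geq\; I(Xx:B|T) - I(Xx:E|T) - S(x).
\end{equation}
Since $x$ is classical, $S(x)=H(x)$, so dividing by $n$ yields $\kappa_{\text{post}} \geq \kappa_{\text{pre}} - H(x)/n$, which is precisely the claim.

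The one conceptual point that must be handled carefully, rather than a technical obstacle, is justifying that the same $T=R(X)$ may be used in the pre- and post-leakage expressions: before leakage we use the hypothesis's $R$, while after leakage the same $R$ still defines a legitimate (sub-optimal) one-way distillation strategy because it depends only on $X$ and not on $x$, which only relocates across the bipartition. Everything else is a straightforward substitution into the cq-version of Lemma~\ref{lem:bound}; no additional inequalities beyond strong subadditivity (already used in that lemma) are needed.
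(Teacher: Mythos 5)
Your proof is correct and follows essentially the same route as the paper's: express both rates as Devetak--Winter differences of conditional mutual informations with the same $T=R(X)$, and apply the cq-case ($c=1$) of Lemma~\ref{lem:bound} under the identification $A\leftrightarrow X$, $a\leftrightarrow x$, using $S(x)=H(x)$. Note that what you (and the paper's own proof) actually establish is $\kappa({\cal P}(\omega_{TXB(Ex)}^{(n)}))\geq \kappa({\cal P}(\omega_{T(Xx)BE}^{(n)}))-H(x)/n$, i.e.\ a lower bound on the post-leakage rate; the inequality as displayed in the theorem statement has the two sides interchanged, which appears to be a typo in the statement rather than a flaw in your argument.
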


\begin{proof}
Let us denote the states where the raw key is presented, in both cases, when the system $x$ is with Alice and Eve by $\omega^{(n)}_{TXxBE}$ and $\omega^{(n)}_{TXB(Ex)}$, respectively. Denoting by ${\cal P}$ the one-way key distillation protocol applied to both states, we evaluate its rates $\kappa$ as:
\begin{align}
&\kappa\left({\cal P}(\omega^{(n)}_{TXxBE})\right)=\frac{1}{n}\left[I(Xx:B|T)-I(Xx:E|T)\right],\\
&\kappa\left({\cal P}(\omega^{(n)}_{TXB(Ex)})\right)=\frac{1}{n}\left[I(X:B|T)-I(X:Ex|T)\right].
\end{align}
Applying the statement 
from Lemma~\ref{lem:bound}, and using the fact that $x$ is classically correlated with the rest of the systems, we can write
\begin{align}
\kappa\left({\cal P}(\omega^{(n)}_{TXB(Ex)})\right)\geq \kappa\left( {\cal P}(\omega^{(n)}_{TXxBE})\right)-\frac{H(x)}{n}.
\end{align}
Hence, whenever entropy $H(x)$ scales linearly with number of copies $n$, i.e. when $H(x)=\alpha n$, where $\alpha$ is a constant, the raw key drops by constant factor. However, when the dependence is sublinear in $n$, the resulting raw key does suffer from the leakage.
\end{proof}
The same statement as in Theorem~\ref{thm:raw_key} can be made in the case of system leakage $b$ from Bob to Eve, or of copying the system $a$ from Alice to Eve. Denoting by $(\omega^{(n)}_{TX(Bb)E},\omega^{(n)}_{TXB(Eb)})$ and $(\omega^{(n)}_{TXxBE},\omega^{(n)}_{TXxB(Ex)})$ the pairs of states containing the raw key in the case of leakage of Bob's system and of copying, respectively, we formulate the following:
\begin{observation}
The raw key of a one-way Devetak--Winter protocol is non-lockable in the case of system leakage from Bob to Eve and of copying a system from Alice to Eve. In particular, the raw key rates before and after the process of leakage (copying) satisfy, respectively:
\begin{align}
&\kappa({\cal P}(\omega_{TX(Bb)E}^{(n)}))\geq \kappa({\cal P}(\omega_{TXB(Eb)}^{(n)})) - 4S(b)/n,\\
&\kappa({\cal P}(\omega_{T(Xx)BE}^{(n)}))\geq \kappa({\cal P}(\omega_{TXxB(Ex)}^{(n)})) - 2S(x)/n.
\end{align}
Whenever entropies $S(x)$ and $S(b)$ scale linearly or sublinearly with $n$ the raw key drops down by a constant factor or does not change in the limit of large $n$.
\label{obs:2}
\end{observation}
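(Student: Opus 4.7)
The plan is to mirror the proof of Theorem~\ref{thm:raw_key} and exploit the two results already in the authors' toolbox, namely the inequality~\eqref{onBob} (for Bob-side leakage, with $c=4$) and the inequality~\eqref{copying} (for copying on Alice's side, with $c=2$). Since the rate $\kappa$ of the post-measurement part $\mathcal{P}$ of a Devetak--Winter protocol applied to an $n$-fold classical-quantum state is, by Theorem~\ref{DW_Nowakowski}, equal to $(1/n)[I(X:B|T)-I(X:E|T)]$ for the relevant partition into Alice/Bob/Eve subsystems, my strategy will be to write both sides of each inequality as such rates and then quote the bound.

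First, for the Bob-side leakage, I would set
\begin{align*}
\kappa\!\left(\mathcal{P}(\omega^{(n)}_{TX(Bb)E})\right) &= \tfrac{1}{n}\!\left[I(X:Bb|T)-I(X:E|T)\right], \\
\kappa\!\left(\mathcal{P}(\omega^{(n)}_{TXB(Eb)})\right) &= \tfrac{1}{n}\!\left[I(X:B|T)-I(X:Eb|T)\right],
\end{align*}
and then simply invoke~\eqref{onBob} to obtain the first inequality of the statement. Since $b$ is a genuinely quantum subsystem of Bob, the bound comes with the constant $c=4$, which matches the $4S(b)/n$ penalty in the Observation.

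Second, for the copying attack, I would write
\begin{align*}
\kappa\!\left(\mathcal{P}(\omega^{(n)}_{T(Xx)BE})\right) &= \tfrac{1}{n}\!\left[I(Xx:B|T)-I(Xx:E|T)\right], \\
\kappa\!\left(\mathcal{P}(\omega^{(n)}_{TXxB(Ex)})\right) &= \tfrac{1}{n}\!\left[I(Xx:B|T)-I(Xx:Ex|T)\right],
\end{align*}
and apply~\eqref{copying} to obtain the second inequality with constant $2$. Note that in the copying case the raw-key register $x$ appears on both sides of Alice's partition, which is exactly why the correction in~\eqref{copying} is twice the entropy of $x$ rather than four times.

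Finally, to conclude the qualitative part of the statement, I would observe that if $S(b)=\beta n$ or $S(x)=\alpha n$ the subtracted terms $4\beta$ and $2\alpha$ are $O(1)$ constants independent of $n$ (non-lockability), while if $S(b)$ or $S(x)$ grow sublinearly with $n$ the corrections vanish in the asymptotic limit, so the rate is actually preserved. I do not anticipate any genuine obstacle here: the whole argument is a bookkeeping exercise once~\eqref{onBob} and~\eqref{copying} are available. The only thing to be careful about is the identification of the Alice/Eve partition that makes each corollary directly applicable, in particular ensuring that in the copying case the duplicated register $x$ is treated as an enlargement of Alice's classical system rather than as a new quantum one, so that the tighter $c=2$ constant (instead of $c=4$) is justified.
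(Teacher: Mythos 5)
Your proposal is correct and follows exactly the route the paper intends: the paper gives no separate proof of this observation, merely noting that the argument of Theorem~\ref{thm:raw_key} carries over once Lemma~\ref{lem:bound} is replaced by the corollaries~\eqref{onBob} and~\eqref{copying}, which is precisely what you do. Your identification of the rates $\kappa$ with the corresponding conditional-mutual-information differences and of the Alice/Bob/Eve partitions matches the paper's, and you correctly obtain the physically meaningful direction (post-leakage rate bounded below by pre-leakage rate minus $4S(b)/n$, resp. $2S(x)/n$), which is what the cited corollaries actually establish.
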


\subsection{Single-shot key rate approach after leakage system to Eve}
\label{subC}
By the result of~\cite{RenesRenner2012}, one can deduce how much smooth min entropy $H^{\epsilon}_{\operatorname{min}}$ drops after the leakage of system $x$ to Eve (see subsection~\ref{subMinMax} for definitions).
\begin{lemma}[adaptation of Lemma 5 from~\cite{RenesRenner2012}]
The smooth min entropy $H^{\epsilon}_{\operatorname{min}}$ is non-lockable. It means that after leakage of a system $x$ to Eve, the following inequality holds:
\begin{equation}
\label{Hmin_bound}
   H^{\epsilon}_{\operatorname{min}}(Xx|E)\leq H^{\epsilon}_{\operatorname{min}}(X|Ex)+\operatorname{log}_2|x|,
\end{equation} 
where $|x|$ denotes dimension of the system $x$. 
\end{lemma}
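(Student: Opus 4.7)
The plan is to reduce (\ref{Hmin_bound}) to a simple dimension-shifting identity for the unsmoothed conditional min-entropy, and then to lift it to the smooth version by the standard $\epsilon$-ball argument. No new machinery beyond the definitions recalled in Section~\ref{subMinMax} should be needed.

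First I would unpack the definition of $H_{\min}(Xx|E)_\rho$ from~\eqref{minmaxE}. Take any $\lambda$ and any state $\sigma_E$ such that $\rho_{XxE}\leq 2^{-\lambda}\,\mathbf{1}_{Xx}\otimes\sigma_E$. The crucial rewriting is
\begin{equation}
\mathbf{1}_{Xx}\otimes\sigma_E \;=\; |x|\,\mathbf{1}_X\otimes\tau_{xE},\qquad \tau_{xE}:=\tfrac{\mathbf{1}_x}{|x|}\otimes\sigma_E,
\end{equation}
where $\tau_{xE}$ is a bona fide density operator on $xE$. This turns the majorisation inequality into $\rho_{XxE}\leq 2^{-(\lambda-\log_2|x|)}\,\mathbf{1}_X\otimes\tau_{xE}$. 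Taking the supremum over admissible $\lambda$ and $\sigma_E$ and observing that the sup over $\sigma_E$ is contained in the sup over states on $xE$ yields the unsmoothed bound $H_{\min}(X|Ex)_\rho\geq H_{\min}(Xx|E)_\rho-\log_2|x|$, equivalent to~\eqref{Hmin_bound} for $\epsilon=0$.

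Second, to pass to the smooth version I would pick $\tilde\rho_{XxE}\in\mathcal{B}^{\epsilon}(\rho_{XxE})$ attaining the maximum in the definition~\eqref{Hminmaxsmooth} of $H^{\epsilon}_{\min}(Xx|E)_\rho$. Applying the unsmoothed bound just established to the state $\tilde\rho$ gives $H_{\min}(X|Ex)_{\tilde\rho}\geq H^{\epsilon}_{\min}(Xx|E)_\rho-\log_2|x|$. Since the $\epsilon$-ball in~\eqref{Hminmaxsmooth} is taken around the same underlying state $\rho_{XxE}$ in both cases — only the grouping of subsystems differs — the same $\tilde\rho$ is an admissible candidate for the maximisation defining $H^{\epsilon}_{\min}(X|Ex)_\rho$, so $H^{\epsilon}_{\min}(X|Ex)_\rho\geq H_{\min}(X|Ex)_{\tilde\rho}$. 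Chaining the two inequalities gives exactly~\eqref{Hmin_bound}.

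There is no real obstacle here: the entire content is the identity $\mathbf{1}_{Xx}=|x|\,\mathbf{1}_X\otimes(\mathbf{1}_x/|x|)$ together with the monotonicity of $H^{\epsilon}_{\min}$ in the choice of smoothed state. The only subtlety worth double-checking is that the $\epsilon$-ball is defined with respect to a fixed state on the whole system (independently of the bipartition), which is the convention used in~\eqref{Hminmaxsmooth}; this ensures that the optimiser $\tilde\rho$ for the left-hand side is automatically admissible on the right.
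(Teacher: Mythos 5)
Your proof is correct. The paper itself does not prove this lemma---it imports it as an adaptation of Lemma 5 of Renes and Renner---so your argument supplies a self-contained derivation directly from the definitions recalled in Section~\ref{subMinMax}, which is a genuinely more elementary route than the citation. Both steps are sound: the identity $\mathbf{1}_{Xx}\otimes\sigma_E=|x|\,\mathbf{1}_X\otimes\bigl(\tfrac{\mathbf{1}_x}{|x|}\otimes\sigma_E\bigr)$ converts any feasible pair $(\lambda,\sigma_E)$ for $H_{\min}(Xx|E)$ into the feasible pair $(\lambda-\log_2|x|,\tau_{xE})$ for $H_{\min}(X|xE)$, giving the unsmoothed inequality; and since the $\epsilon$-ball in~\eqref{Hminmaxsmooth} is taken around the global state $\rho_{XxE}$ independently of how the subsystems are grouped, the optimizer of the left-hand side is automatically admissible on the right, which lifts the bound to the smooth entropies. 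Two minor points are worth making explicit: (i) $\tau_{xE}$ is normalized whenever $\sigma_E$ is, so it is a legitimate candidate in the supremum over states on $xE$; and (ii) the maximum in the definition of $H^{\epsilon}_{\min}$ is attained because the $\epsilon$-ball is compact in finite dimension (otherwise one runs the same argument with a sequence of near-optimizers). Neither affects the conclusion.
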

Using the above result, one can show that the single-shot key rate is non-lockable. Namely, before and after leakage of a system $x$ to Eve, the key rates are respectively:
\begin{equation}
\label{keys}
\begin{split}
&K^{(1)}(\rho_{(Xx)BE})=H_{\min}^{\epsilon}(xX|E)-H_{\max}^{\epsilon}(xX|B),\\
&\widetilde{K}^{(1)}(\rho_{XB(Ex)})=H_{\min}^{\epsilon}(X|Ex)-H_{\max}^{\epsilon}(X|B).
\end{split}
\end{equation}
Applying data processing theorem~\cite{RenesRenner2012} to the expression of~\eqref{keys} we have that $H_{\max}^{\epsilon}(xX|B)\leq H_{\max}^{\epsilon}(X|B)$. Thanks to this, we conclude that the key drops by no more than $\operatorname{log}_2|x|$.

Finally, by observing that, in the limit $n\rightarrow \infty$, the min- and max- entropies converge to the conditioned Shannon entropy \eqref{minmaxconv}, we can conclude that  the right hand side of~\eqref{Hmin_bound} gives $n\operatorname{log}_2|x|$. Whenever system $x$ is of $n$ qubits, and $S(x)>{1\over 4}n$ holds, this bound is smaller than the bound $4S(x)$ discussed in the Observation \ref{obs:2}.

\subsection{Lower bound on the loss of the distillable key for the irreducible Schmidt-twisted pure states}
\label{subD}
The bounds shown in the 
previous sections are independent of the correlations of the erased system $a$ with the rest of the system. However, it is 
intuitive that the less $a$ is correlated
the smallest should be drop of the key upon loss of $a$. This motivates us to search for
a bound which is dependent on these correlations. 
 
To show that the key sometimes does not leak too fast, we propose a particular strategy to be taken after erasure of subsystem of the state.
It is based on the so called {\it fidelity of recovery} \cite{Fawzi2015,Seshadreesan2015}.

As we will see, this approach will
lead us to a bound on a two-way distillable key for private states.
Namely, after the loss of a subsystem $a$ 
of a system $Aa$, Alice is applying the best map $\Gamma_{A\rightarrow A\widetilde{a}}$ that recovers $a$ in some form $\tilde{a}$.
She then applies the same one-way protocol on system $A{\tilde a}$. Denoting by $F(\rho_{AaBE},\widetilde{\rho}_{A\widetilde{a}BE})$ the Uhlmann fidelity between quantum states~\cite{Uhlmann}, the fidelity of recovery
reads
\begin{equation}
\label{F_recovery}
F_R(a;BE|A):=\sup_{\Gamma_{A\rightarrow A\widetilde{a}}}F\left(\rho_{AaBE},\Gamma_{A\rightarrow A\widetilde{a}}(\rho_{ABE})\right),
\end{equation}
where $\rho_{AaBE}$ with $\rho_{ABE} = \tr_a \rho_{AaBE}$, and we suppressed identity $\mathbf{1}_{BE}$ in the action of recovery map $\Gamma_{A\rightarrow A\widetilde{a}}(\rho_{ABE})\equiv (\mathbf{1}_{BE}\otimes \Gamma_{A\rightarrow A\widetilde{a}}) (\rho_{ABE})=\widetilde{\rho}_{A\widetilde{a}BE}$. We will call $\rho_{A\widetilde{a}BE}$ a recovered state. 
It is proven that there is an appealing lower bound on the formula \eqref{F_recovery} in terms
of the conditional mutual information \cite{Fawzi2015,Seshadreesan2015}:
\begin{equation}
F_R(\rho_{AaBE}) \geq 2^{-I(a:BE|A)}.
\end{equation}
This allows us for estimating closeness of single copy one-way secure key $K^{(1)}_{\rightarrow}$ between state $\rho_{AaBE}$ and its recovered version $\widetilde{\rho}_{A\widetilde{a}BE}$. In what follows, we use lemma V.3 of \cite{Nowakowski2016} for the case of triparite states (for biparite states it needs correction, see Lemma \ref{lem:biparite} presented in the Appendix). 

\begin{observation} 
For any state $\rho_{AaBE}$ and its recovered version $\widetilde{\rho}_{A\widetilde{a}BE}=\Gamma_{A\rightarrow A\widetilde{a}} (\rho_{ABE})$, where $\rho_{ABE} = \tr_a \rho_{AaBE}$, and $\Gamma_{A\rightarrow A\widetilde{a}}$ is recovery map, the following relation holds:
\label{obs_Now}
\begin{equation}
|K_{\rightarrow}^{(1)}(\widetilde{\rho}_{A\widetilde{a}BE})-K^{(1)}_{\rightarrow}(\rho_{AaBE})| \leq 8 \delta \log_2 d_{Aa} + 4h(\delta),
\label{eq:key_one_close}
\end{equation}
with $\delta = \sqrt{1-2^{-I(a:BE|A)}}$ and $h(\cdot)$ denoting the binary Shannon entropy.
\end{observation}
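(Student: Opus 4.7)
The plan is to combine three ingredients: the Fawzi--Renner lower bound on fidelity of recovery, the Fuchs--van de Graaf inequality to convert a fidelity statement into a trace-distance statement, and a continuity bound for the single-shot one-way distillable key of tripartite states.

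First I would unpack the definition of $\delta$. Since the fidelity of recovery satisfies $F_{R}(a;BE|A)\geq 2^{-I(a:BE|A)}$, there exists a recovery map $\Gamma_{A\to A\widetilde{a}}$ such that $F(\rho_{AaBE},\widetilde{\rho}_{A\widetilde{a}BE})\geq 2^{-I(a:BE|A)}$, where $\widetilde{\rho}_{A\widetilde{a}BE}=\Gamma_{A\to A\widetilde{a}}(\rho_{ABE})$. Applying the Fuchs--van de Graaf inequality
\begin{equation}
\tfrac{1}{2}\|\rho-\sigma\|_{1}\leq\sqrt{1-F(\rho,\sigma)},
\end{equation}
with $\rho=\rho_{AaBE}$ and $\sigma=\widetilde{\rho}_{A\widetilde{a}BE}$ then gives
\begin{equation}
\tfrac{1}{2}\|\rho_{AaBE}-\widetilde{\rho}_{A\widetilde{a}BE}\|_{1}\leq\sqrt{1-2^{-I(a:BE|A)}}=\delta.
\end{equation}
Note that since $\Gamma$ is a CPTP map, $\widetilde{a}$ lives on a space of dimension at most $d_{a}$, so the two states can be regarded as living on the same Hilbert space of dimension $d_{Aa}d_{B}d_{E}$ (the system $B$ and the adversary system $E$ are not touched by the recovery).

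Next I would invoke the continuity statement for $K^{(1)}_{\rightarrow}$ on tripartite states, which is Lemma V.3 of \cite{Nowakowski2016}: if two tripartite states $\rho_{XBE}$ and $\sigma_{XBE}$ satisfy $\tfrac{1}{2}\|\rho-\sigma\|_{1}\leq\delta$, then
\begin{equation}
|K^{(1)}_{\rightarrow}(\rho)-K^{(1)}_{\rightarrow}(\sigma)|\leq 8\delta\log_{2}d_{X}+4h(\delta),
\end{equation}
where $d_{X}$ is the dimension of Alice's system. Identifying Alice's system with $Aa$ (respectively $A\widetilde{a}$, whose dimension is no larger) and applying this bound to the pair $\rho_{AaBE}$ and $\widetilde{\rho}_{A\widetilde{a}BE}$ yields the claimed inequality
\begin{equation}
|K^{(1)}_{\rightarrow}(\widetilde{\rho}_{A\widetilde{a}BE})-K^{(1)}_{\rightarrow}(\rho_{AaBE})|\leq 8\delta\log_{2}d_{Aa}+4h(\delta).
\end{equation}

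The only genuinely delicate point is the applicability of the Nowakowski continuity lemma: the original statement of Lemma V.3 in \cite{Nowakowski2016} was phrased for bipartite states and, as the authors remark in the excerpt, needed a correction in that case (provided as Lemma \ref{lem:biparite} in their appendix). For the tripartite case used here one has to check that the proof strategy of that lemma — which essentially tracks Alfred--Winter-type continuity of conditional entropies through the Devetak--Winter expression $I(X{:}B|T)-I(X{:}E|T)$ after the measurement and classical processing — still produces exactly the factors $8\log_{2}d_{Aa}$ and $4h(\delta)$, with the logarithmic dimension factor attaching to Alice's system. This dimension bookkeeping (making sure $\log_{2}d_{Aa}$, rather than $\log_{2}d_{AaBE}$, is the relevant quantity) is the main obstacle; once it is in place, the proof reduces to the short chain above.
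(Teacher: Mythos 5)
Your proposal is correct and follows essentially the same route as the paper's own (very terse) proof: lower-bound the fidelity of recovery by $2^{-I(a:BE|A)}$, convert to trace distance via Fuchs--van de Graaf, and apply the asymptotic-continuity bound for $K^{(1)}_{\rightarrow}$ from Lemma V.3 of Nowakowski, with the dimension factor attached to Alice's system $Aa$; you even flag the same caveat the paper does about that lemma holding in the tripartite (rather than bipartite) formulation. The only nit is the slip ``Alfred--Winter'' for what should be the Alicki--Fannes(--Winter) continuity of conditional entropy.
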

This observation follows directly from the Fuchs--van de Graaf
inequality \cite{Fuchs:vandeGraaf:1999}, which for two arbitrary states $\rho,
\sigma$ reads ${1\over 2}||_1\rho -\sigma||\leq \sqrt{1-F(\rho,\sigma)}$, and the fact that (not regularised) one-way distillable key is asymptotically
continuous (see Lemma V.3  in \cite{Nowakowski2016}). We use fidelity which is calculated for a map $\Gamma_{A\rightarrow A\widetilde{a}}$ maximising the fidelity of recovery in~\eqref{F_recovery}.

The above considerations hold for one copy of the state $\rho_{A\widetilde{a}BE}$. Now we shall discuss and find an upper bound for the regularised version, $K_{\rightarrow} = \lim_n {1\over n} K_{\rightarrow}^{(1)}(\rho^{\otimes n})$. The above reasoning cannot be applied straightforwardly to this case, because the closeness of $\rho$ and $\sigma$ in trace norm $\frac{1}{2}||\rho-\sigma||_1$ does not imply their closeness after taking many copies, when one considers  $\frac{1}{2}||\rho^{\otimes n}-\sigma^{\otimes n}||_1$.

Nevertheless, we can extend the above result to a class of {\it Schmid-twisted irreducible private states}.  Let us recall that   the one-way i.i.d. version of a secure key, $K^{iid}$,  is the key distillable by one-way communication via first measuring and post-processing it in an i.i.d. way on Alice's side. That is, the measurement $Q_x$ on Alice's side is performed 
identically and independently on each copy
of the state, rather than globally, and further classical information comes from a variable $\hat{T}$ that is identical on each copy (see Sec. \ref{sec:LOPC} for a full definition).

We can now prove the result inspired by Observation~\ref{obs_Now} in the case of $K^{iid}$:  
\begin{theorem}
\label{thm:delta_bound} 
Let $K^{iid}$ be one-way i.i.d. version of secure key, as in Definition~\ref{key_iid}. Denoting the original state by $\rho_{aABE}$,  the following inequality holds:
\begin{equation}
\label{chain_thm1}
\begin{split}
K^{iid}(\rho_{ABE})\geq  K^{iid}(\rho_{aABE})-(4\delta \log_2(d_ad_Ad_B^2)+4h(\delta)),
\end{split}
\end{equation}
where $\delta = \sqrt{1-2^{-I(a:BE|A)}}$, $I(a:BE|A)$ is a conditional mutual information calculated on respective systems, and $h(\cdot)$ denotes the binary Shannon entropy.
\end{theorem}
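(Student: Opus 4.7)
The plan is to port the optimal i.i.d.\ measurement strategy for $K^{iid}(\rho_{aABE})$ onto the state $\rho_{ABE}$ by first applying a recovery map achieving the fidelity of recovery, and then to control the resulting loss via an asymptotic continuity bound for the Devetak--Winter expression on the resulting ccq states. The i.i.d.\ structure of $K^{iid}$ is crucial here: because both the measurement $\hat Q^{\otimes n}$ and the post-processing $\hat T^{\times n}$ act independently on each copy, the post-measurement state factorises, and the limit in Definition~\ref{key_iid} collapses to a per-copy Devetak--Winter formula. This circumvents the tensor-product obstruction to trace-norm closeness that, as the authors emphasise, prevents the direct extension of Observation~\ref{obs_Now} to the fully regularised $K_\rightarrow$.

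Concretely, I would fix $(\hat Q,\hat T)$ attaining (up to $o(1)$) the supremum in $K^{iid}(\rho_{aABE})$, so that $K^{iid}(\rho_{aABE})=[I(X:B|T)-I(X:E|T)]_{\hat Q(\rho_{aABE})}$. Next, let $\Gamma_{A\to A\tilde a}$ be a recovery map realising $F(\rho_{aABE},\Gamma(\rho_{ABE}))\ge 2^{-I(a:BE|A)}$ as in~\eqref{F_recovery}. For the state $\rho_{ABE}$, a feasible i.i.d.\ strategy is then: apply $\Gamma$ on each copy to synthesise a replacement subsystem $\tilde a$, and then apply $\hat Q$ (now on $\tilde a A$) and $\hat T$ inherited from the previous step. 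Feasibility of this choice in the variational formula gives $K^{iid}(\rho_{ABE})\ge [I(X:B|T)-I(X:E|T)]_{\omega}$ with $\omega:=\hat Q(\Gamma(\rho_{ABE}))$. The Fuchs--van de Graaf inequality, combined with data processing under the CPTP maps $\hat Q$ and $\hat T$, yields $\tfrac12\|\omega-\hat Q(\rho_{aABE})\|_1\le\delta$, so the two ccq states on which we compare the DW quantity are within trace distance $2\delta$.

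The main obstacle is obtaining the correct dimension factor in the continuity constant: because Eve's system $E$ can be arbitrarily large (it purifies the honest parties' state), one cannot apply an Alicki--Fannes--Winter-type bound directly to $I(X:E|T)$. The standard trick is to rewrite $I(X:B|T)-I(X:E|T)=H(X|TE)-H(X|TB)$ and invoke AFW for the conditional Shannon entropy of a classical variable with quantum side information, whose continuity constant scales with $\log_2|X|$ rather than with $\log_2 d_E$. Combined with the dimension bounds $|X|,|T|\le d_{aA}^2$ from Theorem~\ref{DW_Nowakowski}, together with a $\log_2 d_B$ contribution coming from the Bob-side term, a careful bookkeeping yields precisely the additive loss $4\delta\log_2(d_a d_A d_B^2)+4h(\delta)$. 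This is the tripartite analogue of Lemma~V.3 of~\cite{Nowakowski2016} used already in the single-shot Observation~\ref{obs_Now}, now leveraged copy-by-copy thanks to the i.i.d.\ structure of $K^{iid}$.
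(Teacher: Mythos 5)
Your overall architecture coincides with the paper's: use the additivity of conditional mutual information to collapse the i.i.d.\ limit to a single-copy Devetak--Winter quantity, treat the recovery map $\Gamma_{A\to A\tilde a}$ composed with $\hat Q$ as a feasible (still product-form) strategy for $\rho_{ABE}$, bound the trace distance between the two post-measurement ccq states via Fuchs--van de Graaf and data processing, and finish with asymptotic continuity of $I(X:B|T)-I(X:E|T)$. All of these steps are present in the paper's proof and you execute them correctly; you are in fact more careful than the paper about the factor of $2$ in Fuchs--van de Graaf.

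The one place where you genuinely diverge is the continuity estimate, and there your bookkeeping does not deliver the constant in the statement. Writing $I(X:B|T)-I(X:E|T)=H(X|TE)-H(X|TB)$ and applying Alicki--Fannes with the conditioned system $X$ gives a loss of order $8\delta\log_2 d_X+4h(\delta)$ with $d_X\le d_{aA}^2$, i.e.\ roughly $16\delta\log_2(d_ad_A)+4h(\delta)$ --- a bound that contains no $d_B$ at all and is neither uniformly stronger nor weaker than the claimed $4\delta\log_2(d_ad_Ad_B^2)+4h(\delta)$. (Your route is essentially the paper's Observation~\ref{dimX}, which states exactly the $8\delta\log_2 d_X+4h(\delta)$ variant.) Moreover, your sentence mixing ``a $\log_2 d_B$ contribution from the Bob-side term'' with the $X$-conditioned Eve-side term is internally inconsistent: once you condition on $X$ for both terms you cannot also pick up a $\log_2 d_B$ factor. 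To obtain the theorem's stated constant, the paper instead expands $I(X:B|\hat T^*)$ with respect to $B$ (giving $\log_2 d_B$) and $I(X:E|\hat T^*)$ with respect to $E$ (giving $\log_2 d_E$), and then --- this is the ingredient missing from your writeup --- uses the purity of $\rho_{aABE}$ to bound $d_E\le d_ad_Ad_B$ via the Schmidt rank across the cut $aAB:E$. So your proposal proves a correct theorem of the same form (indeed the Observation~\ref{dimX} form, which is what is actually used downstream in Proposition~\ref{cor:main}), but the claim that ``careful bookkeeping yields precisely'' the stated additive loss is not substantiated; either adopt the paper's $d_E\le d_ad_Ad_B$ argument or restate the bound with your $X$-dependent constant.
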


\begin{proof}
Let $\hat{Q}_x^*$ be the optimal measurement
realizing $K^{iid}(\rho_{AaBE})$, where
$\rho_{AaBE}=|\psi_{AaBE}\>\<\psi_{AaBE}|$, and let $\widetilde{\rho}_{\widetilde{a}ABE}$ be the state after application of the recovery map $\Gamma_{A\rightarrow A\widetilde{a}}$ to the state $\rho_{ABE}=\tr_a\rho_{aABE}$. As we have argued below Observation~\ref{obs_Now}, there is
$||\widetilde{\rho}_{\widetilde{a}ABE}-\rho_{AaBE}||_1\leq \delta$, and the same holds for this pair of states after application of the measurement $\hat{Q}_x^*$, so $||\widetilde{\rho}'_{\widetilde{a}ABE}-\rho'_{XABE}||_1\leq \delta$, where $\rho'_{XABE}=\hat{Q}_x^*(\rho_{aABE})$, with $X$ denoting the outcome of the measurement.
Now, applying Definition~\ref{key_iid} to our case, one has
\begin{eqnarray}
&&K^{iid}(\rho_{aABE})= \lim_{n\rightarrow \infty} \frac{1}{n} \max_{\hat{T}|X}K_{DW}([Q_x^*(\rho_{aABE})]^{\otimes n})\\
&&=\lim_{n\rightarrow \infty} \frac{1}{n} \max_{\hat{T}|X}\left(I(X:B|\hat{T})_{\rho^{'\otimes n}}-I(X:E|\hat{T})_{\rho^{'\otimes n}}\right)\\
&&=\max_{\hat{T}|X} \left(I(X:B|\hat{T})_{\rho'}-I(X:E|\hat{T})_{\rho'}\right)\\
&&=I(X:B|\hat{T}^*)_{\rho'}-I(X:E|\hat{T}^*)_{\rho'}.
\end{eqnarray} 
To obtain the second line, we use $K_{DW}(\rho_{XBE})=I(X:B|\hat{T})_\rho - I(X:E|\hat{T}
)_\rho$. To obtain the third line, we exploit the additivity of the conditional mutual information. To get the last line, we introduce the quantity $\hat{T}^*$ attaining the maximum value of $\hat{T}$. 
On the other hand, by similar lines,
there is 
\begin{equation}
    K^{iid}(\widetilde{\rho}_{\widetilde{a}ABE}) \geq 
    I(X:B|\hat{T}^*)_{\widetilde{\rho}'} -I(X:E|\hat{T}^*)_{\widetilde{\rho}'},
    \end{equation}
where $\hat{T}^*$ is the value of $\hat{T}$ that attains maximum
in the formula for $K^{iid}(\rho_{aABE})$, and
$X$ is the outcome of measurement $\hat{Q}^*_x$ on $\widetilde{\rho}_{\widetilde{a}ABE}$.
Finally, to prove expression~\eqref{chain_thm1} we write the following chain of inequalities:
\begin{eqnarray}
\label{close0}
&&K^{iid}(\rho_{ABE}) \geq K^{iid}(\tilde{\rho}_{\tilde{a}ABE})\\
&&\geq I(X:B|\hat{T}^*)_{\widetilde{\rho}'}-I(X:E|\hat{T}^*)_{\widetilde{\rho}'}\\
&&\geq I(X:B|\hat{T}^*)_{\rho'}-I(X:E|\hat{T}^*)_{\rho'}-(4\delta \log_2 d_B +\nonumber\\
&&+4\delta \log_2(d_ad_Ad_B)+4h(\epsilon))\\
&&\geq K^{iid}(\rho_{aABE})-(8\delta \log_2 d_B+4\delta \log_2 (d_ad_A)+\nonumber\\
&&+4h(\delta))\nonumber\\
&&=K^{iid}(\rho_{aABE})-(4\delta \log_2(d_ad_Ad_B^2)+4h(\delta)).
\label{eq:close_final}
\end{eqnarray}
The first inequality follows from the fact that  the operation of recovery, since it is local, does not increase the amount of the key. To obtain the third line we use the fact that closeness of states $\widetilde{\rho}'$ and $\rho'$ in the trace norm implies closeness of the corresponding conditional mutual informations. First, we expand $I(X:B|\hat{T}^*)_{\widetilde{\rho}'}$ and $I(X:B|\hat{T}^*)_{\rho'}$ with respect to Bob, using mutual entropies: 
\begin{equation}
\label{close1}
\begin{split}
&\left|I(X:B|\hat{T}^*)_{\widetilde{\rho}'}-I(X:B|\hat{T}^*)_{\rho'}\right|=\left|S(B|\hat{T}^*)_{\widetilde{\rho}'}-S(B|\hat{T}^*)_{\rho'}\right|+\\
&+\left|S(B|X\hat{T}^*)_{\widetilde{\rho}'}-S(B|X\hat{T}^*)_{\rho'}\right|\leq 4\delta \log_2 d_B+2h(\delta).
\end{split}
\end{equation}
For functions $I(X:E|\hat{T}^*)_{\widetilde{\rho}'}$ and $I(X:E|\hat{T}^*)_{\rho'}$, we expand with respect to Eve's system, 
\begin{equation}
\label{close2}
\begin{split}
&\left|I(X:E|\hat{T}^*)_{\widetilde{\rho}'}-I(X:E|\hat{T}^*)_{\rho'}\right|\leq \left|S(E|\hat{T}^*)_{\widetilde{\rho}'}-S(E|\hat{T}^*)_{\rho'}\right|+\\
&+\left|S(E|X\hat{T}^*)_{\widetilde{\rho}'}+S(E|X\hat{T}^*)_{\rho'}\right|\leq 4\delta \log_2 d_E+2h(\delta).
\end{split}
\end{equation}
The state $\rho_{aABE}$ is pure, which implies that the dimension $d_E$ is upper bounded by $d_ad_Ad_B$. This follows from observation that, in the cut $aAB:E$, the Schmidt decomposition cannot have more terms than $\operatorname{rank}(\rho_{aAB})\leq \operatorname{dim}(\mathcal{H}_a\otimes \mathcal{H}_A\otimes \mathcal{H}_B)=d_ad_Ad_B$. This allows us to rewrite~\eqref{close2} as
\begin{equation}
\label{close2prim}
\begin{split}
&\left|I(X:E|\hat{T}^*)_{\widetilde{\rho}'}-I(X:E|\hat{T}^*)_{\rho'}\right|\leq 4\delta \log_2(d_ad_Ad_B)+2h(\delta)\\
&=4\delta \log_2 d_B +4\delta \log_2(d_ad_A)+2h(\delta).
\end{split}
\end{equation}
Now, combining expressions~\eqref{close1} and~\eqref{close2prim}, we get~\eqref{eq:close_final} finishing the proof.
\end{proof}

\begin{observation}
\label{dimX}
Inequality~\eqref{chain_thm1} in Theorem~\ref{thm:delta_bound} can be re-written in terms of dimension $d_X$ of the space of measurements $X$:
\begin{equation}
\label{eq:dimX}
K^{iid}(\rho_{ABE})\geq  K^{iid}(\rho_{aABE})-(8\delta \log_2 d_X+4h(\delta)),
\end{equation}
where $X$ is generated from $aA$ via iid measurement $\hat{Q}^*_x:aA\rightarrow X$.
\end{observation}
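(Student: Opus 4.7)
The plan is to re-trace the proof of Theorem~\ref{thm:delta_bound} step by step, keeping every line identical except for the Fannes-type estimates that control the change of the two conditional mutual informations after the measurement. Fixing the optimal i.i.d.\ measurement $\hat{Q}^*_x:aA\to X$ and the optimal $\hat{T}^*$ from the definition of $K^{iid}$, we still obtain the pair of post-measurement states $\rho'$ and $\widetilde{\rho}'$ that are $\delta$-close in trace norm (measurement and the local recovery map are both CPTP, so the closeness survives). As before the task reduces to bounding $|I(X{:}B|\hat{T}^*)_{\widetilde{\rho}'} - I(X{:}B|\hat{T}^*)_{\rho'}|$ and the analogous quantity with $E$ in place of $B$.

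The key new move is to expand each mutual information through the classical register $X$ rather than through the quantum subsystems $B$ and $E$. Using the symmetry of the mutual information,
\begin{equation*}
I(X{:}B|\hat{T}^*) = S(X|\hat{T}^*) - S(X|B\hat{T}^*), \qquad I(X{:}E|\hat{T}^*) = S(X|\hat{T}^*) - S(X|E\hat{T}^*),
\end{equation*}
so the required continuity estimate need only be applied to conditional entropies of the form $S(X|\,\cdot\,)$. Since $X$ takes values in an alphabet of size $d_X$, the reduced state on $X$ has rank at most $d_X$, and the standard Fannes-type bound used already in the proof of Theorem~\ref{thm:delta_bound} gives
\begin{equation*}
|S(X|\,\cdot\,)_{\widetilde{\rho}'} - S(X|\,\cdot\,)_{\rho'}| \leq 2\delta \log_2 d_X + h(\delta).
\end{equation*}

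Summing the two such differences making up each conditional mutual information produces
\begin{equation*}
|I(X{:}B|\hat{T}^*)_{\widetilde{\rho}'} - I(X{:}B|\hat{T}^*)_{\rho'}| \leq 4\delta \log_2 d_X + 2h(\delta),
\end{equation*}
together with the identical estimate for the $E$-term. Feeding these two bounds into the chain \eqref{close0}--\eqref{eq:close_final} in the proof of Theorem~\ref{thm:delta_bound}, in place of the previous estimates expressed through $\log_2 d_B$ and $\log_2(d_a d_A d_B)$, immediately delivers the advertised correction $8\delta \log_2 d_X + 4h(\delta)$ and hence the stated inequality~\eqref{eq:dimX}.

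I do not expect a real obstacle: the argument is a clean reformulation that exploits the classicality of $X$, and all the heavy lifting (the fidelity-of-recovery input, the additivity of conditional mutual information on i.i.d.\ blocks, and the reduction to a single-copy estimate) has been done in Theorem~\ref{thm:delta_bound}. The only point to be mindful of is that the Fannes-type bound must be applied to states on which $X$ is genuinely a classical register of alphabet size $d_X$, which holds on both sides of the comparison because the same measurement $\hat{Q}^*_x$ is applied to $\rho_{aABE}$ and to its recovered version. This reformulation is useful whenever $d_X\ll d_a d_A d_B^2$, which is exactly the regime where $K^{iid}$ is attained by a coarse measurement and the bound~\eqref{chain_thm1} is loose.
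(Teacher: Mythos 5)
Your proposal is correct and follows essentially the same route as the paper: the paper's own justification is precisely to rewrite the continuity estimates \eqref{close1} and \eqref{close2} by expanding both conditional mutual informations with respect to the outcome register $X$ (using $I(X{:}B|\hat{T}^*)=S(X|\hat{T}^*)-S(X|B\hat{T}^*)$ and likewise for $E$), so that the Alicki--Fannes-type bound contributes $\log_2 d_X$ in place of $\log_2 d_B$ and $\log_2(d_a d_A d_B)$. Your bookkeeping of the constants matches the paper's, and your remark that $X$ is a genuine classical register of size $d_X$ on both $\rho'$ and $\widetilde{\rho}'$ is the only point that needs checking, which you address.
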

One can prove this statement by writing expressions~\eqref{close1} and \eqref{close2} with respect to space of outcomes $X$, and by similar lines as in the proof of Theorem~\ref{thm:delta_bound} one gets the statement. As we will see, this Observation is of the great importance when one considers private states with $d_X=d_A$, since considered measurements are the von Neumann measurements, which do not increase respective dimension. This significantly reduces the value of the factor in equation~\eqref{chain_thm1}.

We know that any Schmidt-twisted pure state $\widetilde{\gamma}_{ABA'B'}$ can be written  as $U(\psi_{AB}\otimes \sigma_{A'B'})U^{\dagger}$, where its explicit form is presented in~\eqref{Schmidt_pure}. In this class one can consider a subclass of irreducible Schmidt-twisted pure states. The whole secret content of these states is accessible via systems $A$ and $B$. Irreducible private states~\cite{irred_pbits} are a special case of these states.  An irreducible private state $\gamma$ with $2^k\otimes 2^k$ dimensional key part satisfies $K_D(\gamma) = k$. From Theorem~\ref{thm:delta_bound} and Observation~\ref{dimX} we have the following proposition: 

\begin{proposition}
\label{cor:main}
For an irreducible Schmidt-twisted pure states $\widetilde{\gamma}_{ABA'B'}$ with $AA'=aA''$,  there is
\begin{equation}
\label{Cor3eq}
        K_D(\widetilde{\gamma}_{A''BB'}) 
    \geq K_D(\widetilde{\gamma}_{aA''BB'})- [8 \delta \log_2 d_{A} + 4h(\delta)],
        \end{equation}
        with $\delta = \sqrt{1 - 2^{-I(a:BB'|A'')_{\gamma}}}$.
\end{proposition}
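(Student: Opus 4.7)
My plan is to apply Theorem~\ref{thm:delta_bound} in its dimension-refined version, Observation~\ref{dimX}, to the honest parties' state, and to exploit the defining property of an irreducible Schmidt-twisted pure state: that its distillable key equals $S(A)_{\Phi}$ and is saturated by a fixed iid von Neumann measurement on the key part $A$.

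Concretely, let $|\psi\>_{aA''BB'E}$ be a purification of $\widetilde{\gamma}_{aA''BB'}$, and $\hat{Q}^{*}$ the iid measurement in the Schmidt basis of $|\Phi\>$ on $A$. By irreducibility and the Devetak--Winter formula of Theorem~\ref{DW_Nowakowski}, $K^{iid}(|\psi\>_{aA''BB'E}) = K_D(\widetilde{\gamma}_{aA''BB'}) = S(A)_\Phi$. Since $\hat{Q}^{*}$ acts only on the key part, its outcome alphabet satisfies $d_X = d_A$. After leakage of $a$ the worst-case Eve system becomes $aE$, and the chain $K^{iid}\leq K_{\rightarrow}\leq C_D$ combined with $C_D(|\psi\>_{A''BB'(aE)}) = K_D(\widetilde{\gamma}_{A''BB'})$ yields $K_D(\widetilde{\gamma}_{A''BB'}) \geq K^{iid}(|\psi\>_{A''BB'(aE)})$. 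Coupling this chain to Theorem~\ref{thm:delta_bound} in the dimension-refined form of Observation~\ref{dimX}, under the identification $a\leftrightarrow a$, $A\leftrightarrow A''$, $B\leftrightarrow BB'$, then produces the lower bound with correction $8\delta \log_{2}d_{A}+4h(\delta)$.

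The hard part is the precise form of $\delta$. Theorem~\ref{thm:delta_bound} applied directly to the five-partite pure state $|\psi\>_{aA''BB'E}$ would furnish $\delta = \sqrt{1-2^{-I(a:BB'E|A'')}}$, whereas the proposition states the sharper $\delta = \sqrt{1-2^{-I(a:BB'|A'')_{\gamma}}}$ evaluated on the four-partite honest-parties' state. To obtain this sharper form I would instead invoke the Fawzi--Renner fidelity-of-recovery bound directly on the tripartite state $\widetilde{\gamma}_{aA''BB'}$ (with Bob taken as the composite system $BB'$), which delivers $\|\widetilde{\gamma}_{aA''BB'}-\Gamma(\widetilde{\gamma}_{A''BB'})\|_{1}\leq 2\delta$ in terms of the tighter $\delta$. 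One then has to verify that the asymptotic-continuity step inside Theorem~\ref{thm:delta_bound} still goes through with only this honest-party closeness; the key observation here is that irreducibility forces $I(X:E)=0$ on the measured original state, and the recovery $\Gamma\colon A''\to A''\tilde{a}$ is local on Alice, so that the deviation of the Eve-side mutual information $I(X:aE|\hat{T})$ on the recovered, measured state is still controlled by the same honest-party $\delta$.
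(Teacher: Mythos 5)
Your main line of argument is exactly the paper's: apply Observation~\ref{dimX} (the $d_X$-refined form of Theorem~\ref{thm:delta_bound}) to the purification $\psi_{\widetilde{\gamma}_{aA''BB'E}}$ with $\hat{Q}^*_x$ the composition of $U:aA''\to AA'$ with the von Neumann measurement on the key part, so that $d_X=d_A$; use irreducibility to saturate the chain $K^{iid}=C_D=K_D=S(A)_\Phi$ on the unleaked state (with trivial $\hat T$); and close with $K^{iid}\leq C_D$ together with Eq.~\eqref{eq:CK} on the leaked side. All of that matches the paper's proof step for step.

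Where you diverge is the third paragraph, and there you have put your finger on a real discrepancy that the paper's own proof passes over in silence: Theorem~\ref{thm:delta_bound} is proved from closeness of the \emph{full} states including $E$, which the Fawzi--Renner bound controls via $I(a:BB'E|A'')$, whereas the Proposition (and Eq.~\eqref{eq:recov}) quotes the smaller $I(a:BB'|A'')_\gamma$; by the chain rule the latter gives a strictly smaller $\delta$, i.e.\ a genuinely stronger claim. Your proposed repair, however, does not close this gap. Applying Fawzi--Renner to the tripartite honest-party state $\widetilde{\gamma}_{aA''(BB')}$ controls $\|\widetilde{\gamma}_{aA''BB'}-\Gamma(\widetilde{\gamma}_{A''BB'})\|_1$ and hence the Bob-side terms $I(X:BB'|\hat T)$, but closeness of reduced states does not propagate to their extensions: the optimal honest-party recovery map gives no bound on $\|\tilde{\rho}_{\tilde{a}A''E}-\rho_{aA''E}\|_1$, and it is precisely the Eve-side continuity estimate \eqref{close2} that needs this. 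Irreducibility gives $I(X:E|\hat T)_{\rho'}=0$ on the \emph{original} measured state, but the recovered $\tilde a$ is manufactured from $A''$ by $\Gamma$ and need not reproduce the twisting structure, so $I(X:aE|\hat T)_{\tilde\rho'}$ is not automatically small; "the recovery is local on Alice" is not by itself an argument. To genuinely obtain the $E$-free $\delta$ one would have to lift honest-party closeness to purifications via Uhlmann (as in Lemma~\ref{lem:biparite} of the Appendix), at the cost of replacing $\delta$ by $O(\sqrt{\delta})$ in the continuity terms. As written, your (and the paper's) argument rigorously yields the bound only with $\delta=\sqrt{1-2^{-I(a:BB'E|A'')}}$ evaluated on the purified state.
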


\begin{proof}
We apply the statement of Observation~\ref{dimX} to a pure state $\widetilde{\gamma}_{aA''BB'E}$ with measurement $Q_x^*$, which is composition of the unitary $U:aA''\rightarrow AA'$ with the von Neumann measurement on system $A$ in the computational basis, obtaining
\begin{equation}
K_D^{iid}(\widetilde{\gamma}_{A''BB'E})\geq K_D^{iid}(\psi_{\widetilde{\gamma}_{aA''BB'E}})- [8 \delta \log_2 d_{A} + 4h(\delta)].
\end{equation}
The following chain of equalities holds:
\begin{equation}
\label{eq:chain0}
\begin{split}
&K_D^{iid}(\psi_{\widetilde{\gamma}_{aA''BB'E}})=K_D^{iid}(\psi_{\widetilde{\gamma}_{ABA'B'E}})=\\
&C_D(\psi_{\widetilde{\gamma}_{ABA'B'E}})=K_D(\widetilde{\gamma}_{ABA'B'})=K_D(\widetilde{\gamma}_{aA''BB'}).
\end{split}
\end{equation}
The first equality holds since the unitary operation producing different cut of the Alice's systems $AA'\leftrightarrow aA''$ does not change the amount of the key. Furthermore, 
\begin{equation}
\label{ineq1}
K^{iid}(\psi_{\widetilde{\gamma}_{ABA'B'E}})\leq C_D(\psi_{\widetilde{\gamma}_{ABA'B'E}})=S(A)_{\psi},
\end{equation}
where $C_{D}$ denotes the rate of key distilled by means of LOPC operations, see subsection~\ref{sec:LOPC} and \cite{DevetakWinter-hash} ($\psi$ denotes $\psi_{\widetilde{\gamma}_{ABA'B'E}}$).
The inequality in (\ref{ineq1}) is obtained because we work with a restricted class of protocols, while the equality follows from the fact that from irreducible private state we obtain exactly $S(A)_{\psi}$ of the key. 
Next, we notice that, for irreducible Schmidt-twisted pure states, the inequality~\eqref{ineq1} is saturated, $K^{iid}(\psi_{\widetilde{\gamma}_{ABA'B'E}})=C_D(\psi_{\widetilde{\gamma}_{ABA'B'E}})$, because one achieves rate of $C_D(\psi_{\widetilde{\gamma}_{ABA'B'E}})=S(A)_{\psi}$ via the measurement, which is tensor power of the von Neumann  measurement on the key part $A$, while variable $T$ is null here (no communication is needed for obtaining the key). Due to~\cite{karol-PhD}, $C_D(\psi_{\widetilde{\gamma}_{ABA'B'E}})=K_D(\widetilde{\gamma}_{ABA'B'})$. Applying the unitary producing different cut of the Alice's systems $AA'\leftrightarrow aA''$, we obtain the last equality in~\eqref{eq:chain0}. To prove the left-hand side of~\eqref{Cor3eq} we observe that 
 \begin{equation}
 C_D(\psi_{\widetilde{\gamma}_{A''BB'E}})\geq K_D^{iid}(\widetilde{\gamma}_{A''BB'E}).
 \end{equation}
 Finally, using expression~\eqref{eq:CK} from subsection~\ref{sec:LOPC}, stating that, for a pure tripartite state $\psi_{ABE}$ with corresponding state $\rho_{AB}=\tr_E \psi_{ABE}$, one has $C_D(\psi_{ABE})=K_D(\rho_{AB})$,  we obtain the statement.
\end{proof}

It is tempting to ask how the bound from Theorem~\ref{thm:delta_bound} compares with the bound from
the Proposition~\ref{obs:shield_gen}. In Figure~\ref{fig:comp_bound1} we ask whether $8 \delta \log_2 d_{A} + 4h(\delta)\leq 2S(a)$, with $\delta = \sqrt{1 - 2^{-I(a:BB'|A'')_{\gamma}}}$.
\begin{figure}[h]
\centering
 \includegraphics[width=0.9\columnwidth,keepaspectratio,angle=0]{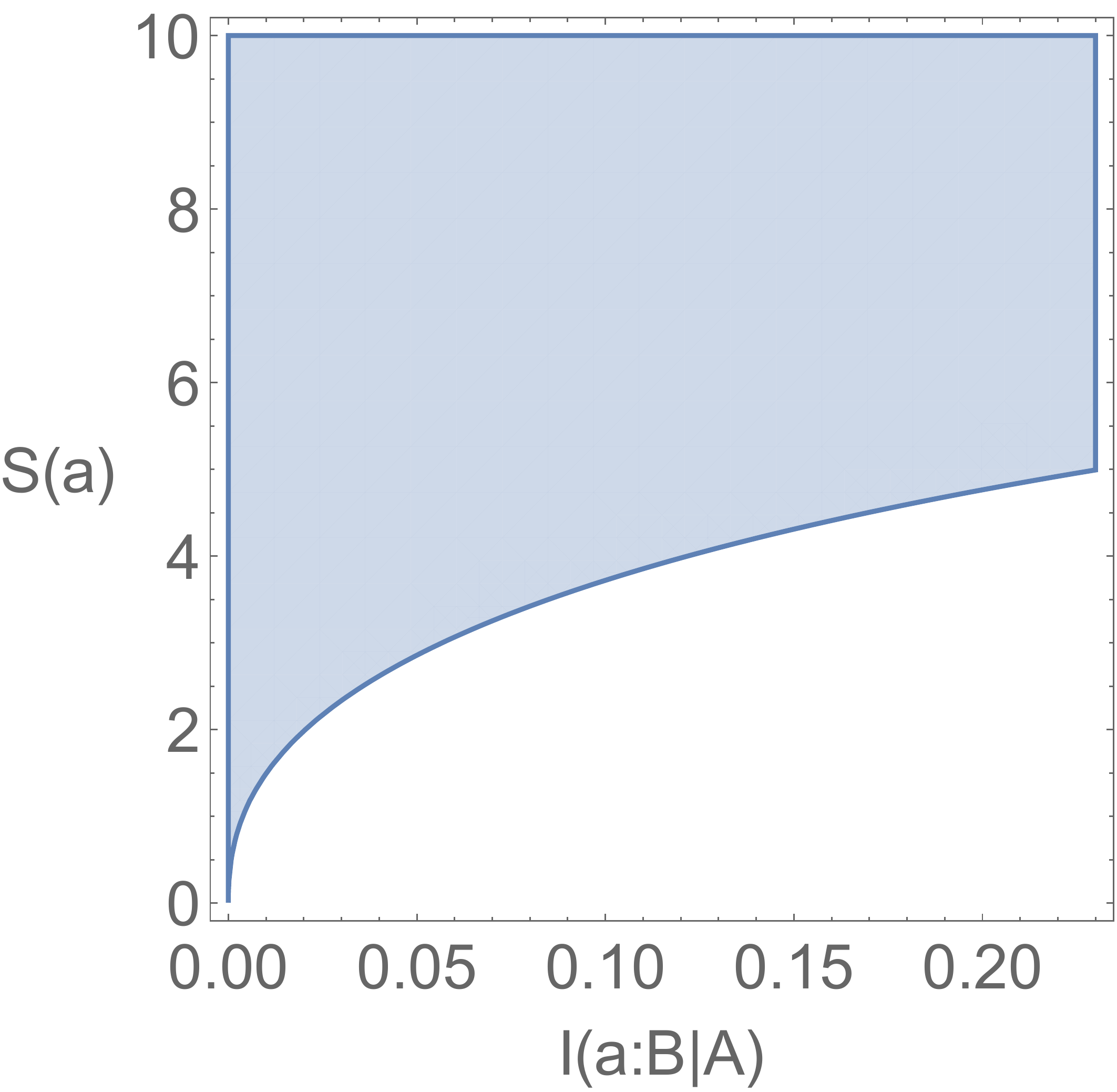}
\caption{A comparison of the bounds given in Theorem~\ref{thm:delta_bound}, with $\delta = \sqrt{1- 2^{-I(a:B|A)}}$, and in Observation~\ref{obs:shield}. The blue region corresponds to the case when $8 \delta \log_2 d_{A} + 4h(\delta)\leq 2S(a)$ with $d_{A}=2$. However, the quantities $S(a), I(a:B|A)$ are not independent, so not all pairs $(I(a:B|A),S(a))$ in the blue region are achievable. In other words, if a point is achievable, then it has to satisfy plotted relation, and otherwise we do not take it into account.} 
\label{fig:comp_bound1}
 \end{figure}

\section{partial non-locking for product of two states}
As we have mentioned earlier, it is an open problem if a two-way distillable key
drops down by more than $S(CD)$ upon the erasure of subsystems $CD$ of some bipartite state $\rho_{AC:BD}$. An easy sub-case of this problem is when the subsystem $CD$ is a {\it product} with the rest of the system $AB$. That is, we consider the consequences of the following transformation: 
\begin{equation}
\rho_{AB}\otimes \sigma_{CD} \rightarrow \rho_{AB}.
\end{equation}
It looks at first that the drop of a key
should be $K_D(\sigma_{CD})$. However, it need not be the case.
The problem that arises here stems from the fact that $K_D$ may be super-additive
on tensor product (this is known for the private capacity of quantum channels \cite{PrivCapSupAdd}). This is why it is not clear how much the key of $\rho_{AB}$
increases upon adding auxiliary system $\sigma_{CD}$. 

We argue
now that the increase can be controlled. 
\begin{observation} For a tensor product of biparite states $\rho_{AB}\otimes \rho_{CD}$, there is
\begin{eqnarray}
    &&K_D(\rho_{AB}\otimes \sigma_{AB}) -K_D(\rho_{AB}) \nonumber\\
    &&\leq \min\{E_R(\rho_{AB}),E_{sq}(\rho_{AB})\} - K_D(\rho_{AB}) + \nonumber \\
    &&\min\{S(\sigma_C), S(\sigma_D)\},
\end{eqnarray}
where 
$E_R(\rho) := \inf_{\sigma \in SEP}D(\rho,\sigma)$, with $D(\rho,\sigma):=\tr\rho\log_2\rho-\tr \rho \log_2 \sigma$, is the relative entropy of entanglement \cite{Vedral:Plenio:1998}, while $E_{sq}(\rho_{AB}):=\inf\{\frac{1}{2}I(A:B|E)\mid\rho_{AB}=\mathrm{tr}_E\rho_{ABE}\}$ is the squashed entanglement \cite{ChristandlWinter_squashed}.
\end{observation}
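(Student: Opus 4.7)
The plan is to upper-bound $K_D(\rho_{AB}\otimes\sigma_{CD})$ by each of the two entanglement measures $E_R$ and $E_{sq}$ separately, split off the contribution of $\sigma_{CD}$ using (sub)additivity, and finally bound the entanglement of the auxiliary state $\sigma_{CD}$ by $\min\{S(\sigma_C),S(\sigma_D)\}$. Concretely, the target intermediate inequality is
\[
K_D(\rho_{AB}\otimes\sigma_{CD}) \le \min\{E_R(\rho_{AB}),E_{sq}(\rho_{AB})\} + \min\{S(\sigma_C),S(\sigma_D)\},
\]
after which subtracting $K_D(\rho_{AB})$ from both sides yields the claim.

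For the $E_R$ branch, I would combine the standard bound $K_D\le E_R$ with the subadditivity $E_R(\rho\otimes\sigma)\le E_R(\rho)+E_R(\sigma)$ (immediate, since a tensor product of separable approximations is separable for the joint state). Then I would invoke $E_R(\sigma_{CD})\le E_F(\sigma_{CD})\le \min\{S(\sigma_C),S(\sigma_D)\}$; the second step follows from concavity of the von Neumann entropy, since for any pure-state decomposition $\sigma_{CD}=\sum_i p_i|\psi_i\rangle\langle\psi_i|$ one has $\sum_i p_i S((\psi_i)_C)\le S(\sigma_C)$, and Schmidt symmetry of pure states gives $\sum_i p_i S((\psi_i)_C)=\sum_i p_i S((\psi_i)_D)\le S(\sigma_D)$.

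For the $E_{sq}$ branch I would use $K_D\le E_{sq}$ together with additivity of the squashed entanglement, $E_{sq}(\rho\otimes\sigma)=E_{sq}(\rho)+E_{sq}(\sigma)$. The auxiliary term is controlled by choosing a trivial one-dimensional extension in Definition~\ref{Isq}, which gives $E_{sq}(\sigma_{CD})\le \tfrac{1}{2}I(C;D)_\sigma$; the Araki--Lieb inequality $S(\sigma_{CD})\ge|S(\sigma_C)-S(\sigma_D)|$ then yields $\tfrac{1}{2}I(C;D)_\sigma\le\min\{S(\sigma_C),S(\sigma_D)\}$. Taking the minimum of the two resulting upper bounds on $K_D(\rho_{AB}\otimes\sigma_{CD})$ produces the displayed inequality.

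I do not anticipate any substantial obstacle; the argument is a clean assembly of known ingredients. The one point worth flagging is that one cannot interchange a minimum with subadditivity in general, as $\min\{a+b,c+d\}\ne\min\{a,c\}+\min\{b,d\}$. This is circumvented here because the bound $\min\{S(\sigma_C),S(\sigma_D)\}$ on the entanglement of $\sigma_{CD}$ is common to both $E_R$ and $E_{sq}$, so the two chains share the same additive correction and the minimum can be taken only over the $\rho_{AB}$-term, yielding exactly the claimed form.
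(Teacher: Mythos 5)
Your proof is correct and follows essentially the same route as the paper: bound $K_D$ by $\min\{E_R,E_{sq}\}$, exploit subadditivity of $E_R$ and additivity of $E_{sq}$ on tensor products, and control the $\sigma_{CD}$ contribution by $\min\{S(\sigma_C),S(\sigma_D)\}$, noting correctly that the minimum can be pulled out because both branches share the same additive correction. The only inessential difference is in the last step: the paper bounds both $E_R(\sigma_{CD})$ and $E_{sq}(\sigma_{CD})$ by the entanglement cost $E_C(\sigma_{CD})\le\min\{S(\sigma_C),S(\sigma_D)\}$, whereas you bound $E_R$ via $E_F$ and $E_{sq}$ via the trivial extension together with Araki--Lieb; both justifications are standard and valid.
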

\begin{proof}
By noticing $K_D\leq \min \{E_R,E_{sq}\}$, we observe that
\begin{eqnarray}
&&K_D(\rho_{AB}\otimes \sigma_{AB}) -K_D(\rho_{AB}) \nonumber\\
&&\leq \min\{E_R(\rho_{AB}\otimes \sigma_{AB}),E_{sq}(\rho_{AB}\otimes \sigma_{AB})\}+\nonumber\\
&&-K_D(\rho_{AB}). 
\end{eqnarray}
We further note that $E_R$ is subadditive and $E_{sq}$ is additive on tensor product of the state. This leads to
\begin{eqnarray}
&&K_D(\rho_{AB}\otimes \sigma_{AB}) -K_D(\rho_{AB}) \nonumber\\
&&\leq \min\{E_R(\rho_{AB})+ E_R(\sigma_{AB}),E_{sq}(\rho_{AB})+ \nonumber\\ 
&&E_{sq}(\sigma_{AB})\} - K_D(\rho_{AB}). 
\end{eqnarray}
Finally, we have $\max\{E_R,E_{sq}\}\leq {E_C}$ where $E_C$ is an {\it entanglement cost } \cite{RMPK-quant-ent}, which satisfies ${E_C}\leq \min \{S(\sigma_C),S(\sigma_D)\}$.
\end{proof}

\begin{corollary}
	For a strictly irreducible private state $\gamma_{ABA'B'}$ and any state $\sigma_{CD}$,  
	there is $K_D(\gamma_{ABA'B'}\otimes \sigma_{CD})- K_D(\gamma_{ABA'B'})\leq \min \{S(\sigma_C),S(\sigma_D)\}$.
	\label{cor:strictly}
\end{corollary}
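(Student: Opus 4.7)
\noindent\emph{Proof plan.} The plan is to apply the previous Observation with $\rho_{AB}$ replaced by $\gamma_{ABA'B'}$, which gives
\begin{equation*}
K_D(\gamma \otimes \sigma_{CD}) - K_D(\gamma) \leq \bigl[\min\{E_R(\gamma), E_{sq}(\gamma)\} - K_D(\gamma)\bigr] + \min\{S(\sigma_C), S(\sigma_D)\}.
\end{equation*}
It therefore suffices to show that the bracketed excess vanishes. Since the standard bounds $K_D \leq E_R$ and $K_D \leq E_{sq}$ make the bracket non-negative, I need to prove at least one of the equalities $E_R(\gamma) = K_D(\gamma)$ or $E_{sq}(\gamma) = K_D(\gamma)$; I will target the former, which is the cleaner route for private states.

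Because $\gamma$ is strictly irreducible, $K_D(\gamma) = \log_2 d$ where $d$ is the dimension of the key part $AB$, so it is enough to exhibit a separable state at relative entropy exactly $\log_2 d$ from $\gamma$. The natural candidate is the state produced by dephasing the key part of $\gamma$ in the computational basis,
\begin{equation*}
\gamma_{\mathrm{dep}} := \sum_i \tfrac{1}{d}|ii\rangle\langle ii|_{AB} \otimes U_i \sigma U_i^{\dagger}.
\end{equation*}
Strict irreducibility is exactly the statement that each conditional shield $U_i \sigma U_i^{\dagger}$ is separable on $A':B'$, so $\gamma_{\mathrm{dep}}$ is a convex mixture of product states across $AA':BB'$ and hence separable.

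The rest is a short entropy bookkeeping. Since $\gamma = U(|\Psi_+\rangle\langle\Psi_+|\otimes\sigma)U^{\dagger}$ is unitarily related to a product, one has $S(\gamma) = S(\sigma)$, while $\gamma_{\mathrm{dep}}$ is a cq state with $S(\gamma_{\mathrm{dep}}) = \log_2 d + S(\sigma)$. A direct calculation, using that the support of $\gamma$ lies inside that of $\gamma_{\mathrm{dep}}$ and that on this support the controlled unitary $U$ aligns the spectral decompositions, gives $\Tr(\gamma \log_2 \gamma_{\mathrm{dep}}) = -S(\gamma_{\mathrm{dep}})$, hence $D(\gamma \,\|\, \gamma_{\mathrm{dep}}) = \log_2 d$. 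This furnishes $E_R(\gamma) \leq \log_2 d = K_D(\gamma)$, so the excess term vanishes and the Observation yields the corollary.

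The main obstacle, and the precise place where the hypothesis ``strictly'' irreducible (as opposed to merely ``irreducible'') enters essentially, is the separability of $\gamma_{\mathrm{dep}}$. For a general irreducible private state the conditional shields $U_i \sigma U_i^{\dagger}$ need not be separable, so this one-line upper bound on $E_R$ collapses and one would have to control either $E_{sq}(\gamma)$ or $K_D(\gamma \otimes \sigma_{CD})$ directly, both of which require substantially more work.
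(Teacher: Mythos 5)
Your proposal is correct and follows essentially the same route as the paper: apply the preceding Observation and kill the excess term via $E_R(\gamma)=K_D(\gamma)$ for strictly irreducible private states, which the paper simply cites from the literature on irreducible private bits. Your explicit derivation of $E_R(\gamma)\leq\log_2 d$ via the separable dephased state $\gamma_{\mathrm{dep}}$ and the computation $D(\gamma\,\|\,\gamma_{\mathrm{dep}})=\log_2 d$ is a correct, self-contained reproof of that cited fact, and you correctly identify separability of the conditional shields as the exact point where strict irreducibility is needed.
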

\begin{proof}
Follows from the fact that strictly irreducible private states satisfy
$E_R(\gamma_{ABA'B'}) = K_D(\gamma_{ABA'B'})$ \cite{irred_pbits}.
\end{proof}
We note, that similar corollary holds 
for the {\it maximally correlated states} of the form $\sum_{i,j} b_{ij}|ij\>
\<ij|$. For these states $E_D = K_D = E_R$ \cite{RMPK-quant-ent}. 

The system $CD$ can be viewed as a subsystem of the shield $A'B'$. In that case, Observation \ref{obs:shield} applies.  The above bound is tighter than the latter one, however it holds for a subclass of private states, and for a special case in which system $CD$ is a product with $ABA'B'$. 

Furthermore, the bound given in Theorem \ref{thm:delta_bound} applies in this case with $\delta = \sqrt{1- 2^{- I(C:BB'D|AA')}}=\sqrt{1- 2^{-I(C:D)}}$. In Figure~\ref{fig:comp_bound2} we compare the range of applicability of the latter bound with the one given in Corollary \ref{cor:strictly}.
\begin{figure}[h]
\centering
 \includegraphics[width=0.9\columnwidth,keepaspectratio,angle=0]{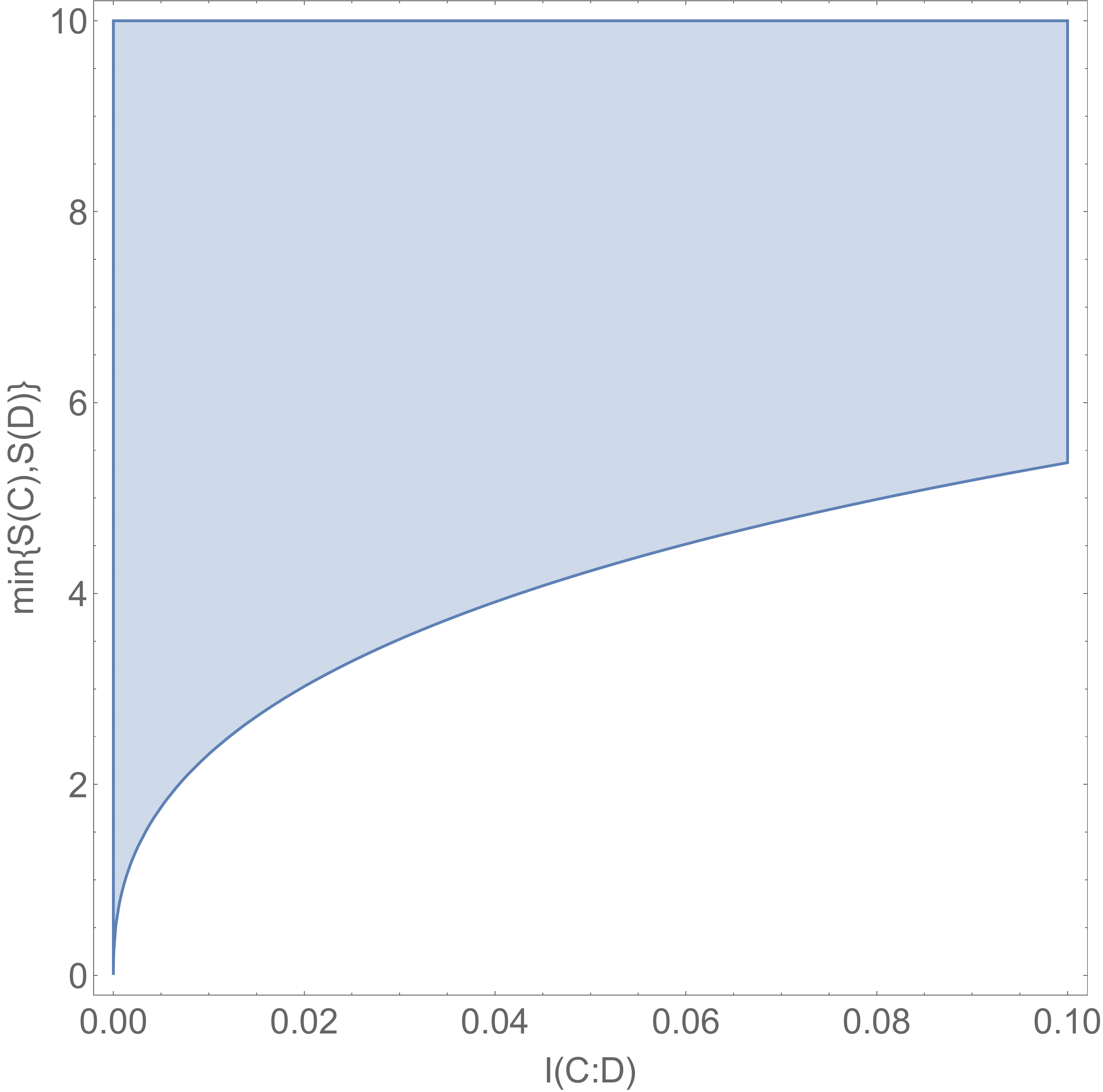}
\caption{A comparison of the bounds given in Theorem~\ref{thm:delta_bound}, with $\delta = \sqrt{1- 2^{-I(C:D)}}$, and through Corollary~\ref{cor:strictly}. The blue region corresponds to the case when $8 \delta \log_2 d_{A} + 4h(\delta)\leq \min \{S(\sigma_C),S(\sigma_D)\}$ with $d_{A}=2$.}
\label{fig:comp_bound2}
 \end{figure}

We now propose a weaker, but more general bound. 
\begin{observation}
	For a bipartite state $\rho_{A:BC}$ there is:
	\begin{eqnarray}
	K_D(\rho_{A:BC})-K_D(\rho_{AB}) \leq \nonumber\\I(A:C|B)+E_R^{\infty}(\rho_{AB})- K_D(\rho_{AB}),
	\end{eqnarray}
	where $E_R^\infty (\rho) := \lim_{n\rightarrow \infty} {1\over n}E_R(\rho^{\otimes n})$.
\end{observation}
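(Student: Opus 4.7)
The plan is to reduce the claim to the standard key upper bound $K_D \leq E_R^\infty$ together with a near-subadditivity of the regularized relative entropy of entanglement across the bipartitions $A{:}BC$ and $A{:}B$. Invoking $K_D(\rho_{A:BC}) \leq E_R^\infty(\rho_{A:BC})$, the claim follows once I establish
\begin{equation}
E_R^\infty(\rho_{A:BC}) \leq E_R^\infty(\rho_{AB}) + I(A{:}C|B),
\end{equation}
since then subtracting $K_D(\rho_{AB})$ from both sides reproduces the stated inequality.

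To prove this key bound I would work at the $n$-copy level and construct an explicit trial separable state in $\SEP(A^n{:}B^nC^n)$. Let $\sigma^{*}_{A^nB^n}$ attain (up to $o(n)$) the infimum in $E_R(\rho_{AB}^{\otimes n})$, and let $\mathcal{R}_n: B^n \to B^nC^n$ be a Petz-type recovery map associated with $\rho_{ABC}^{\otimes n}$, normalized so that $\mathcal{R}_n(\rho_{B^n}) = \rho_{B^nC^n}$. Because $\mathcal{R}_n$ is local to Bob, the trial state $\mathcal{R}_n(\sigma^{*}_{A^nB^n})$ remains separable across $A^n{:}B^nC^n$, and hence
\begin{equation}
E_R(\rho_{A:BC}^{\otimes n}) \leq D\!\left(\rho_{ABC}^{\otimes n}\,\big\|\,\mathcal{R}_n(\sigma^{*}_{A^nB^n})\right).
\end{equation}
A classical calculation, using the factorization $\mathcal{R}(\omega)(a,b,c)=\omega(a,b)\,p_\rho(c\mid b)$, gives the exact identity $D(\rho_{ABC}\,\|\,\mathcal{R}(\sigma^{*}_{AB})) = D(\rho_{AB}\,\|\,\sigma^{*}_{AB}) + I(A{:}C|B)$, which is precisely the target bound at the single-copy level. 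In the quantum case I would decompose the right-hand side into a "Markov defect" $D(\rho_{ABC}\,\|\,\mathcal{R}(\rho_{AB}))$, controlled by $I(A{:}C|B)$ via the Fawzi--Renner strengthening of strong subadditivity and its relative-entropy refinements (Junge--Renner--Sutter--Wilde--Winter, Sutter--Berta--Tomamichel), together with a "transport" term controlled by $D(\rho_{AB}\,\|\,\sigma^{*}_{AB}) = E_R(\rho_{AB})$ through data processing for the (possibly rotated) Petz map.

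Finally, I would divide by $n$, use the additivity $I(A{:}C|B)_{\rho^{\otimes n}} = n\,I(A{:}C|B)_\rho$, and let $n \to \infty$ to conclude. The hardest step is the quantum control in the preceding paragraph: the classical identity rests on the commutativity of $\log\rho_{AB}$ with $\log p_\rho(c\mid b)$, whereas the quantum analog demands either rotated/averaged Petz maps or a Rényi-to-Umegaki interpolation to push through the clean $I(A{:}C|B)$ error without incurring dimension-dependent correction factors, thereby avoiding the weaker $I(AB{:}C) = I(A{:}C|B) + I(B{:}C)$ one would obtain from the naive extension $\sigma^{*}_{AB}\otimes\rho_C$.
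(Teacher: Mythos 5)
Your reduction is exactly the one the paper uses: bound $K_D(\rho_{A:BC})$ by $E_R^\infty(\rho_{A:BC})$, add and subtract $E_R^\infty(\rho_{AB})$, and control the difference $E_R^\infty(\rho_{A:BC})-E_R^\infty(\rho_{AB})$ by $I(A{:}C|B)$. The only divergence is that the paper imports this last inequality wholesale as Lemma~1 of \cite{Brando2011,Brando2012Erratum}, whereas you attempt to reprove it via a Petz-type recovery map — and that attempted proof is where the gap lies.

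Two steps of your sketch are not available with current technology. First, the split of $D(\rho_{ABC}\,\|\,\mathcal{R}(\sigma^{*}_{AB}))$ into a ``Markov defect'' $D(\rho_{ABC}\,\|\,\mathcal{R}(\rho_{AB}))$ plus a ``transport term'' controlled by $D(\rho_{AB}\,\|\,\sigma^{*}_{AB})$ is unjustified: the Umegaki relative entropy obeys no triangle inequality and no chain rule of this form, and the exact classical identity you invoke rests on commutativity of $\log\sigma^{*}_{AB}$ with the conditional kernel, which fails quantumly. Second, even the Markov-defect estimate $D(\rho_{ABC}\,\|\,\mathcal{R}(\rho_{AB}))\leq I(A{:}C|B)$ is not a theorem for the Umegaki relative entropy: for the Petz map this inequality admits counterexamples, and the Fawzi--Renner and Junge--Sutter-type strengthenings of strong subadditivity only control the fidelity or the \emph{measured} relative entropy of recovery, which is too weak to be fed back into the infimum defining $E_R$. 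So the step you yourself flag as ``the hardest'' is genuinely unresolved, and without it your construction only yields the weaker bound with $I(AB{:}C)$ coming from the product ansatz $\sigma^{*}_{AB}\otimes\rho_C$. The statement itself is safe — the inequality you need is precisely the cited Lemma~1, whose published proof proceeds by quite different (and nontrivial) means — but your proposal does not establish it; you should either cite that lemma, as the paper does, or reproduce its actual proof.
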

\begin{proof}
We upper bound $K_D(\rho_{A:BC})$ by
$E_R^{\infty}(\rho_{A:BC})$ \cite{pptkey,keyhuge}. We then add and subtract $E_R^{\infty}(\rho_{AB})$. Lemma $1$ of \cite{Brando2011,Brando2012Erratum} allows
to upper bound the difference $E_R^{\infty}(\rho_{A:BC}) - E_R^{\infty}(\rho_{A:B})$ by $I(A:C|B)$, which proves the thesis.
\end{proof}

As an immediate corollary, we have that, for the state $\rho_{A:BC}$ such that the leftover state satisfies $E_R(\rho_{AB})=K_D(\rho_{AB})$, the upper bound on the loss of key is $I(A:C|B)$.

\section{Examples of action of side channels for some private states}
A motivation for this and the next section is given by the fact that certain private states, as well as states with a positive partial transposition that approximate them, are candidates for the {\it hybrid quantum network} design \cite{Sakarya2020}. This design ensures that unauthorized key generation will be impossible in quantum networks. It is therefore important to know how the distillable key of the latter states behaves under specific side channels. 

The findings of Section \ref{sec:lb_for_generated_key} ensure us that, 
upon the erasure of a single qubit of the shield (and hence upon any channel on it), the distillable key of a private state does not decrease by more than twice the entropy of the qubit (see  Proposition~\ref{obs:shield_gen}). In this Section, we concentrate on upper bounds on the drop of a key.
Namely, we consider special private states and channels and show the behavior of a key under the latter. 

The main result of this Section is an observation that the action on just one qubit of the shield of a certain private state can decrease the key by  half, irrespectively of the dimension of the shield (which varies in some range). This means that the protection of the state is not a monotonically increasing function of the number of qubits in the shield.

We  consider attacks on state $\gamma_V$,
given by \eqref{eq:X_form} with $X = V = \frac{1}{2d_s^2}\sum_{i=0,j=0}^{d_s-1}|ij\>\<ji|$ being the (normalised to half) swap operator. Specifically, we consider three values of local dimension of the shield: $d_s = 2,4,8$, and an attack by the {\it bit-flip} channel, specified as an operation $\Lambda_{bf}(\rho):= \alpha (\sigma_x^{A'}\otimes \mathbf{1}_{ABB'})\rho(\sigma_x^{A'}\otimes\mathbf{1}_{ABB'}) + (1-\alpha)\rho$, where $\sigma_x^{A'}$ is the Pauli matrix applied to system $A'$. We upper bound the value of key by $E_R(\rho)$ \cite{pptkey}. As a specific state $\sigma$ we choose the state $(1-p)\sigma_{att} + p {\mathbf{1}\over (2d_s)^2}$, where $\sigma_{att} = \Lambda_{bf}(\frac{1}{2}(|00\>\<00|\otimes\frac{\mathbf{1}}{d_s}+|11\>\<11|\otimes\frac{\mathbf{1}}{d_s}))$. The minimal value of an upper bound reached by this operation reads $0.5$. The result is shown on Fig. \ref{fig:bit_flip}. 

\begin{figure}[h]
\centering
 \includegraphics[width=1.0\columnwidth,keepaspectratio,angle=0]{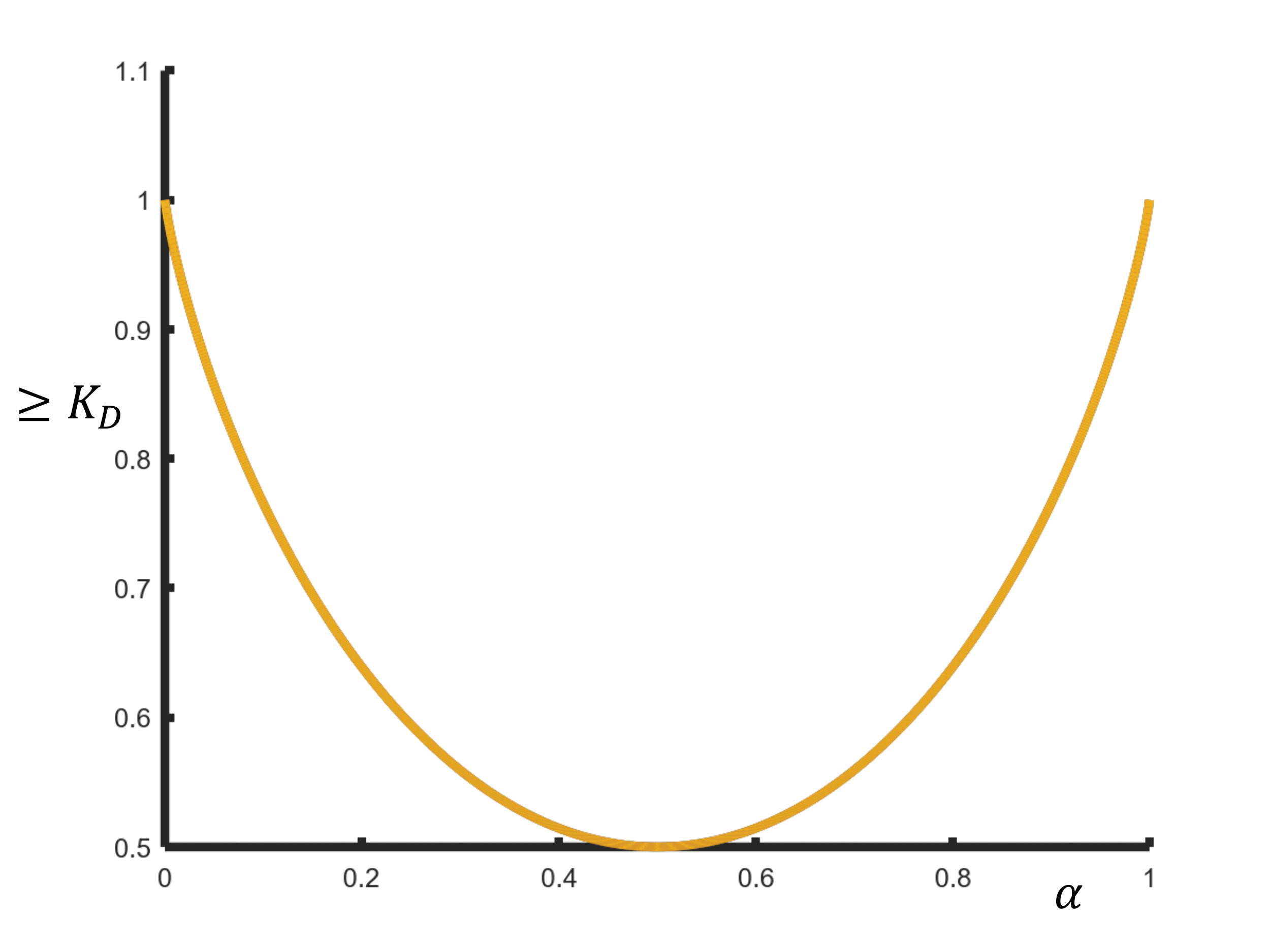}
\caption{Upper bound on the relative entropy of entanglement (and hence on $K_D$) of the state $\gamma_V$, after acting with the bit-flip channel on a qubit of its shield. The same plot is obtained for $d_s=2,4,8$, hence larger shield is no more shielding than smaller one.} 
\label{fig:bit_flip}
 \end{figure}

For the same state, we consider the action of {\it depolarizing} channel, specified by
\begin{equation}
    \Lambda_{dep}(\cdot) = (1-\frac{3\alpha}{4})\mathbf{1}(\cdot) + \frac{\alpha}{4}\sigma_x(\cdot)\sigma_x +\frac{\alpha}{4}\sigma_y(\cdot)\sigma_y+
    \frac{\alpha}{4}\sigma_z(\cdot)\sigma_z. 
\end{equation}
The maximal drop of the relative entropy of entanglement (and hence the key) reads $0.18872$, for $\alpha =1$. Resulting plot is depicted on Fig. \ref{fig:depolarising}.

\begin{figure}[h]
\centering
 \includegraphics[width=1.0\columnwidth,keepaspectratio,angle=0]{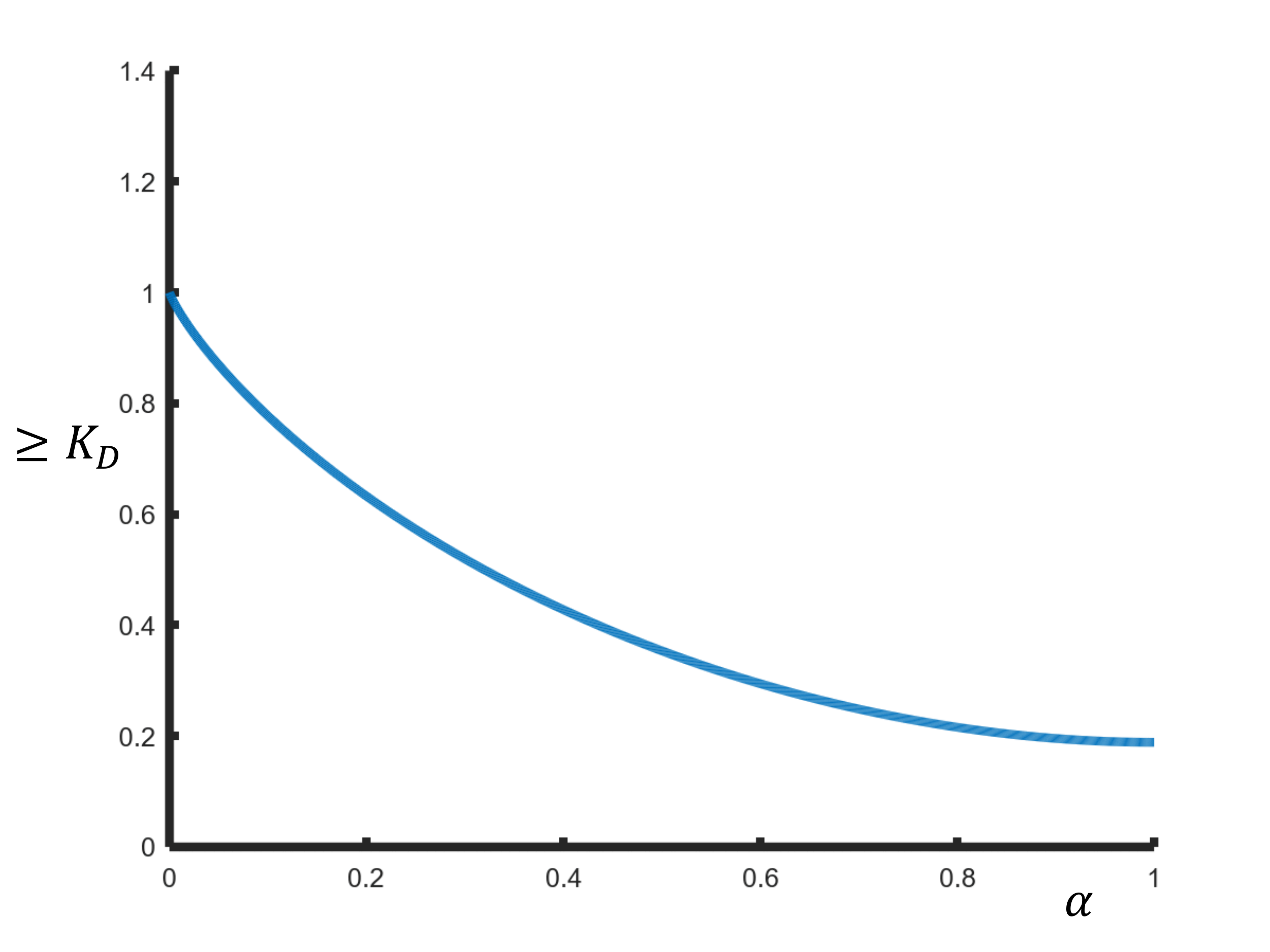}
\caption{Upper bound on the relative entropy of entanglement of the state $\gamma_V$ (and hence $K_D$), after acting with depolarizing channel on a qubit of its shield. The same plot is obtained for $d_s=2,4,8$.} 
\label{fig:depolarising}
 \end{figure}

Next, we check the action of the {\it amplitude damping} channel, ${\cal N}_\alpha(\cdot) = M_1(\alpha)(\cdot)M_1(\alpha)^{\dagger} +M_2(\alpha)(\cdot)M_2(\alpha)^{\dagger}$, which is specified by parameter $\alpha \in [0,1]$ and the following two Kraus operators:
\begin{equation}
    M_1(\alpha) = 
    \left[\begin{array}{cc}
    1 & 0 \\
    0 & \sqrt{1-\alpha}
    \end{array}\right], 
    M_2(\alpha) = 
    \left[\begin{array}{cc}
    1 & \sqrt{\alpha} \\
    0 & 0
    \end{array}\right].
\end{equation}
The minimal value reached in this case is also 
$0.18872$, and the results are the same for $d_s=2,4,8$. They are plotted on Fig. \ref{fig:amp_dump_channel}.

\begin{figure}[h]
\centering
 \includegraphics[width=1.0\columnwidth,keepaspectratio,angle=0]{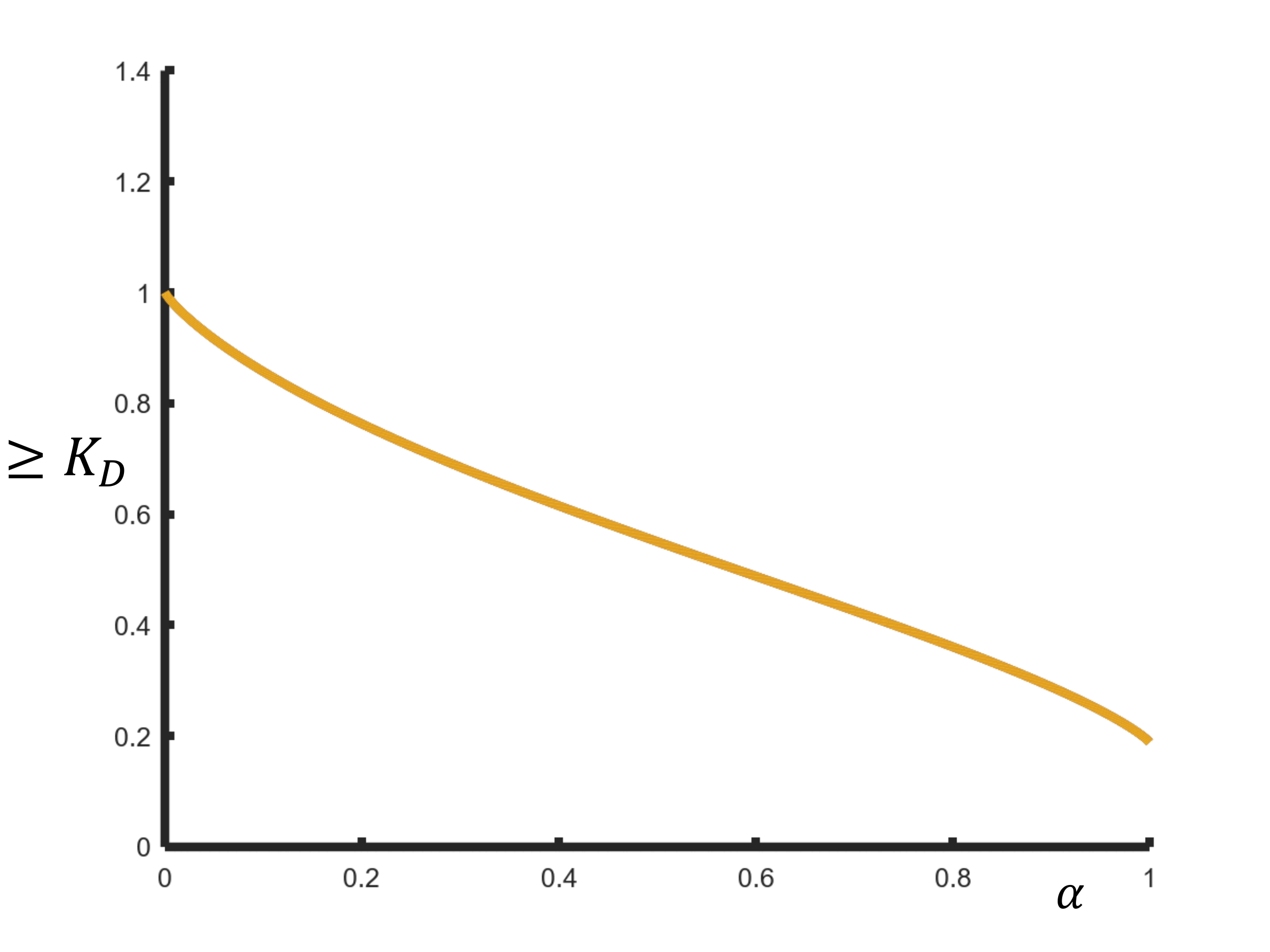}
\caption{Upper bound on the relative entropy of entanglement of the state $\gamma_V$ (and hence $K_D$), after acting with amplitude damping channel on a qubit of its shield. The same plot is obtained for $d_s=2,4,8$.} 
\label{fig:amp_dump_channel}
 \end{figure}

\section{Connection of leakage with the non-markovianity of dynamics}
In this section, we reveal the connection between the problem of (non)markovianity of a quantum dynamics and that of hacking.
We will see that a dynamics is markovian, then 
for all block states, their key witnessed by certain non-linear privacy witness does not increase in time under the dynamics.

Given a family $\{\Lambda_t\mid t\geq0\}$ of CPTP maps (interpreted as a temporal dynamics of a system), there is a range of different (generally inequivalent) conditions that can be imposed on this family, to make it called (by, generally, different authors) a 'quantum markovian dynamics' (see, e.g., \cite{Wolf:Cirac:2008,Li:Hall:Wiseman:2018} for review and comparison). Among those conditions, \textit{CP-divisibility}, introduced in \cite{Wolf:Cirac:2008} and futher studied in \cite{Rivas:Huelga:Plenio:2010}, is defined as existence of a CPTP map $V_{t,s}$ such that $\Lambda_t=V_{t,s}\Lambda_s$ $\forall t\geq s$. In this paper we fix a terminological choice, identifying markovianity with CP-divisibility.

In what follows we will construct an analogue of a recent result by Kołodyński et al. \cite{Kolod2020}, who found that an entanglement measure known as {\it negativity} is an indicator of nonmarkovianity. The authors of \cite{Kolod2020} provide examples of tripartite states
and show that the invertible map is nonmarkovian if{}f there exist a  specially designed tripartite state whose negativity increases in time. (The \textit{invertibility} of $\Lambda_t$ is understood everywhere here as left invertibility, i.e. $\exists!$ $\Lambda_t^{-1}$ such that $\Lambda_t^{-1}\circ\Lambda_t=\mathbf{1}$.) More precisely, in \cite{Kolod2020} there were considered block states 
of the form (using notation of the latter paper):
\begin{eqnarray}
    \tau_t^{ABC} &=& p_1 (\Lambda_t^A\otimes \mathbf{1}^{B_1})(\rho_1^{AB_1}) \otimes |\psi_+\>\<\psi_+|^{B_2C}+ \nonumber\\&+& p_2(\Lambda_t^A\otimes \mathbf{1}^{B_1})(\rho_2^{AB_1}) \otimes |\psi_-\>\<\psi_-|^{B_2C},
\end{eqnarray}
where $\rho^{ABC}:=\tau^{ABC}_{t=0}$ for $\Lambda_{t=0}=\mathbf{1}$. It is shown there that the negativity $E_N$ \cite{yczkowski1998,EisertPhd,VidalWerner_negativity},  
computed in the cut $C: B_1B_2A$, witnesses nonmarkovianity of 
dynamics. 
\begin{theorem}[Theorem 2 of \cite{Kolod2020}]
    For any invertible nonmarkovian evolution $\{\Lambda_t\mid t\geq0\}$
there exists a quantum state $\rho_{ABC}$ such that
\begin{equation}
    \frac{d}{dt}{E_N}^{AB|C}(\tau^{ABC}_t) >0
\end{equation}
for some $t > 0$. For single-qubit evolutions $\Lambda_t$ the statement
also holds for non-invertible dynamics.
\end{theorem}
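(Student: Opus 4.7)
The plan is to reduce $E_N^{AB|C}(\tau_t)$ to a quantity computable directly in terms of the one-sided action of $\Lambda_t$, and then translate the monotonicity question into complete positivity of intermediate propagators. First I would compute the partial transpose $\tau_t^{T_C}$ explicitly. Using $|\psi_\pm\>\<\psi_\pm|^{T_C}=\tfrac{1}{2}(|00\>\<00|+|11\>\<11|)\pm\tfrac{1}{2}(|01\>\<10|+|10\>\<01|)$ on $B_2C$ and collecting terms, $\tau_t^{T_C}$ decouples into two diagonal blocks each equal to $\tfrac{1}{2}\Sigma_t$, with $\Sigma_t:=p_1\Lambda_t(\rho_1)+p_2\Lambda_t(\rho_2)$, and one anti-diagonal block $\left(\begin{smallmatrix}0&\tfrac12\Delta_t\\\tfrac12\Delta_t&0\end{smallmatrix}\right)$ with $\Delta_t:=(\Lambda_t^A\otimes\mathbf{1}^{B_1})(X_0)$ and $X_0:=p_1\rho_1^{AB_1}-p_2\rho_2^{AB_1}$. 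Since for Hermitian $Y$ the eigenvalues of $\left(\begin{smallmatrix}0&Y\\Y&0\end{smallmatrix}\right)$ are $\pm\lambda_i(Y)$, a short calculation gives $||\tau_t^{T_C}||_1=1+||(\Lambda_t\otimes\mathbf{1})(X_0)||_1$, hence
\begin{equation}
E_N^{AB|C}(\tau_t)=\log_2\!\bigl(1+||(\Lambda_t\otimes\mathbf{1})(X_0)||_1\bigr).
\end{equation}
The theorem is therefore equivalent to the statement that $\{\Lambda_t\}$ is CP-divisible iff $t\mapsto||(\Lambda_t\otimes\mathbf{1})(X_0)||_1$ is non-increasing for every Hermitian $X_0$ on $AB_1$ and every ancillary dimension $d_{B_1}$.

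For the easy direction, assume CP-divisibility: $\Lambda_t=V_{t,s}\circ\Lambda_s$ with $V_{t,s}$ CPTP. Then $V_{t,s}\otimes\mathbf{1}$ is CPTP, and CPTP maps are non-expansive in trace norm on Hermitian operators (decompose into orthogonal positive parts and use trace preservation), so $||(\Lambda_t\otimes\mathbf{1})(X_0)||_1\le||(\Lambda_s\otimes\mathbf{1})(X_0)||_1$. The converse is the crux. Suppose $\{\Lambda_t\}$ is invertible but not CP-divisible, so at some $s<t$ the propagator $V_{t,s}:=\Lambda_t\circ\Lambda_s^{-1}$ fails to be CP. Taking $d_{B_1}=d_A$, Choi's theorem yields a Hermitian $Y$ on $AB_1$ with $||(V_{t,s}\otimes\mathbf{1})(Y)||_1>||Y||_1$. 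Pull $Y$ back via invertibility: set $X_0:=(\Lambda_s^{-1}\otimes\mathbf{1})(Y)$ (Hermitian, since both factors are Hermitian-preserving), decompose $X_0=X_0^+-X_0^-$ into orthogonal positive parts, and realise $X_0=p_1\rho_1-p_2\rho_2$ with $p_1+p_2=1$ by rescaling $X_0$ if necessary and padding both positive parts by a common positive operator of trace $(1-||X_0||_1)/2$. For this initial data, $||(\Lambda_s\otimes\mathbf{1})(X_0)||_1=||Y||_1$ while $||(\Lambda_t\otimes\mathbf{1})(X_0)||_1=||(V_{t,s}\otimes\mathbf{1})(Y)||_1>||Y||_1$, so $E_N^{AB|C}(\tau_t)>E_N^{AB|C}(\tau_s)$. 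Continuity of the trace norm in $t$ then forces $\tfrac{d}{dt}E_N^{AB|C}(\tau_t)>0$ on a subset of positive measure of $(s,t)$.

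The main obstacle I anticipate is the single-qubit extension to non-invertible dynamics, where $\Lambda_s^{-1}$ is unavailable and the pull-back step above breaks. For qubit dynamics one can exploit the canonical Bloch-form parametrisation of CPTP qubit maps: non-invertibility is generically confined to a discrete set of times, so I would either (i) perturb $\Lambda_t$ to an invertible family $\Lambda_t^{(\epsilon)}$ that is still non-CP-divisible on the relevant interval, run the argument above, and send $\epsilon\to 0$ using continuity of $\epsilon\mapsto||(\Lambda_t^{(\epsilon)}\otimes\mathbf{1})(X_0)||_1$, or (ii) use the explicit Pauli decomposition to exhibit by hand a Hermitian $X_0$ whose image has strictly increasing trace norm across a time of non-CP-divisibility. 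The delicate point is showing that the strict increase is an \emph{open} condition on the family of dynamics, so that it survives a small perturbation; for qubits this can be arranged because non-complete-positivity of $V_{t,s}$ on a fixed time pair is an open condition on its matrix representation.
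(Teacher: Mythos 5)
First, note that the paper does not prove this statement: it is imported verbatim as Theorem 2 of \cite{Kolod2020}, and the paper only remarks that its proof rests on the trace-norm characterization of CP-divisibility from \cite{Chruciski2011,Chruscinski:Rivas:Stoermer:2018}. For the invertible part your argument is essentially that original route, and it is sound. The partial-transpose computation giving $E_N^{AB|C}(\tau_t)=\log_2\bigl(1+\|(\Lambda_t\otimes\mathbf{1})(X_0)\|_1\bigr)$ with $X_0=p_1\rho_1-p_2\rho_2$ is correct, the contractivity direction is standard, and your derivation of the hard direction is a clean self-contained version of the Chru\'sci\'nski--Kossakowski--Rivas lemma: since $V_{t,s}=\Lambda_t\circ\Lambda_s^{-1}$ is trace- and Hermiticity-preserving, its Choi matrix has unit trace, so failure of CP forces $\|(V_{t,s}\otimes\mathbf{1})(|\Omega\>\<\Omega|)\|_1>1=\||\Omega\>\<\Omega|\|_1$, and the maximally entangled projector itself is the Hermitian witness $Y$. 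The pull-back and padding construction realising $X_0$ as $p_1\rho_1-p_2\rho_2$ with $p_1+p_2=1$ also works (rescale so that $\|X_0\|_1\leq 1$; the trace constraint $p_1-p_2=\tr X_0$ is then automatically consistent). The passage from $\|(\Lambda_t\otimes\mathbf{1})X_0\|_1>\|(\Lambda_s\otimes\mathbf{1})X_0\|_1$ to a positive derivative somewhere in $(s,t)$ is fine under the usual regularity assumption that $t\mapsto\Lambda_t$ is locally Lipschitz, which should be stated.

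The genuine gap is the second sentence of the theorem: the single-qubit, non-invertible case. You correctly identify that the pull-back step breaks there, but neither of your two proposed repairs is carried out, and route (i) has a real obstruction: even if you perturb to an invertible, non-CP-divisible family $\Lambda_t^{(\epsilon)}$ and obtain a strict increase of $\|(\Lambda_t^{(\epsilon)}\otimes\mathbf{1})(X_0^{(\epsilon)})\|_1$, the witness $X_0^{(\epsilon)}$ is built from $(\Lambda_s^{(\epsilon)})^{-1}$, which blows up as $\epsilon\to0$ near a singular time, so the strict inequality can degenerate to equality in the limit; openness of non-complete-positivity of $V_{t,s}$ does not by itself control this. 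Moreover, for a general qubit dynamics non-invertibility need not be confined to isolated times, and when $\Lambda_s$ is singular the propagator $V_{t,s}$ is only well defined under an image-nonincreasing condition, which is exactly the hypothesis under which \cite{Chruscinski:Rivas:Stoermer:2018} extends the trace-norm criterion. A complete proof of this part needs either that extension or the explicit qubit analysis of \cite{Kolod2020}; as written, your proposal establishes only the invertible half of the theorem.
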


We observe that these states, treated as {\it bipartite}, are block states, and in special cases also private states. This motivates us to study the connection between the topic of privacy and nonmarkovianity.

The proof of a result of \cite{Kolod2020} is based on a theorem in \cite{Chruciski2011}, which states that CP-divisibility for a family $\{\Lambda_t\mid t\geq0\}$ of \textit{invertible} CPTP maps is equivalent to a condition $\frac{d}{dt}||(\Lambda_t\otimes \mathbf{1})X||_1\leq0$ $\forall X\in\mathcal{B}(\mathcal{H})\otimes\mathcal{B}(\mathcal{H})$ with $X=X^\dagger$, and $\mathcal{B}(\mathcal{H})$ denotes space of all bounded operators on $\mathcal{H}$. In \cite{Chruscinski:Rivas:Stoermer:2018} this result has been extended to noninvertible families of CPTP maps satisfying $\mathrm{im}(\Lambda_t)\subseteq\mathrm{im}(\Lambda_s)$ $\forall t>s$ (i.e., \textit{image nonincreasing}), and we will use this extension below.
 
In what follows, we first show the behavior
of the privacy witness under an attack of a 
hacker. Hacker acts on the system $A'$,
and her attack is represented by operation
$\Lambda_{A'}$. As we will see, the 
privacy witness degrades monotonically with
the decrease of $||(\Lambda_{A'}\otimes {\mathbf{1}}_{B'}){1\over 2}(p_+\rho_+-p_-\rho_-)||_1$.

\begin{proposition}[Nonlinear privacy witness]
	Let $\rho_{ABA'B'} = p_+ |\psi_+\>\<\psi_+|_{AB}\otimes \rho_+^{A'B'} +p_- |\psi_-\>\<\psi_-|_{AB}\otimes \rho_-^{A'B'}$, $\Lambda_{A'}$ a CPTP map acting on system $A'$ of $\rho_{ABA'B'}$, $[\Lambda(\rho)]_{psq}$ be the privacy-squeezed state of $\Lambda(\rho)=\Lambda_{A'}\otimes {\mathbf{1}}_{ABB'}(\rho_{ABA'B'})$ 
	. Then:
	\begin{equation}
	 K_D([\Lambda(\rho)]_{psq})= 1 - h\left({1\over 2}+||(\Lambda_{A'}\otimes \mathbf{1}_{B'})X||_1\right),
	\end{equation}
	where $X={1\over 2}(p_+\rho^{A'B'}_+-p_-\rho^{A'B'}_-)$.
	\label{prop:Markov}
\end{proposition}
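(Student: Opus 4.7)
The plan is to track the block state through the hacking channel, compute its privacy-squeezed form explicitly, and then recognize that the resulting two-qubit state belongs to a class whose distillable key is known exactly.

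First I would observe that applying $\Lambda_{A'}\otimes\mathbf{1}_{ABB'}$ to the block state $\rho_{ABA'B'}$ produces another block state, since the map acts only on the shielding systems; the new shields are $(\Lambda_{A'}\otimes\mathbf{1}_{B'})\rho_\pm$, which are still density operators (CPTP preserves positivity and trace). Plugging directly into the definition in Eq.~\eqref{rho.psq.definition} and using $p_++p_-=1$, the privacy-squeezed state of $\Lambda(\rho)$ is a two-qubit state whose only nontrivial matrix element is the off-diagonal
\begin{equation}
\tfrac{1}{2}\|p_+(\Lambda_{A'}\otimes\mathbf{1}_{B'})\rho_+-p_-(\Lambda_{A'}\otimes\mathbf{1}_{B'})\rho_-\|_1
=\|(\Lambda_{A'}\otimes\mathbf{1}_{B'})X\|_1,
\end{equation}
by linearity of the map inside the trace norm and the definition of $X$. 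Call this quantity $t$.

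Next I would rewrite $[\Lambda(\rho)]_{psq}$ in the Bell basis of the $\{|00\rangle,|11\rangle\}$ subspace: the diagonal $\tfrac{1}{2}(|00\rangle\langle 00|+|11\rangle\langle 11|)$ plus coherence $t(|00\rangle\langle 11|+|11\rangle\langle 00|)$ diagonalises as
\begin{equation}
[\Lambda(\rho)]_{psq}=\left(\tfrac{1}{2}+t\right)|\psi_+\rangle\langle\psi_+|+\left(\tfrac{1}{2}-t\right)|\psi_-\rangle\langle\psi_-|.
\end{equation}
This is a maximally correlated state (MCS) in the computational basis, so I can invoke the fact used already in the proof of Theorem~\ref{thm:pure}: for any MCS, $K_D=E_R=S(B)-S(AB)$. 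A direct computation gives $S(B)=1$ (the reduced state is $\mathbf{1}/2$) and $S(AB)=h(\tfrac{1}{2}+t)$, so
\begin{equation}
K_D([\Lambda(\rho)]_{psq})=1-h\!\left(\tfrac{1}{2}+\|(\Lambda_{A'}\otimes\mathbf{1}_{B'})X\|_1\right),
\end{equation}
as claimed.

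The only genuine subtlety is verifying that $t\in[0,\tfrac{1}{2}]$ so that the binary entropy is well-defined and the state is a valid density matrix; this follows because contractivity of CPTP maps under the trace norm gives $\|(\Lambda_{A'}\otimes\mathbf{1}_{B'})X\|_1\le\|X\|_1\le\tfrac{1}{2}(p_++p_-)=\tfrac{1}{2}$. Apart from that bookkeeping, the argument is essentially a direct computation: the main conceptual input is simply recognising that the privacy squeezing collapses the whole effect of the shield (and of the hacking map acting on it) into the single scalar $\|(\Lambda_{A'}\otimes\mathbf{1}_{B'})X\|_1$, after which the MCS formula finishes the job.
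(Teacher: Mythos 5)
Your proof is correct, and it takes a somewhat different route from the paper's. The paper establishes the equality by sandwiching: an upper bound $K_D\le E_R=1-h(\tfrac12+c)$ using the known formula for the relative entropy of entanglement of Bell-diagonal states (Vedral et al.), and a matching lower bound from the Devetak--Winter rate of privacy-squeezed states via Corollary 1 of \cite{CFH}. You instead identify $[\Lambda(\rho)]_{psq}$ as a maximally correlated state in the computational basis and invoke the single exact formula $K_D=E_D=E_R=S(B)-S(AB)$ for MCS --- the same fact the paper already uses in the proof of Theorem~\ref{thm:pure} --- which packages both bounds at once. Your computation of the off-diagonal entry, $\tfrac12\|p_+(\Lambda_{A'}\otimes\mathbf{1}_{B'})\rho_+-p_-(\Lambda_{A'}\otimes\mathbf{1}_{B'})\rho_-\|_1=\|(\Lambda_{A'}\otimes\mathbf{1}_{B'})X\|_1$ by linearity, the Bell-basis diagonalization, and the entropy evaluation $S(B)=1$, $S(AB)=h(\tfrac12+t)$ are all correct, and the contractivity argument showing $t\le\tfrac12$ is a worthwhile check that the paper leaves implicit. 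The only caveat is that your argument leans on the full strength of the MCS result (in particular additivity/regularization of $E_R$ on MCS so that $E_R^\infty$ upper-bounds $K_D$); since the paper already relies on exactly this fact for Theorem~\ref{thm:pure}, this is not a gap, but it is a heavier imported result than the two elementary bounds the paper combines. Net effect: your version is more unified and arguably cleaner; the paper's version makes the achievability side operationally explicit via the Devetak--Winter protocol.
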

\begin{proof}
 For the first inequality, we upper bound the amount of key of $[\Lambda(\rho)]_{psq}$ via the relative entropy of entanglement. We note that the state under consideration is 
Bell-diagonal, of the form  $q_+|\psi_+\>\<\psi_+| + q_-|\psi_-\>\<\psi_-|$. Thus, its relative entropy of entanglement
reads $1 - h(p_{max})$, where $p_{max}$ is the maximal probability of a Bell state in the mixture \cite{Vedral1997}. In our case $\frac{1}{2}(q_+ - q_-) = ||(\Lambda_{A'} \otimes \mathbf{1}_{B'})X||_1=:c$, hence $q_+={1\over 2}+c$ and $q_-={1\over 2}-c$.
Since $c\geq 0$, $q_+\geq q_-$, and so
\begin{equation}
K_D([\Lambda(\rho)]_{psq})\leq E_R([\Lambda(\rho)]_{psq}) = 1-h\left({1\over 2}+c\right).
\end{equation}
To see the lower bound we note that
\begin{equation}
    K_{DW}([\Lambda(\rho)]_{psq}) \leq K_D([\Lambda(\rho)]_{psq}),
\end{equation}
where $K_{DW}$ is the rate of Devetak--Winter protocol \cite{DevetakWinter-hash}.
The lower bound follows then from Corollary $1$ of \cite{CFH}, which states that 
$K_{DW}(\rho_{psq}) \geq 1-H(\alpha+\gamma,\alpha-\gamma,\beta,\beta)$, where $\alpha = (p_++p_-)/2=1/2$ , $\beta= 0$ and $\gamma=||(\Lambda_{A'}\otimes \mathbf{1}_{B'})X||_1$. Hence the assertion follows.
\end{proof}
Hence, the key of privacy squeezed state of an $\rho$ attacked
by $\Lambda_{A'}$ is a privacy witness of $\Lambda_{A'}(\rho)$, and
is monotonically strictly decreasing with the decrease of $||(\Lambda_{A'}\otimes \mathbf{1}_{B'})X||_1 \in [0,{1\over 2}]$, for hermitian $X$ representing the state.

To uncover the connection between hacking and (non)markovianity, we observe that: 
\begin{itemize}
\item The rate of any protocol of key distillation from a quantum state $\rho$ quantifies the resource
(how much key can be gained from a given state).
Hence, as the time passes, it can only stay the same (e.g., as a result of local unitary transformation on $\rho$),
or decrease (e.g., as a result  of the action of the local partial trace of a subsystem of $\rho$). 
\item The (invertible or image nonincreasing) dynamics $\{\Lambda_t\mid t\geq0\}$ (acting on the system $A'$) is markovian if{}f the map $\Lambda_t\otimes \mathbf{1}_{B'}$ either preserves the the trace norm of $X$ or decreases it for all hermitian $X$ and all $t>0$ \cite{Chruciski2011,Chruscinski:Rivas:Stoermer:2018}. 
\end{itemize}

Using Proposition \ref{prop:Markov} and equality of dimensions of $A'$ and $B'$, we can formulate an analogue of Theorem 2 of \cite{Kolod2020}.
\begin{theorem}
	An invertible or image nonincreasing dynamics $\{\Lambda_t\mid t\geq0\}$ is nonmarkovian if{}f there exists a block state \eqref{rhoblock} and $t>0$ such that
	\begin{equation}
	    \frac{d}{dt}K_D\left([\Lambda_t(\rho)]_{psq}\right)>0.
	\end{equation}
\end{theorem}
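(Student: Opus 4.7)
The plan is to reduce the theorem to the Chruściński et al.\ characterization of CP-divisibility via trace-norm contractions, using Proposition \ref{prop:Markov} as a dictionary between the growth of $\|(\Lambda_t\otimes \mathbf{1}_{B'})X\|_1$ and the growth of $K_D([\Lambda_t(\rho)]_{psq})$. Concretely, recall from \cite{Chruciski2011,Chruscinski:Rivas:Stoermer:2018} that, for $\{\Lambda_t\mid t\geq 0\}$ invertible or image nonincreasing, CP-divisibility is equivalent to $\frac{d}{dt}\|(\Lambda_t\otimes \mathbf{1}_{B'})X\|_1 \leq 0$ for every hermitian $X$ on $A'\otimes B'$. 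So the task is to match this condition, applied across all hermitian $X$, with the condition stated in terms of block states.

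For the ``$\Leftarrow$'' direction I would assume markovianity and show nonincrease of $K_D([\Lambda_t(\rho)]_{psq})$ on every block state. For any block state the operator $X=\frac{1}{2}(p_+\rho_+^{A'B'}-p_-\rho_-^{A'B'})$ is hermitian, so Proposition \ref{prop:Markov} gives $K_D([\Lambda_t(\rho)]_{psq})=1-h\bigl(\tfrac{1}{2}+c(t)\bigr)$ with $c(t):=\|(\Lambda_t\otimes \mathbf{1}_{B'})X\|_1\geq 0$. Since $h$ is monotonically decreasing on $[\tfrac{1}{2},1]$, one has $h'(\tfrac{1}{2}+c(t))\leq 0$, and chain rule gives
\begin{equation}
\frac{d}{dt}K_D([\Lambda_t(\rho)]_{psq}) = -h'\!\left(\tfrac{1}{2}+c(t)\right)\,\frac{dc}{dt} \leq 0,
\end{equation}
because markovianity forces $\frac{dc}{dt}\leq 0$. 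This rules out a positive derivative for every block state and every $t>0$.

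For the harder ``$\Rightarrow$'' direction I would start from nonmarkovianity and use the Chruściński characterization to produce a hermitian $X$ on $A'\otimes B'$ and a time $t_0>0$ with $\frac{d}{dt}\|(\Lambda_t\otimes \mathbf{1}_{B'})X\|_1|_{t=t_0}>0$. I then need to realize this $X$ as the off-diagonal block $\frac{1}{2}(p_+\rho_+-p_-\rho_-)$ of some block state. Writing $X=X_+-X_-$ via the Jordan decomposition and rescaling $X$ by a positive constant (which does not affect the sign of the derivative of its trace norm) so that $\|X\|_1\leq\tfrac{1}{2}$, I can put $p_+\rho_+^{A'B'}:=2X_++\lambda\,\mathbf{1}/d_{A'B'}$ and $p_-\rho_-^{A'B'}:=2X_-+\lambda\,\mathbf{1}/d_{A'B'}$ with $\lambda:=\tfrac{1}{2}-\|X\|_1\geq 0$; this yields bona fide probabilities summing to one and bona fide states, and reproduces $X$ as required. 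With this block state in hand, Proposition \ref{prop:Markov} and the same chain-rule computation as above give the derivative of $K_D$ at $t_0$ the opposite sign to that of $\frac{dc}{dt}|_{t_0}$, provided $h'(\tfrac{1}{2}+c(t_0))<0$, i.e.\ $c(t_0)>0$.

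The main obstacle is precisely this boundary subtlety: if $c(t_0)=0$ then $h'(\tfrac{1}{2})=0$ and the chain rule only delivers $\frac{d}{dt}K_D|_{t_0}=0$, so the strict inequality in the statement could fail at $t_0$ itself. The plan to fix this is a continuity argument: $c$ is nonnegative and satisfies $\frac{dc}{dt}|_{t_0}>0$, so on a right-neighborhood of $t_0$ one has $c(t)>0$ and the derivative of $c$ remains strictly positive; choosing any such $t>t_0$ makes $-h'(\tfrac{1}{2}+c(t))>0$, hence $\frac{d}{dt}K_D([\Lambda_t(\rho)]_{psq})>0$, which is the witness required by the theorem.
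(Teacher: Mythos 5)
Your proof is correct and follows essentially the same route as the paper's: the Chru\'sci\'nski et al.\ trace-norm characterization of CP-divisibility combined with Proposition~\ref{prop:Markov} and the monotonicity of $h$ on $[\tfrac12,1]$. You are in fact more careful than the paper's two-line argument in two places it glosses over — the explicit realization of an arbitrary hermitian $X$ (after harmless rescaling) as the off-diagonal block of a bona fide block state via the Jordan decomposition, and the boundary case $c(t_0)=0$, where $h'(\tfrac12)=0$ makes the chain rule degenerate and one must pass to a nearby $t>t_0$ to witness the strict inequality.
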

\begin{proof}
From \cite{Chruciski2011,Chruscinski:Rivas:Stoermer:2018} we have an equivalence of CP-divisibility with $\frac{d}{dt}||(\Lambda_t\otimes \mathbf{1}_{B'})X||_1\leq0$ for all $X$ and all $t>0$. This, combined with equivalence of $\frac{d}{dt}||(\Lambda_t\otimes \mathbf{1}_{B'})X||_1>0$ with $\frac{d}{dt}h\left(\frac{1}{2}+||(\Lambda_t\otimes \mathbf{1}_{B'})X||_1\right)<0$, and with  Proposition~\ref{prop:Markov}, completes the proof. 
\end{proof}

The above theorem establishes a link with an operational quantity, the {\it witnessed distillable key} (WDK), rather than with a theoretical measure
of entanglement, such as the negativity $E_N$.
It can be interpreted as follows: nonmarkovian dynamics implies the flow of privacy from environment to the system.

Interestingly, WDK is not an entanglement measure. Indeed, to make WDK zero for a block state, it is enough that $||p_+\rho_+ -p_-\rho_-||_1 = 0$, which
is true for $X = p_+\rho_+ - p_-\rho_-$ being a zero matrix. This implies $p_+=p_-={1\over 2}$ and $\rho_+=\rho_-\equiv \rho$. In this case the block state takes form $\frac{1}{2}(|00\>\<00|+|11\>\<11|) \otimes \rho$. However, if $\rho$ is entangled, then WDK is zero, while the block state is clearly entangled as a product of separable and entangled state. It would be interesting to extend this result to other operational entanglement measures, possibly via the approach of  \cite{FerraraChristandl}.
 Finally we note, that WDK is the inherently non-linear witness of nonmarkovianity. In that, this approach is complementary to that of considered earlier in \cite{banacki2020information}, where {\it linear} witness of a slightly different notion of nonmarkovianity, has been proposed.

\section{Discussion}
We have provided bounds on the leakage of private randomness and private key. We have shown that the private randomness in distributed setting can not drop down by more than $S(a)+\log_2 |a|$ upon unitary transformation followed by the erasure of a system $a$. It would be interesting to consider a more general case, in which a POVM is performed by 
the hacker. In this case, the difficulty  rests in controlling the amount of private randomness that can be added to the system. Indeed, every POVM can be considered as von Neumann measurement on the embedded system. However, embedding implies attaching a pure state, i.e., the state with private randomness, which we would like to avoid in the resource-theoretic approach.

Regarding private key, we have proved its non-lockability for the first non-trivial class of mixed states - the class of irreducible private states. Let us note here that the assumption that the state is irreducible is not restrictive. Indeed, a non-irreducible private state can have an arbitrary state on the shield. Hence non-locking for the general private state is as hard as the still open problem of non-lockability of the key for any state. We have shown that the bound on leakage (that reads $2S(a)$) is tight. We then provided a refinement of this result, which reflects the fact, that less
correlated qubits affect the drop of key by less amount, dependent on the value of $I(a:B|A)$. We have
done it for generalized private states called {\it irreducible Schmid-twisted pure states}. It is an interesting open problem
if the same would hold for the class of twisted pure states.  Another open problem which arises concerns one-way distillable key by means of communication from $A$ to $B$. Our upper bounds for the leakage differ in the case
when the leakage affects the system $A$ and from the case when it affects system $B$. It is an open problem if they need to differ, that is whether one-way distillable key from $A$ to $B$ drops down by a different number for some state when the same leaking channel acts on system $A$ from the case when it acts on system $B$.

We have also considered the effect of the leakage via exemplary side channels. For the considered private state, we observed that
the key drops down by the same amount irrespectively of the size of
the shield. This means that it is not the case that the larger is shield, the more protected is the key of this private states.
Designing private states which are immune to the qubit loss on the shield (and having low distillable entanglement) would be a good step towards the hybrid quantum network provided in \cite{Sakarya2020}.

Still, however, a major theoretical problem rests in answering the question of how much the key drops down under the erasure of a system of an arbitrary quantum state.
As we argue, it remains open even in 
the case when the system is in tensor product
with the rest of the state under consideration. 

Finally, we proved a connection between
the (non)markovianity of quantum dynamics and hacking. We have found  an operational quantity which is a non-linear private key witness, $K_D([\rho]_{{psq}})$, the key of a privacy-squeezed state.
In this context, it would be interesting to find an operational entanglement measure, the behavior of which corresponds to (non)markovianity 
of dynamics. It is also interesting if other variants of the  definition of (non)markovianity
can be connected to a secret key extraction (see \cite{banacki2020information} in this context).

\section*{Acknowledgements}
KH acknowledges Marcus Grassl and Anindita Bera for enlightening discussions. 
KH, OS, and RPK acknowledge  support by the National Science Centre grant Sonata Bis 5 (grant number: 2015/18/E/ST2/00327). DY was supported by the NSFC (grant nos 11375165, 11875244), and by the NFR Project ES564777. We acknowledge partial support by the Foundation for Polish Science (IRAP project, ICTQT, contract no. MAB/2018/5, co-financed by EU within Smart
Growth Operational Programme). The ’International Centre for Theory of Quantum Technologies’ project (contract no. MAB/2018/5) is carried out within the International Research Agendas Programme of the Foundation for Polish Science co-financed by the European Union from the funds of
the Smart Growth Operational Programme, axis IV: Increasing the research potential (Measure 4.3). 

\section{Appendix}
Here we partially recover Lemma V.3 of \cite{Nowakowski2016}. The problem with
the original statement of this lemma is: when two states $\rho_{AB}$ and $\widetilde{\rho}_{AB}$ are close in trace norm, it does not imply that the state $\rho_{AE}$ and $\widetilde{\rho}_{AE}$ are so (here $\rho_{ABE}$ is an extension of
$\rho_{AB}$ to system $E$). However this 
holds true, yet with a worse factor,
given the extension $\rho_{ABE}$ is pure.

In what follows, we use the fidelity defined by $F(\rho,
\sigma):= ||\sqrt{\rho}\sqrt{\sigma}||_1^2$.
Before showing a restatement of the aforementioned lemma, we show that if two bipartite states are close, so are their purifications (this technique was used before in \cite{keyhuge}, which we recall here for the completeness of the presentation). 

Let $||\rho_{AB}-\widetilde{\rho}_{AB}||\leq \delta$. By the Fuchs--van de Graaf inequality \cite{Fuchs:vandeGraaf:1999} we have
\begin{equation}
    \sqrt{F(\rho_{AB},\widetilde{\rho}_{AB})} \geq 1 - \frac{\delta}{2}.
\end{equation}
On the other hand, by the Uhlmann theorem \cite{Uhlmann:1976},
$F(\rho_{AB},\widetilde{\rho}_{AB})= \max_{\phi_{\widetilde{\rho}_{AB}}} |\<\psi_{\rho_{AB}}|\phi_{\widetilde{\rho}_{AB}}\>
|^2$ and 
$|\<\psi_{\rho_{AB}}|\phi_{\widetilde{\rho}_{AB}}\>|^2= F(\psi_{\rho_{AB}},\phi_{\widetilde{\rho}_{AB}})$, 
where $\psi_{\rho_{AB}}$ 
and $\phi_{\widetilde{\rho}_{AB}}$ are purifications of $\rho_{AB}$ and $\widetilde{\rho}_{AB}$ respectively. Applying again the Fuchs--van de Graaf inequality, we obtain:
\begin{align}
&|| |\psi_{\rho_{AB}}\>\<\psi_{\rho_{AB}}| - |\phi_{\widetilde{\rho}_{AB}}\>\<\phi_{\widetilde{\rho}_{AB}}|||_1 \leq \nonumber\\
&\sqrt{1 - \left(1- \frac{\delta}{2}\right)^2}\leq \sqrt{2\delta}.
\label{eq:closeby}
\end{align}
Lemma \ref{lem:biparite} (below) recovers the content of Lemma V.3 of \cite{Nowakowski2016} for the case of system $E$ purifying systems $AB$.
(By notation $K^{\rightarrow}(\rho_{AB})$ we  mean
$K^{\rightarrow}(|\psi_{\rho_{AB}}\>)$, where $\tr_{E}|\psi_{\rho_{ABE}}\>\<\psi_{\rho_{ABE}}|=\rho_{AB}$.) 
\begin{lemma}
\label{lem:biparite}
For bipartite states $\rho_{AB}$ and $\widetilde{\rho}_{AB}$ satisfying $||\rho_{AB}-\widetilde{\rho}_{AB}||_1\leq \delta$ with $\delta \leq \frac 12$, there is
\begin{align}
   & |K^{\rightarrow}(\rho_{AB}) - K^{\rightarrow}(\widetilde{\rho}_{AB})|\leq  (4\delta + 4\sqrt{2\delta})\log_2 d_A +\nonumber\\ &2h(\delta)+2 h(\sqrt{2\delta}). 
\end{align}    
\end{lemma}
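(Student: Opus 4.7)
The plan is to mimic the original Nowakowski--Horodecki argument, except that one cannot invoke closeness of $\rho_{AE}$ and $\widetilde{\rho}_{AE}$ directly. Instead I would pass through purifications, using the already-derived estimate \eqref{eq:closeby}, which incurs the $\sqrt{2\delta}$ factor. Concretely, I would invoke Theorem~\ref{DW_Nowakowski} to select, for each $n$, a POVM $\{Q_x\}$ on $A^n$ and a classical channel $R$ realizing (up to arbitrary $\epsilon$) the optimal one-way rate for $\rho_{AB}^{\otimes n}$, and denote by $\omega^{(n)}_{TXB^nE^n}$ the classical--quantum state produced by applying this protocol on Alice's side of the purification $|\psi_{\rho_{AB}}\>^{\otimes n}$. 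Then I would apply the \emph{same} POVM and channel to the purification $|\phi_{\widetilde{\rho}_{AB}}\>^{\otimes n}$ obtained from Uhlmann's theorem, yielding $\widetilde{\omega}^{(n)}_{TXB^nE^n}$.

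The central observation is the asymmetry in the closeness of the two resulting cq-states. The marginal $\omega^{(n)}_{TXB^n}$ is determined solely by $\rho_{AB}^{\otimes n}$ (the $E$-system is traced out before Alice's measurement is seen by Bob), hence
\[
\|\omega^{(n)}_{TXB^n}-\widetilde{\omega}^{(n)}_{TXB^n}\|_1 \leq n\delta,
\]
while the full state depends on the purifications, so by \eqref{eq:closeby} applied to the $n$-fold tensor product,
\[
\|\omega^{(n)}_{TXB^nE^n}-\widetilde{\omega}^{(n)}_{TXB^nE^n}\|_1 \leq n\sqrt{2\delta}.
\]
Next I would feed each of these trace-distance bounds into the Alicki--Fannes--Winter continuity estimate for conditional mutual information, obtaining
\[
\bigl|I(X\!:\!B^n|T)_\omega - I(X\!:\!B^n|T)_{\widetilde{\omega}}\bigr| \leq 4n\delta \log_2 d_A + 2 n h(\delta),
\]
\[
\bigl|I(X\!:\!E^n|T)_\omega - I(X\!:\!E^n|T)_{\widetilde{\omega}}\bigr| \leq 4n\sqrt{2\delta}\log_2 d_A + 2n h(\sqrt{2\delta}),
\]
where the dimension factor is controlled by $\log_2 d_X \leq \log_2 d_A$ via the Devetak--Winter restriction on the size of $X$ in Theorem~\ref{DW_Nowakowski} and by expanding $I(X\!:\!\cdot|T)=S(X|T)-S(X|\cdot\,T)$ through the classical register $X$. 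Summing the two bounds, dividing by $n$, taking the limit in $n$, and then interchanging the roles of $\rho_{AB}$ and $\widetilde{\rho}_{AB}$ to symmetrize, yields the claimed inequality.

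The main obstacle is step three: the AFW-type bound must survive regularization, which is why it is essential that the dimension factor be the single-copy $\log_2 d_A$ rather than $n\log_2 d_A$. This is what forces the use of the Devetak--Winter cardinality restriction $|X|,|T|\leq d_A^2$ and the expansion of conditional mutual information through the classical variable $X$ rather than through the quantum systems $B$ or $E$; in particular one cannot cleanly route the continuity through $\log_2 d_B$ or $\log_2 d_E$, since the latter is unbounded once one considers arbitrary purifying environments. A secondary subtlety is to verify that the POVM-plus-channel that is near-optimal for $\rho_{AB}^{\otimes n}$ is admissible (though not necessarily optimal) for $\widetilde{\rho}_{AB}^{\otimes n}$, which is immediate since the maximization in $K^{\rightarrow}$ is over all such choices, so any fixed strategy provides a lower bound on the other rate.
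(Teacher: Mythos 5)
Your single-copy core --- passing to purifications via Uhlmann and Fuchs--van de Graaf to obtain the $\sqrt{2\delta}$ closeness on Eve's side, then applying Alicki--Fannes separately to the Bob term (with $\delta$) and the Eve term (with $\sqrt{2\delta}$) --- is exactly the paper's argument, and the lemma itself concerns the single-copy quantity $K^{(1)}_{\rightarrow}$: it is invoked in Observation~\ref{obs_Now} for $K^{(1)}_{\rightarrow}$, and the paper's proof works with the single-letter expression $-S(A|BT)+S(A|ET)$. The genuine gap is in your regularization step, which the paper deliberately avoids. Two things break. First, $\|\rho^{\otimes n}-\widetilde{\rho}^{\otimes n}\|_1$ is only bounded by $\min\{n\delta,2\}$; it does not stay $O(\delta)$, so the trace-distance input to the continuity estimate is not small per copy. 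Second, the Devetak--Winter cardinality restriction for a measurement on $A^n$ is $|X|\le d_{A^n}^2=d_A^{2n}$, so the dimension factor in the continuity bound is $2n\log_2 d_A$, not $\log_2 d_A$ (note also that even for one copy $|X|\le d_A^2$ gives $2\log_2 d_A$, not $\log_2 d_A$ as you wrote). Consequently the correct Alicki--Fannes bound on, say, $\bigl|I(X\!:\!B^n|T)_\omega - I(X\!:\!B^n|T)_{\widetilde{\omega}}\bigr|$ is of order $n^2\delta\log_2 d_A + h(\min\{n\delta,1\})$, not $4n\delta\log_2 d_A+2nh(\delta)$: the inequality does not tensorize the way your displayed bounds assume. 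After dividing by $n$ the error term diverges, or at best saturates at a $\delta$-independent constant once $n\delta\gtrsim 1$, so nothing useful survives the limit. This is precisely the obstruction the paper states explicitly before Theorem~\ref{thm:delta_bound}: closeness in trace norm is not inherited by many copies. The paper therefore proves the lemma only at the single-copy level, and treats many copies separately through the i.i.d.-restricted key $K^{iid}$, where additivity of conditional mutual information on product states reduces everything back to one copy. If you drop the $n$-copy machinery and run your argument on a single copy, you recover the paper's proof (up to the dimension-factor bookkeeping noted above).
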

\begin{proof}
Following \cite{Nowakowski2016}, we consider difference of conditional entropies: $K^{\rightarrow}(\rho)=-S(A|BT) + S(A|ET)$, where $T$ is generated via measurement on system $A$. Hence, \begin{equation}
    ||\rho_{ABT}-\widetilde{\rho}_{ABT}||_1\leq \delta,
\end{equation} since the trace norm does not increase under CPTP maps. Further, from \eqref{eq:closeby}, there is 
$||\rho_{AE} -\widetilde{\rho}_{AE}||_1 \leq \sqrt{2\delta}$ and, by the same argument, 
\begin{equation}||\rho_{AET} -\widetilde{\rho}_{AET}||_1 \leq \sqrt{2\delta}.
\end{equation} 
We have then
\begin{align}
   & |K^{\rightarrow}(\rho_{AB}) - K^{\rightarrow}(\widetilde{\rho}_{AB})|\leq |S(\widetilde{A}|\widetilde{BT}) - S(A|BT)|+ \nonumber\\ &|S(A|ET)- S(\widetilde{A}|\widetilde{ET})|.
   \label{eq:keys_close}
\end{align}
We further bound the two terms in r.h.s.
using Theorem by Alicki and Fannes \cite{Alicki-Fannes}, which states that if two states $\rho_{AB}$ and $\sigma_{AB}$ satisfy $\epsilon = ||\rho_{AB}-\sigma_{AB}||_1$, then
\begin{equation}
    |S(A|B) - S(\widetilde{A}|\widetilde{B})| \leq 4\epsilon \log_2 d_A + 2 h(\epsilon),
\end{equation}
where $d_A$ is dimension of system $A$ and
$h(\cdot)$ is the binary Shannon entropy.
Applying the above inequality to (\ref{eq:keys_close}), we obtain
\begin{align}
   & |K^{\rightarrow}(\rho_{AB}) - K^{\rightarrow}(\widetilde{\rho}_{AB})|\leq + 4\delta \log_2 d_A + 2 h(\delta) \nonumber\\ & 4\sqrt{2\delta} \log_2 d_A + 2 h(\sqrt{2\delta}),
\end{align}
only if $\sqrt{2\delta} \leq {1\over 2}$, and hence $\delta \leq {1\over 2}$. Here we use the fact that $h(x)$ is strictly increasing for $x \in [0,\frac 12]$, so that $h(||\rho_{AB}-\widetilde{\rho}_{AB}||_1) \leq h(\sqrt{2\delta})$.
\end{proof}
It is important to note that quantum purification is the worst extension from
the cryptographic point of view because it allows an eavesdropper to
create any other extension by local operation. Hence, the above result is important from a cryptographic point of view.
\bibliography{references}
 \end{document}